\newtheorem{thm}{Theorem}[section]
\newtheorem{lem}[thm]{Lemma}
\theoremstyle{definition}
\newtheorem{conj}[thm]{Conjecture} 
\newtheorem{ex}[thm]{Example}
\theoremstyle{remark}
\newcommand{\FF}{\mathbb F}
\newcommand{\ZZ}{\mathbb Z}
\newcommand{\CC}{\mathbb C}
\newcommand{\RR}{\mathbb R}
\newcommand{\QQ}{\mathbb Q}
\newcommand{\Li}{\mathrm{Li}}
\newcommand{\F}{\mathbb F}
\newcommand{\dd}{\mathrm{d}}
\newcommand{\sE}{\mathcal{E}}
\newcommand{\sG}{\mathcal{G}}
\newcommand{\sP}{\mathcal{P}}
\newcommand{\sV}{\mathcal{V}}
\newcommand{\zz}{\overline{z}}
\newcommand{\ii}{\mathrm{i}}
\newcommand{\lfrac}[2]{\frac{\numprint{#1}}{\numprint{#2}}}
\newcommand{\dR}{\mathfrak{dr}}
\title{Numbers and Functions in Quantum Field Theory}
\author{Oliver Schnetz}
\begin{document}
\begin{abstract}
We review recent results in the theory of numbers and single-valued functions on the complex plane which arise in quantum field theory.
We use the results to calculate the renormalization functions $\beta$, $\gamma$, $\gamma_m$ of dimensionally regularized $\phi^4$ theory in the minimal subtraction
scheme up to seven loops.
\end{abstract}
\maketitle
\section{Introduction}
Quantum field theories (QFTs) are fundamental theories of physical interactions. Physical QFTs are the Electroweak theory which combines electromagnetism with the weak interaction,
Quantumchromodynamics which describes the interaction between quarks and gluons, and $\phi^4$ theory for the Higgs boson. Gravity has not yet found a quantum formulation.

Although QFTs are experimentally very well confirmed (see e.g.\ the anomalous magnetic moment of the electron for an impressive example, Sect.\ \ref{g2} \cite{A2, H1, L1, Laporta}),
a complete mathematical understanding of QFTs is lacking. On the one hand there are fundamental questions like the existence and structure of QFTs. On the other hand
there is demand for practical tools to perform QFT calculations. Due to the mathematical difficulty of QFTs progress is modest. Here, we report on some recent
results in both directions.

For a while it seemed possible that the number content of QFT is given by multiple zeta values (MZVs) which are multiple sums that generalize the Riemann zeta function at positive
integer arguments. Assuming standard conjectures it has now been proved that this is not the case \cite{K3,BD}.

With the theory of graphical functions, a tool was developed to perform multiloop calculations in massless scalar field theories \cite{gf, par, SYM}. A first notable
breakthrough was the proof of the zig-zag conjecture \cite{BK,ZZ} which gives an explicit formula for periods of zig-zag graphs (see Thm.\ \ref{zigzag}).

For more general applications it was necessary to introduce a novel family of single-valued functions on the complex plane: generalized single-valued hyperlogarithms (GSVHs,
see Sect.\ \ref{GSVH}). Generalized single-valued hyperlogarithms vastly generalize single-valued multiple polylogarithms. Nevertheless it was possible to translate the vital properties
of single-valued multiple polylogarithms into the framework of GSVHs \cite{GSVH}.
A Maple\textsuperscript\texttrademark\ package was developed that can calculate many periods in $\phi^4$ theory up to 11 loops \cite{Hyperlogproc}.
With a large amount of data available, the structure of $\phi^4$ periods could be connected to the Galois theory of algebraic integrals \cite{Bcoact1, Bcoact2, coaction}.

To make further contact to physics it is necessary to regularize integrals which diverge in four dimensions. This is often done by generalizing to $4-\epsilon$ `dimensions' (which can
be defined in a parametric representation of QFT integrals \cite{IZ}). Using GSVHs it was possible to obtain $\epsilon$-expansions for QFT periods and graphical functions.
The procedure {\tt Phi4} in {\tt HyperlogProcedures} calculates the $\beta$-function and the anomalous dimensions $\gamma$ and $\gamma_m$ up to seven loops
in the minimally subtracted $O(n)$ symmetric $\phi^4$ theory \cite{Hyperlogproc}. The self energy can be calculated to six loops (see Sect.\ \ref{phi4} for results in the case $n=1$).

\section*{Acknowledgements}
I am very grateful to D. Kreimer and F. Knop for their support. I am also deeply indebted to my co-authors F. Brown, E. Panzer, and K. Yeats.
The calculation of the seven loop renormalization functions used some input from E. Panzer's Maple package {\tt HyperInt}. He also supported the author with very valuable discussions.
Many results in this report were found when the author was visiting scientist at the Humboldt University, Berlin.
The author is supported by DFG grant SCHN~1240/2. The computer calculations were performed on the server {\tt mem} of the Dept.\ Mathematik, Friedrich-Alexander Universit\"at of Erlangen-N\"urnberg.

\section{General idea}
In QFT, graphs are used to symbolize integrals. One first has to fix a space-time dimension $d$ which, for brevity, we connect to the parameter $\lambda$ according to
$$
d=2+2\lambda>2.
$$
The graphs allowed depend on the QFT chosen. Here we are mainly interested in $\phi^4$ theory which limits the vertex degree to four.
Feynman rules translate graphs to integrals. They exist in momentum and in position space with integration over $d$-dimensional variables associated to
independent cycles or vertices, respectively. For massive theories one needs to use momentum space to obtain explicitly algebraic integrands.
Here we are mostly interested in massless calculations which allows us to use position space.

The general setup is as follows. Assume $G$ is a graph with edges $\sE(G)$ and vertices $\sV(G)$.
We do not assume here that every vertex in $G$ has maximum degree four. Every edge $e\in\sE(G)$ has a weight $\nu_e\in\RR$.
We assume that $G$ has no self loops (tadpoles). This is common in dimensionally regularized massless theories. The edge weight is additive, i.e.\
a multiple edge in $G$ is equivalent to a single edge with weight equal to the weight sum of the multiple edge. Therefore, we only need to consider single edges.
We split the set of vertices into `internal' vertices $\sV^{\mathrm{int}}(G)$ and `external' vertices $\sV^{\mathrm{ext}}(G)$.
To every vertex we associate a $d$ dimensional variable and do not distinguish between the vertex and the variable.
We use $x_i$, $i=1,\ldots,V^{\mathrm{int}}(G)=|\sV^{\mathrm{int}}(G)|$ for internal vertices and $z_i$, $i=1,\ldots,V^{\mathrm{ext}}(G)=|\sV^{\mathrm{ext}}(G)|$ for
external vertices. To every edge $e=\{u,v\}\in\sE(G)$ between the two vertices $u,v\in\sV(G)$ (internal or external) we associate a quadric $Q_e$ which is given by the
Euclidean distance between (the variables associated to) $u$ and $v$,
$$
Q_e(u,v)=||u-v||^2=(u_1-v_1)^2+\ldots+(u_d-v_d)^2.
$$
Assume the graph $G$ has the property that the following integral exists
\begin{equation}\label{fdef}
f_G^{(\lambda)}(z_1,\ldots,z_{V^{\mathrm{ext}}})=\left(\prod_{v=1}^{V^{\mathrm{int}}(G)} \int_{\RR^d}\frac{\dd^dx_v}{\pi^{d/2}}\right)\frac{1}{\prod_{e\in\sE(G)}Q_e^{\lambda\nu_e}}.
\end{equation}
Due to translational and scale invariance the integral can only exist if $G$ has at least two external vertices.
In the case of exactly two external vertices the integral is determined up to a constant by these symmetries,
\begin{equation}\label{2external}
f_G^{(\lambda)}(z_1,z_2)=P(G)||z_1-z_2||^{dV^{\mathrm{int}(G)}-2\lambda\sum_{e\in\sE(G)}\nu_e}.
\end{equation}
The number $P(G)\in\RR_+$ is the Feynman period of $G$.

In primitive logarithmically divergent physical graphs (with external legs) the residue (in the regulator) is given by the period of the graph with amputated external legs.
Therefore the calculation of periods is of great importance for renormalizing QFTs (see e.g.\ \cite{BK}).

Without loss of information we set $z_1=$\,`0'\,$=(0,\ldots,0)$ and $z_2=$\,`1'\,$=(1,0,\ldots,0)$ (we may associate to the vertex 1 any unit vector in $\RR^d$) and obtain
$$
f_G^{(\lambda)}(0,1)=P(G).
$$
In the case of three external vertices we can again exploit the symmetry of the integral to reduce the number of variables. In this case we may use a complex variable $z$ (and its
complex conjugate $\zz$) to describe the functional behavior of $f_G^{(\lambda)}$. We obtain the `graphical function' (which we also give the symbol $f_G^{(\lambda)}$) \cite{gf}
$$
f_G^{(\lambda)}(z)=f_G^{(\lambda)}\left(0,1,\left(\frac{z+\zz}{2},\frac{z-\zz}{2\ii},0,\ldots,0\right)\right).
$$
In full generality $f_G^{(\lambda)}(z)$ is a positive single-valued real analytic function on $\CC\backslash\{0,1\}$ \cite{par} with the residual symmetry
$$
f_G^{(\lambda)}(z)=f_G^{(\lambda)}(\zz).
$$
The benefit of complex numbers is that quadrics between external vertices factorize
$$
Q_{\{0,1\}}=1,\quad Q_{\{0,z\}}=z\zz,\quad Q_{\{1,z\}}=(z-1)(\zz-1).
$$
General graphs with four or more external vertices lead to functions which effectively depend on a variable in $\RR^3$. Such functions do not have this factorizing property.
An exception are `conformal' graphs with four external vertices where every internal vertex has degree $2d/(d-2)$. In this case one may use an inversion
$x_i\mapsto x_i/||x_i||^2$ to reduce the integral to the case of three external vertices e.g.\ \cite{SYM}, \cite{S2}. Here, we go the opposite direction
and `complete' graphical functions to conformal graphs with four external vertices \cite{gf}. We will see in Section \ref{gfcompletion} that this is useful to exploit the full symmetry
of graphical functions.

In the following we restrict ourselves to the above two cases, periods and graphical functions. In $\phi^4$ theory four point functions are formally conformal.
However, only the tree level contribution is convergent. A practical tool to resolve divergences is to transform all integrals in a parametric form using the
Schwinger trick \cite{IZ}. In parametric form one integrates over one-dimensional variables associated to the edges of the graph.
The dimension $d$ enters the integrand as an exponent. It is hence possible to consider $d$ as a parameter and use analytic continuation to $d=4-\epsilon$ (losing conformal invariance).
All parametric integrals have Laurant expansions at $\epsilon=0$. Graphical functions are amendable to such a procedure. A general parametric representation of graphical
functions is given in \cite{par} (which generalizes a formula in \cite{Na}).

\section{Numbers}\label{numbers}
A graph $G$ is called $\phi^4$ if $G$ has maximum vertex degree four. The period of a $\phi^4$ graph in $d=4$ dimensions is a $\phi^4$ period.
The loop order of $G$ is the number of independent cycles in $G$.

\subsection{Completion}
We `complete' a graph $G$ with two external vertices 0 and 1 by adding a new external vertex which we give the label `$\infty$' (as reference to conformal symmetry) \cite{Census}.
We add edges from $\infty$ to all internal vertices in $G$ such that every internal vertex has degree $2d/(d-2)$. Finally, we add a weighted triangle with vertices $0,1,\infty$
such that the completed graph becomes $2d/(d-2)$ regular. We denote the completion of $G$ by $\overline{G}$. It is easy to see that completion is always possible and unique.
If we employ the Feynman rule that every edge $e$ adjacent to $\infty$ has quadric $Q_e=1$ we find that the period (\ref{fdef}) does not change under completion.

The power of completion is that the period of a completed graph does not depend on the choice of the external vertices $0,1,\infty$.
This was proved in four dimensions in Theorem and Definition 2.7 in \cite{Census}. The $d$ dimensional case is strictly analogous.
Hence, in $\overline{G}$ we do not need the distinction between internal and external vertices. We henceforth consider $\overline{G}$ as an unlabeled graph.

Obviously, different graphs $G_1$ and $G_2$ can have the same completion. In this case completion implies equality of their periods: $P(G_1)=P(G_2)$ if $\overline{G_1}=\overline{G_2}$.
This identity on periods was already used in \cite{BK}. Completion is effectively a tool to organize equivalence classes of graphs with identical period.
We define $P(\overline{G}):=P(G)$ if $\overline{G}$ is the completion of $G$. The loop order of a completed graph $\overline{G}$ is defined as the number of independent cycles
in the uncompleted graph $G$. So, by definition, completion does not change the loop order.

\subsection{Existence}
In four dimensions the period of a completed graph $\overline{G}$ with edge-weights 1 exists if and only if $\overline{G}$ is internally 6-connected. This means that the only
way to cut $\overline{G}$ with less than 6 edge cuts is to separate off a vertex (Prop.\ 2.6 in \cite{Census}).

In general, a Feynman period $P(G)$ can be considered as a graphical function (see Sect.\ \ref{gfcompletion}) with an isolated external vertex $z$. Existence of $P(G)$ in general is hence a
special case of the criterion for the existence of graphical functions in Thm.\ \ref{gfexistence}.

A completed graph with existing period in four dimensions is called completed primitive \cite{Census}.

\subsection{Product identity}\label{product}
\begin{figure}[t]
\epsfig{file=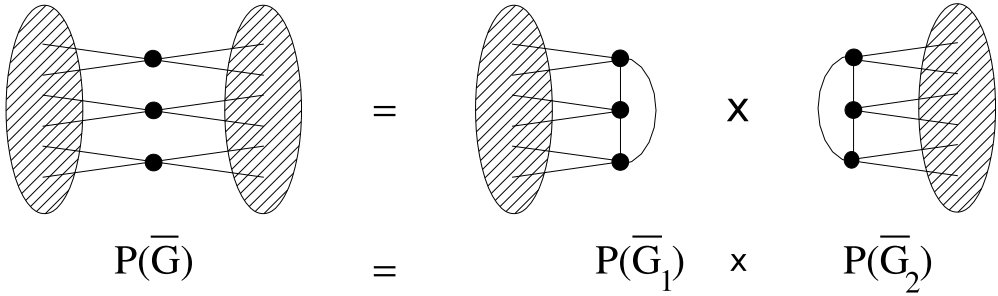,width=\textwidth}
\caption{Vertex connectivity 3 leads to a product of periods.}
\label{fig:prod}%
\end{figure}

The period of a completed graph $\overline{G}$ can only exist if $\overline{G}$ has vertex connectivity $\geq3$ (the vertex connectivity is the minimum number of vertices which,
when removed, split the graph). The period of a completed graph $\overline{G}$ with vertex-connectivity three factorizes in the way depicted in Figure \ref{fig:prod}.
Reversely, completed primitive graphs $\overline{G_1}$,
$\overline{G_2}$ with triangles can be glued along triangles to provide a completed graph with period $P(\overline{G_1})P(\overline{G_2})$. The case $d=4$ with unit edge-weights
was treated in Thm.\ 2.10 in \cite{Census}. The general case (where the weights of the triangles follow from $2d/(d-2)$ regularity) is analogous.

A completed graph with vertex connectivity three is called reducible, otherwise it is irreducible \cite{Census}.
A list of all irreducible completed primitive graphs up to eight loops (in four dimensions with unit edge weights) is given in Table 3 at the end of this report.
{\tt HyperlogProcedures} extends this list to eleven loops \cite{Hyperlogproc}.

Because not all completed primitive graphs have triangles (see e.g.\ $P_{6,4}$ in Table 3) it is unclear if, in general, the product of Feynman periods is a Feynman period.
In particular, one may ask if the $\ZZ$-span of $\phi^4$ periods is a ring (or---weaker---if $\phi^4$ periods span a $\QQ$ algebra).

\subsection{Twist and Fourier identity}
There exist two more known identities on $\phi^4$ periods. The (for graphs with many vertices) frequent twist identity and the rare Fourier identity \cite{Census}.
The Fourier identity was already used in \cite{BK}. The first example of a twist identity which is not also explained by a Fourier identity appears at eight loops.
Twist and Fourier identities are listed in Table 3.

Recent results on the Hepp invariant seem to indicate that more identities between periods exist (see Sect.\ \ref{Hepp}).

\begin{samepage}
\begin{center}
\begin{tabular}{rr|ll}
$\ell$&$\!$wt&number&value\\\hline
1&0&$Q_0=1$&1\\\hline
3&3&$Q_3=\zeta(3)$&1.202~056~903~159\\\hline
4&5&$Q_5=\zeta(5)$&1.036~927~755~143\\\hline
5&7&$Q_7=\zeta(7)$&1.008~349~277~381\\\hline
6&8&$Q_8=N_{3,5}$&0.070~183~206~556\\
&9&$Q_9=\zeta(9)$&1.002~008~392~826\\\hline
7&10&$Q_{10}=N_{3,7}$&0.090~897~338~299\\
&11&$Q_{11,1}=\zeta(11)$&1.000~494~188~604\\
&&$Q_{11,2}=-\zeta(3,5,3)\!+\!\zeta(3)\zeta(5,3)$&0.042~696~696~025\\
&&$Q_{11,3}=P_{7,11}$, Eq.\ (\ref{711})&200.357~566~429\\\hline
8&12&$Q_{12,1}=N_{3,9}$&0.096~506~102~637\\
&&$Q_{12,2}=N_{5,7}$&0.020~460~547~937\\
&&$Q_{12,3}=\pi^{12}/10!$&0.254~703~808~841\\
&13&$Q_{13,1}=\zeta(13)$&1.000~122~713~347\\
&&$Q_{13,2}=-\zeta(5,3,5)\!+\!11\zeta(5)\zeta(5,3)\!+\!5\zeta(5)\zeta(8)$&5.635~097~688~692\\
&&$Q_{13,3}=-\zeta(3,7,3)\!+\!\zeta(3)\zeta(7,3)\!+\!12\zeta(5)\zeta(5,3)\!+\!6\zeta(5)\zeta(8)\hspace*{-5pt}$&6.725~631~947~085\\
&&$Q_{13,4}=P_{8,33}$ \cite{Hyperlogproc}&468.038~498~992
\end{tabular}
\end{center}
\vskip1ex
Table 1: List of $\phi^4$-transcendentals up to loop order eight. The list is incomplete at loop order eight.
\end{samepage}

\begin{figure}%
\centering%
\includegraphics[width=0.22\textwidth]{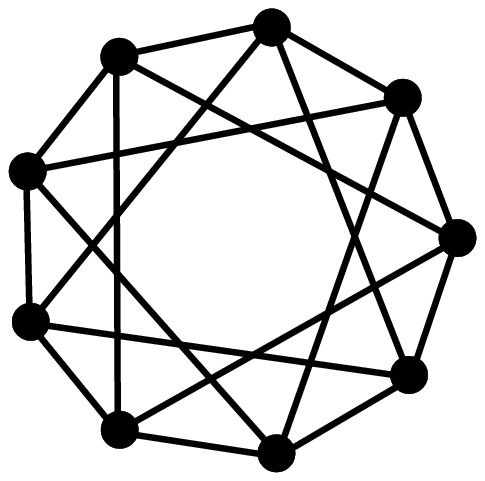}%
\caption{The completed graph $P_{7,11}$.}%
\label{fig:p711}%
\end{figure}\pagebreak

\begin{samepage}
\begin{center}
\begin{tabular}{rr|l}
$\ell$&$\!$wt&base\\[1ex]\hline
6&8&$N_{3,5}=\frac{27}{80}\zeta(5,3)+\frac{45}{64}\zeta(5)\zeta(3)-\frac{261}{320}\zeta(8)$\\[1ex]
7&10&$N_{3,7}=\frac{423}{3584}\zeta(7,3)+\frac{189}{256}\zeta(7)\zeta(3)+\frac{639}{3584}\zeta(5)^2-\frac{7137}{7168}\zeta(10)$\\[1ex]
8&12&$N_{3,9}=\frac{27}{512}\zeta(4,4,2,2)+\frac{55}{1024}\zeta(9,3)+\frac{231}{256}\zeta(9)\zeta(3)+\frac{447}{256}\zeta(7)\zeta(5)-\frac{9}{512}\zeta(3)^4$\\[1ex]
&&\hspace{13mm}$-\frac{27}{448}\zeta(7,3)\zeta(2)-\frac{189}{128}\zeta(7)\zeta(3)\zeta(2)-\frac{1269}{1792}\zeta(5)^2\zeta(2)+\frac{189}{512}\zeta(5,3)\zeta(4)$\\[1ex]
&&\hspace{13mm}$+\frac{945}{512}\zeta(5)\zeta(3)\zeta(4)+\frac{9}{64}\zeta(3)^2\zeta(6)-\frac{7322453}{5660672}\zeta(12)$\\[1ex]
&&$N_{5,7}=-\frac{81}{512}\zeta(4,4,2,2)+\frac{19}{1024}\zeta(9,3)-\frac{477}{1024}\zeta(9)\zeta(3)-\frac{4449}{1024}\zeta(7)\zeta(5)+\frac{27}{512}\zeta(3)^4$\\[1ex]
&&\hspace{13mm}$+\frac{81}{448}\zeta(7,3)\zeta(2)+\frac{567}{128}\zeta(7)\zeta(3)\zeta(2)+\frac{3807}{1792}\zeta(5)^2\zeta(2)-\frac{567}{512}\zeta(5,3)\zeta(4)$\\[1ex]
&&\hspace{13mm}$-\frac{2835}{512}\zeta(5)\zeta(3)\zeta(4)-\frac{27}{64}\zeta(3)^2\zeta(6)+\frac{3155095}{5660672}\zeta(12)$
\end{tabular}
\end{center}
\vskip1ex
Table 2: Conversion of the $N_{a,b}$s in Table 1 into MZVs.
\end{samepage}

\subsection{The number content of $\phi^4$ periods}
Up to five loops there exists at most one $\phi^4$ period per loop order. These periods are the first instances of the infinite family of zig-zag periods (see Figure \ref{fig:zigzag}).
The periods of the zig-zag family are rational multiples of the Riemann zeta function at odd arguments (see Thm.\ \ref{zigzag}).

At six loops the $\phi^4$ periods $P_{6,3}$ and $P_{6,4}$ have a zeta double sum of weight eight. In general, Feynman periods are often multiple zeta values (MZVs) which
are $\QQ$ linear combinations of multiple zeta sums
$$
\zeta(n_d,\ldots,n_1)=\sum_{k_d>\ldots>k_1\geq1}\frac{1}{k_d^{n_d}\!\cdots k_1^{n_1}}\quad\text{with}\quad n_i\in\ZZ_{>0},\; n_d\geq 2.
$$
At seven loops there exists a single period, $P_{7,11}$, which (conjecturally) is not expressible in terms of MZVs. The period $P_{7,11}$ (see Figure \ref{fig:p711})
features an extension of MZVs by (some) sixth (or third) roots of unity. We give the result in the $f^6$ alphabet with respect to the corresponding
even parity Deligne basis \cite{coaction} (a presentation of $P_{7,11}$ with smaller numerators and denominators is given in \cite{MDV}):
\begin{eqnarray}\label{711}
\frac{P_{7,11}}{\ii\sqrt{3}}&=&-\lfrac{332262}{43}f^6_8f^6_3+\lfrac{54918}{55}f^6_6f^6_5+\lfrac{1134}{13}f^6_4f^6_7-\lfrac{1874502}{3485}f^6_2f^6_9\nonumber\\
&&-\numprint{5670}f^6_2f^6_3f^6_3f^6_3-\lfrac{3216912825399005402331281812377062149}{10264478246467100965990650592350882000}(\pi\ii)^{11}.
\end{eqnarray}
There exists a lengthy conversion of the period $P_{7,11}$ in terms of multiple polylogarithms evaluated at primitive sixth roots of unity.
The period $P_{7,11}$ was calculated by Erik Panzer (using his program {\tt HyperInt}) in his PhD thesis \cite{Panzer:PhD} in terms multiple polylogarithms.
With {\tt HyperlogProcedures} the result was converted into the $f$ alphabet \cite{Hyperlogproc}.

At eight loops still most periods are MZVs. Beyond MZVs we found for the period $P_{8,33}$ an expression of weight 13 which is similar to $P_{7,11}$.
Moreover, there exist four periods of an entirely new type. The geometry underlying these periods is no longer a punctured sphere $\CC\backslash\{0,1,\ldots\}$.
Instead of point punctures we obtain in two cases K3 surfaces \cite{K3}. In the other two cases we found a threefold and a fivefold,
respectively. All four varieties are modular of low level \cite{mod}.

Beyond eight loops periods associated to non modular varieties are expected \cite{mod}.

\subsection{The coaction conjectures}
In (\ref{711}) a letter $f^6$ of even weight (subscript) appears only in the leftmost position. This is a consequence of a Galois structure in $\phi^4$ periods.

In \cite{KZ} M. Kontsevich and D. Zagier defined the $\QQ$ algebra of periods $\sP$ as integrals of rational forms over $\QQ$.
Feynman periods are periods in this sense. By general philosophy there should exist a Galois coaction on $\sP$
\cite{Andre:GaloisMotivesTranscendental,GON,MMZ,BrownDecom,Bsv,P3P,Bcoact2},
\begin{equation}\label{co}
\Delta\colon \sP \longrightarrow \sP^\dR \otimes_{\QQ} \sP,
\end{equation}
where the left hand side of the tensor product is the Hopf algebra of de Rham periods. (Note that in some publications $\sP^\dR$ coacts on the right hand side.)
In the special case of $\phi^4$ periods the right hand side (and also the left hand side) of (\ref{co}) seems to be severely restricted.
A mathematical theory with first results is in \cite{Bcoact1}. The data of approximately 300 known $\phi^4$ periods up to 11 loops led to the following possible
scenarios \cite{coaction} for the $\QQ$ algebra $\sP_{\phi^4}$ generated by $\phi^4$ periods. (More precisely, in this article $\Delta$ is the unipotent part of the coaction.)
\begin{itemize}
\setlength{\itemindent}{1cm}
\item[{\bf Scenario 1.}]
\begin{equation}\label{sc1}
\Delta\colon \sP_{\phi^4} \longrightarrow \sP^\dR \otimes_{\QQ} \sP_{\phi^4}.
\end{equation}
\item[{\bf Scenario 2.}]
\begin{equation}\label{sc2}
\Delta'\colon \sP_{\phi^4,\leq n} \longrightarrow \sP^\dR \otimes_{\QQ} \sP_{F,\leq n-1},
\end{equation}
where
$$\Delta'x=\Delta x-1\otimes x$$
is the reduced coaction.
\end{itemize}

Scenario 2 means that for a given $\phi^4$ period of $n$ loops the right hand side of the tensor product is in the $\QQ$ vector space spanned by Feynman periods of all graphs with
at most $n-1$ loops.

Note that $\sP_{\phi^4}$ and $\sP_F$ are very sparse in $\sP$, so that the Scenarios 1 and 2 have huge predictive power on $\sP_{\phi^4}$.
Presumably $\sP_{\phi^4}$ also is very sparse in $\sP_F$. The coaction conjectures are (\ref{sc1}) and (\ref{sc2}).

\subsection{The $c_2$ invariant}
The $c_2$ invariant assigns to every graph with at least three vertices an infinite sequence which is indexed by prime powers $q=p^n$ \cite{SchnetzFq},
$$
c_2:G\mapsto (c_2(G)_q)_q=(c_2(G)_2,c_2(G)_3,c_2(G)_4,c_2(G)_5,c_2(G)_7,\ldots),
$$
where $c_2(G)_q$ is in $\ZZ/q\ZZ$. The $c_2$ invariant is linked to the period integral in four dimensions. With a metric signature $(+,-,+,-)$ instead
of the Euclidean signature it can be defined via the point-count $N_q$ of the singular locus of the period integral (\ref{fdef}) over the finite field $\FF_q$ \cite{c2,Dc2}.
For graphs with at least three vertices, $N_q$ is divisible by $q^2$ and we define
$$
c_2(G)_q\equiv N_q/q^2\mod q.
$$
In practice, it is more efficient to calculate the $c_2$ invariant in parametric space where the above equation is still valid \cite{SchnetzFq,K3}.

The power of the $c_2$ invariant is twofold: First, there exist powerful tools which make it possible to determine the $c_2$ invariant for many graphs.
If the $c_2$ invariant cannot be fully calculated, it is still possible to determine the $c_2$ invariant for small primes \cite{mod,Yc2}. For all completed primitive $\phi^4$ graphs
up to ten loops the $c_2$ is known for at least the first six primes (in most cases much more) \cite{mod}.

Second, the $c_2$ invariant has some predictive power for the period. In particular, if two graphs have the same period, they are conjectured to have the same $c_2$ invariant,
$$
P(G_1)=P(G_2)\Rightarrow c_2(G_1)=c_2(G_2).
$$
All graphs with $c_2$ invariant $-1$ (i.e.\ $c_2(G)_q\equiv-1\mod q$ for all $q$) should have an MZV period. If the $c_2$ invariant is $-z_2$, with
\begin{equation}
z_N(q)=\left\{
\begin{array}{rl}
1&\hbox{if }N|q-1,\\
0&\hbox{if gcd}(N,q)>1,\\
-1&\hbox{otherwise,}
\end{array}\right.
\end{equation}
then the period is expected to be an Euler sum (due to the coaction conjectures, in many cases
these periods are still MZVs \cite{coaction}). The $c_2$ invariant of the periods $P_{7,11}$ and $P_{8,33}$ are $-z_3$.
This links the $c_2$ invariant to the sixth roots of unity which exist in (\ref{711}).

The connection between $\phi^4$ periods and higher dimensional geometries (in some graphs with at least eight loops) is proved with the $c_2$ invariant \cite{K3,mod}.

This led to the proof (assuming standard transcendentality conjectures) that not all $\phi^4$ periods are MZVs or extensions of MZVs by algebraic numbers \cite{K3,BD}.
Concretely, it was shown that $P_{8,37}$ is linked in such a way to the geometry of a K3 surface (which is modular of weight 3 level 7) that the `motivic' period cannot be mixed Tate.

The $c_2$ invariant can also be zero (i.e.\ $c_2(G)_q\equiv0\mod q$ for all $q$). In this case the period (conjecturally) has `weight drop'. This means that the transcendental weight of
the period is strictly smaller than the maximum value $2\ell-3$ in loop order $\ell$.

Note that the $c_2$ invariant seems very sparse for $\phi^4$ periods. It can be conjectured that for any fixed dimension (of the lowest dimensional manifold whose point-count gives the $c_2$)
there exist only finitely many $c_2$s to all loop orders. For dimension 0 we have $-z_N$ for $N=1,2,3,4$. In dimension 1 there seems to be nothing.
We possibly only have three two-folds (which are modular K3s of low level) and five modular three-folds in $\phi^4$ \cite{mod}.

This is in stark contrast to the situation in $\sP_F$ (including non-$\phi^4$ graphs) where we expect that we see basically any geometry over $\ZZ$ in the $c_2$.

An interesting recent result by K. Yeats (private communication 2017) is that we seem to see any finite sequence of prime remainders in the $c_2$ of $\phi^4$ graphs.
Note that this does not contradict the sparsity of $\phi^4$ $c_2$s (because $c_2$s are infinite sequences).

\subsection{The Hepp invariant}\label{Hepp}
For any graph $G$ with edge weights $\{\nu_e\}$, $e\in\sE(G)$ and $a\in\CC$ we recursively define the following Hepp invariant:
\begin{equation}
H_a(G)=\frac{\sum_{e\in\sE(G)}\nu_eH_a(G\backslash e)}{N_G-ah_1(G)},
\end{equation}
where $N(G)=\sum_{e\in\sE(G)}\nu_e$ is the sum of edge weights and $h_1(G)$ is the (Betti) number of independent cycles in $G$.
A graph with no edges has Hepp invariant 1.

\begin{lem} The Hepp invariant has the following properties:
\begin{enumerate}
\item If the removal of the edge $e$ disconnects the graph $G$ then $H_a(G)=H_a(G\backslash e)$.
\item If G has vertex connectivity $\leq1$ then $H_a$ factorizes, i.e.\ $H_a(G)$ is the product of the Hepp invariants of its components (cutting at split vertices without
removing edges).
\item Assume a vertex $v$ in $G$ is adjacent to exactly two edges $e$ and $f$ with weights $\nu_e$ and $\nu_f$.
We construct a smaller graph $G'$ by contracting the edge $e$ (or $f$) in $G$ and giving $f$ ($e$) the weight $\nu_e+\nu_f$.
Then $H_a(G)=H_a(G')$.
\end{enumerate}
\end{lem}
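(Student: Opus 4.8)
The plan is to establish all three properties by induction on the number of edges $m=|\sE(G)|$, the base case $m=0$ being the normalization $H_a=1$. I regard $H_a(G)$ as a rational function of $a$, so that the defining recursion reads as the identity $\sum_{e\in\sE(G)}\nu_e H_a(G\backslash e)=(N_G-ah_1(G))\,H_a(G)$, with no concern about vanishing denominators. A single mechanism drives every step: expand $H_a(G)$ once, rewrite each term $H_a(G\backslash e)$ by the inductive hypothesis, and then recognise the resulting edge-sum as the numerator of the recursion for the graph being compared against, so that the denominators cancel. Two bookkeeping facts are used throughout: $N_G=\sum_e\nu_e$ is additive under edge deletion and under disjoint or one-point unions, while the cycle rank $h_1=|\sE|-|\sV|+(\text{number of components})$ is unchanged by deleting a bridge, is additive under disjoint union and under gluing at a single vertex, and is preserved by contracting a degree-two vertex.

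For (1), let $e_0$ be a bridge. The key point is that $e_0$ stays a bridge after any other deletion, since deleting an edge cannot create a new path between the endpoints of $e_0$; hence for $e\neq e_0$ the inductive hypothesis gives $H_a(G\backslash e)=H_a(G\backslash\{e,e_0\})$. Therefore
$$\sum_{e\neq e_0}\nu_e H_a(G\backslash e)=\sum_{e\in\sE(G\backslash e_0)}\nu_e H_a\bigl((G\backslash e_0)\backslash e\bigr)=\bigl(N_{G\backslash e_0}-ah_1(G\backslash e_0)\bigr)H_a(G\backslash e_0),$$
the last equality being the recursion applied to $G\backslash e_0$. Adding the term $\nu_{e_0}H_a(G\backslash e_0)$ and using $N_G=N_{G\backslash e_0}+\nu_{e_0}$ together with $h_1(G)=h_1(G\backslash e_0)$, the numerator becomes $(N_G-ah_1(G))\,H_a(G\backslash e_0)$, so $H_a(G)=H_a(G\backslash e_0)$.

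For (2) I prove the unified statement: whenever $\sE(G)=\sE(G_1)\sqcup\sE(G_2)$ with the edge-induced subgraphs $G_1,G_2$ sharing at most one vertex, one has $H_a(G)=H_a(G_1)H_a(G_2)$. This covers both a disconnection and a cut vertex, and, crucially, it is stable under deleting any edge, which is what lets the induction run. Expanding $H_a(G)$ and applying the inductive hypothesis to each $H_a(G\backslash e)$ factors every term, so the numerator equals $H_a(G_2)\sum_{e\in\sE(G_1)}\nu_e H_a(G_1\backslash e)+H_a(G_1)\sum_{e\in\sE(G_2)}\nu_e H_a(G_2\backslash e)$. Replacing each inner sum by $(N_{G_i}-ah_1(G_i))H_a(G_i)$ via the recursion for $G_i$, and using $N_G=N_{G_1}+N_{G_2}$ and $h_1(G)=h_1(G_1)+h_1(G_2)$, the numerator collapses to $(N_G-ah_1(G))H_a(G_1)H_a(G_2)$, giving the claim; iterating this factorization over the block decomposition yields the stated product over components.

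For (3), let $v$ have degree two with incident edges $e=\{v,x\}$ and $f=\{v,y\}$, and let $G'$ be obtained by contracting $e$, so that $e$ and $f$ are replaced by a single edge $\hat f=\{x,y\}$ of weight $\nu_e+\nu_f$ and $v$ is deleted. Expanding $H_a(G)$, I split the edge-sum into the terms $c=e$, $c=f$, and $c\notin\{e,f\}$. Deleting $e$ (respectively $f$) leaves the other edge pendant at $v$, hence a bridge, so property (1) gives $H_a(G\backslash e)=H_a(G\backslash f)=H_a(G\backslash\{e,f\})=H_a(G'\backslash\hat f)$, and these two terms contribute $(\nu_e+\nu_f)H_a(G'\backslash\hat f)$. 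For $c\notin\{e,f\}$ the vertex $v$ still has degree two and $c$ is not incident to it, so deletion and contraction commute and the inductive hypothesis gives $H_a(G\backslash c)=H_a(G'\backslash c)$. The numerator is then exactly the numerator of the recursion for $G'$; since contraction of a degree-two vertex lowers both $|\sE|$ and $|\sV|$ by one without changing the number of components, we have $N_G=N_{G'}$ and $h_1(G)=h_1(G')$, the denominators match, and $H_a(G)=H_a(G')$. I expect this last part to be the main obstacle: unlike (1) and (2) it is not self-contained, as it must invoke (1) to collapse the two terms deleting an edge at $v$, and it rests on the combinatorial identification $G\backslash\{e,f\}=G'\backslash\hat f$ (up to the now-isolated vertex $v$, which $H_a$ does not see); the argument must therefore be arranged so that (1) is proved before (3).
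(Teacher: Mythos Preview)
Your proof is correct and follows exactly the approach the paper indicates: the paper's entire proof is the single line ``Straight forward induction over the number of edges in $G$,'' and what you have written is precisely that induction carried out in detail. The only thing worth noting is that your observation in part~(3)---that one must invoke property~(1), already established at lower edge count, to collapse the two terms coming from deleting $e$ or $f$---is indeed the one non-obvious step, and you have handled it correctly.
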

\begin{proof}
Straight forward induction over the number of edges in $G$.
\end{proof}
The above lemma can be used to efficiently calculate the Hepp invariant of reasonably large graphs.
Note that every forest has Hepp invariant 1.

If a graph $G$ has a Feynman period in $d=4$ dimensions then $H_a(G)$ trivially has a simple pole at $a=2$. We define the Hepp period of $G$ as the residue,
\begin{equation}
H(G)=-h_1(G)2^{-h_1(G)}\mathrm{res}_{a=2}H_a(G).
\end{equation}
Erik Panzer recently found that the Hepp period is closely related to the Feynman period \cite{EPHepp, KP6loopbeta}.
It approximates the Feynman period surprisingly well,
\begin{equation}\label{Heppapprox}
P(G)\approx0.545^{h_1(G)-1}H(G)^{1.355}.
\end{equation}
with an error of a few percent. The numerical estimates for the unknown eight loop periods in Table 3 were obtained by a refinement of (\ref{Heppapprox}).

Even more surprisingly, the Hepp period seems to know all identities between periods.
\begin{conj}[E. Panzer]\label{Heppcon}
\begin{equation}
P(G_1)=P(G_2)\Leftrightarrow H(G_1)=H(G_2)
\end{equation}
\end{conj}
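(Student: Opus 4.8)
The plan is to place both invariants inside a single tropical/combinatorial framework and then treat the two implications of the equivalence separately. Unwinding the recursion that defines $H_a(G)$ expresses it as a sum over total orders of $\sE(G)$ (the orders in which edges are successively deleted), each order contributing a product of factors $1/(N-ah_1)$ along the corresponding chain of minors; the residue at $a=2$ then selects a distinguished combinatorial subset of these orderings, so that $H(G)$ is an explicit rational number attached to the Hepp-sector decomposition of $G$. On the other side, in the parametric representation the Feynman period is the integral $P(G)=\int_{\Delta}\Omega/\psi_G^{2}$ of a rational form over the positive projective simplex $\Delta$, with $\psi_G$ the Kirchhoff graph polynomial. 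Following Panzer \cite{EPHepp} the same sector decomposition governs the leading asymptotics of this integral, so $H(G)$ is precisely the \emph{tropical} (leading-order) shadow of $P(G)$: both quantities are controlled by the same fan, with $H$ recording its combinatorial volume and $P$ the exact transcendental value.

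For the forward implication $P(G_1)=P(G_2)\Rightarrow H(G_1)=H(G_2)$ I would first verify that $H$ is invariant under every known period-preserving operation. Invariance under \emph{completion} should follow by tracking how inserting the vertex $\infty$ with unit quadrics shifts $N$ and $h_1$ through the recursion so that the residue at $a=2$ is unchanged; properties (1)--(3) of the Lemma above reduce much of this to bookkeeping. For the vertex-connectivity-three \emph{product identity} one must show that $H$ factorizes under gluing along a triangle exactly as $P$ does, an argument modelled on but strictly stronger than the connectivity-one multiplicativity of property (2). The \emph{twist} and \emph{Fourier} identities then reduce to checking that the two graphs produce the same Hepp-sector sum. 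The real obstacle here is structural: it is \emph{not} known that completion, product, twist and Fourier generate \emph{all} equalities of periods---the text itself notes that the Hepp invariant hints at further identities---so invariance under these four operations cannot by itself close the implication. One would additionally need either a classification of all period identities or a direct proof that the integral $P(G)$ factors through the tropical datum recorded by $H$.

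The reverse implication $H(G_1)=H(G_2)\Rightarrow P(G_1)=P(G_2)$ is the deeper and more surprising half. One cannot hope for a universal function $P=\Phi(H)$, since (\ref{Heppapprox}) shows that $P$ is only \emph{approximately} a power of $H$; the exact period carries genuine transcendental information---the MZV and sixth-root-of-unity content discussed above---that a single rational number cannot literally encode uniformly across all graphs. The viable route is instead to prove that the fibres of the Hepp map coincide with the fibres of the period map: whenever $H(G_1)=H(G_2)$, the graphs must be joined by a chain of period-preserving moves. This forces a two-part program---(i) determine the complete set of graph operations that preserve $H$, and (ii) show each such operation also preserves $P$---and it is here that the main obstacle lies.

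The hardest point, and the reason the statement remains a conjecture, is that there is no known mechanism compelling the exact value $P$ to be recoverable from the rational datum $H$. A natural but unproven strategy is to refine the tropical expansion: if one could show that the \emph{entire} asymptotic series attached to the sector decomposition of $P(G)$---not merely its leading Hepp term---is already fixed by $H$-type invariants, then equality of leading terms would propagate to equality of periods. But establishing that the leading tropical term determines the full transcendental integral is essentially the content of Conjecture \ref{Heppcon} itself, and I expect this to remain the principal barrier to a proof.
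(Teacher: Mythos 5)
The statement you were asked to prove is labelled a \emph{conjecture} in the paper, and the paper contains no proof of it. The only support the paper offers is empirical and partial: the numerical approximation (\ref{Heppapprox}) with errors of a few percent, the classical Hepp bound $P(G)\leq 2^{h_1(G)}H(G)$, the observation that at eight loops the conjecture forces the equalities $P_{8,30}=P_{8,36}$ and $P_{8,31}=P_{8,35}$ which are \emph{not} derivable from any known sequence of twist or Fourier identities, and one rigorous fragment: assuming invariance of $H$ under completion, the product identity of Sect.~\ref{product} holds for Hepp periods (E.~Panzer and K.~Yeats, private communications). So there is no proof in the paper to compare yours against, and the correct outcome of a blind attempt is exactly to recognize that the statement is open --- which, to your credit, your final paragraph does.

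Judged as a proof, however, your text has a genuine gap in each implication, and you concede both. For the forward direction, checking invariance of $H$ under completion, product, twist and Fourier cannot close the implication, because these moves are not known to generate all equalities of periods; indeed the paper's own evidence ($P_{8,30}=P_{8,36}$, $P_{8,31}=P_{8,35}$) shows the conjecture strictly outruns the known identities, so no amount of case-checking along these lines terminates. For the reverse direction you correctly rule out a functional relation $P=\Phi(H)$ --- (\ref{Heppapprox}) is only approximate, and a single rational number cannot uniformly encode the transcendental content of $P$, e.g.\ the sixth-root-of-unity periods in (\ref{711}) --- but the substitute program (classify all $H$-preserving operations, then show each preserves $P$) is stated rather than executed, and its hardest step is, as you say, essentially equivalent to the conjecture itself. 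One further caution on your framing: the identification of $H(G)$ as the tropical leading datum of the parametric integral $P(G)=\int_\Delta \Omega/\psi_G^2$ is consistent with the unpublished work cited as \cite{EPHepp}, but the paper gives no such derivation, so you should present that as motivation rather than established structure. In short: your assessment of the obstacles is sound and matches the paper's discussion, but what you have written is a research program, not a proof --- appropriately so, since none exists.
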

At eight loops the above conjecture requires $P_{8,30}=P_{8,36}$ and $P_{8,31}=P_{8,35}$. There is no sequence of twists or Fourier identities known that proves these identities.

Assuming Conj.\ \ref{Heppcon} for completion, the product identity (Sect.\ \ref{product}) was proved by E. Panzer (private communication, May 3, 2016) and independently by K. Yeats
(private communication, June 7, 2016) to hold for Hepp periods.

It is easy to see that $P(G)\leq 2^{h_1(G)}H(G)$ which is the classical Hepp bound for periods. Erik Panzer has achieved a certain refinement of this bound \cite{EPHepp}.
We conjecture that the Hepp period is a (crude) upper bound for the period, $P(G)\leq H(G)$.

\section{Functions}
Let $G$ be a graph with external vertices $0,1,z$ such that the graphical function $f_G^{(\lambda)}$ exists.

\subsection{Completion}\label{gfcompletion}
Like in the case of periods, completion exploits conformal symmetry to handle equivalence classes of closely related graphical functions.
Again, we add an external vertex `$\infty$' which connects to all internal vertices of $G$ with weights that give the internal vertices weighted degree $2d/(d-2)$.
Now, we add edges $\{z,\infty\}$, $\{0,1\}$, $\{0,\infty\}$, $\{1,\infty\}$ such that all external vertices have weighted degree 0. This provides the completion $\overline{G}$ of $G$.
The graphical function $f_{\overline{G}}^{(\lambda)}$ of the completed graph $\overline{G}$ is defined as the graphical function of
$\overline{G}\setminus\infty$ (edges adjacent to $\infty$ have quadric 1).
Clearly, the graphical function does not change under completion. Completion is always possible and unique. In three or four dimensions we also
know that the completed graph $\overline{G}$ has integer edge weights if $G$ has (Lemma 3.18 in \cite{gf}).

A permutation of external vertices in a completed graph results in a M\"obuis transformation of the argument $z$:

\begin{thm}[Theorem 3.20 in \cite{gf}]\label{completionthm}
Let $\sigma:\{0,1,z,\infty\}\to\{0,1,\phi(z),\infty\}$ be a permutation of $\{0,1,z,\infty\}$ followed by a M\"obius transformation $z\mapsto\phi(z)$ such
that $\sigma$ preserves the cross ratio $(0,1;z,\infty)$, i.e.
$$
\frac{-z}{1-z}=\frac{(\sigma(0)-\sigma(z))(\sigma(1)-\sigma(\infty))}{(\sigma(1)-\sigma(z))(\sigma(0)-\sigma(\infty))}.
$$
Then $f_{\overline{G}}^{(\lambda)}=f_{\sigma(\overline{G})}^{(\lambda)}$, where the M\"obius transformation in the label of $\sigma(\overline{G})$ acts on the argument of the graphical function.
In particular, $f_{\overline{G}}^{(\lambda)}$ is invariant under double transpositions of external labels.
\end{thm}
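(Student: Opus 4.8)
The plan is to derive the statement from the full conformal invariance of the completed integral. First I would record the covariance laws under a M\"obius transformation $T$ of the conformal sphere $\RR^d\cup\{\infty\}$ with conformal factor $\Omega$: each edge quadric transforms as $Q_{\{Tu,Tv\}}=\Omega(u)\Omega(v)Q_{\{u,v\}}$, which is immediate for the inversion $x\mapsto x/\|x\|^2$ via the identity $\|\,x/\|x\|^2-y/\|y\|^2\,\|^2=\|x-y\|^2/(\|x\|^2\|y\|^2)$ and extends to the whole group by composition with translations, rotations and dilations, while the integration measure transforms as $\dd^d(Tx_v)=\Omega(x_v)^d\,\dd^d x_v$.

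Next I would collect, for every vertex $v$ of $\overline{G}$, the total power of $\Omega(v)$ produced when $T$ is applied simultaneously to all internal and external variables. A vertex contributes $\Omega(v)^d$ through the measure when it is internal, and $\Omega(v)^{-\lambda\nu_e}$ for each incident edge $e$, so the net exponent is $d\cdot[v\text{ internal}]-\lambda\sum_{e\ni v}\nu_e$. Completion forces every internal vertex to weighted degree $2d/(d-2)=d/\lambda$ and every external vertex to weighted degree $0$, so, using $d=2+2\lambda$, this exponent vanishes at each vertex. Hence $f_{\overline{G}}^{(\lambda)}$ is invariant under the conformal group and depends on the external points $0,1,z,\infty$ only through their conformal class. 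Since these four points lie on a common Riemann sphere (the plane through $0,1,z$ completed by $\infty$), that class is recorded completely by the cross ratio $(0,1;z,\infty)=-z/(1-z)$.

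With conformal invariance established, the permutation statement reduces to bookkeeping with cross ratios. A relabeling $\sigma$ of the four external markings composed with the M\"obius transformation $\phi$ on the argument carries the labeled configuration at $(0,1,z,\infty)$ to the labeled configuration realizing $\sigma(\overline{G})$ at $(0,1,\phi(z),\infty)$. Two labeled four-point configurations are conformally equivalent exactly when their cross ratios agree, and the displayed identity is precisely the equality of $-z/(1-z)$ with $(\sigma(0),\sigma(1);\sigma(z),\sigma(\infty))$. The hypothesis therefore yields a conformal transformation intertwining the two configurations, and the invariance from the previous step gives $f_{\overline{G}}^{(\lambda)}=f_{\sigma(\overline{G})}^{(\lambda)}$. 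For the final assertion I would note that the permutations of four points fixing the value of the cross ratio form the Klein four-group of double transpositions; for these the displayed identity holds with $\phi$ the identity, so $f_{\overline{G}}^{(\lambda)}$ is invariant under double transpositions of external labels.

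I expect the main obstacle to be the careful treatment of the vertex $\infty$ together with the convention $Q_e=1$ on edges adjacent to $\infty$. The covariance laws above are stated for finite points, so one must justify that the construction genuinely lives on the compactified sphere and that the quadric-$1$ rule is the correct conformal limit as a marked point is sent to infinity. The degree conditions are exactly what makes the otherwise divergent conformal factor at $\infty$ cancel, so the limit is finite; making this precise---either by an explicit $w\to\infty$ argument tracking the factors $\Omega(w)$, or by rewriting the quadrics in conformally covariant embedding coordinates---is the technical heart of the proof, whereas the cross-ratio bookkeeping of the preceding paragraph is routine.
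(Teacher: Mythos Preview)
The present paper does not include a proof of this theorem; it is quoted with a reference to Theorem~3.20 in \cite{gf}. Your outline is the standard argument and is essentially the one given there: one shows that completion renders the position-space integral invariant under the full conformal group of $\RR^d\cup\{\infty\}$ by checking that the conformal weights coming from the quadrics and from the Jacobian cancel at every vertex precisely because internal vertices have weighted degree $d/\lambda$ and external vertices have weighted degree $0$; the statement then reduces to the observation that the conformal class of the ordered quadruple $(0,1,z,\infty)$ is encoded by the cross ratio. Your identification of the treatment of the label $\infty$ (and the convention $Q_e=1$ for its incident edges) as the only genuinely delicate point is accurate; in \cite{gf} this is handled by the explicit $w\to\infty$ limit you describe, using that the degree-$0$ condition at the external vertex kills the divergent conformal factor.
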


Because edges between external vertices produce trivial factors, completed graphical functions with stripped off edges between external vertices are
equivalence classes graphical functions of the same type. In four dimensions all those graphical functions with edge weights one and at most seven vertices are known \cite{Hyperlogproc}.

\begin{figure}%
\begin{center}
\fcolorbox{white}{white}{
  \begin{picture}(282,90)(-30,5)
    \SetWidth{1.0}
    \SetColor{Black}
    \Vertex(48,74){2.8}
    \Vertex(48,10){2.8}
    \Vertex(16,42){2.8}
    \Vertex(48,42){2.8}
    \Vertex(176,10){2.8}
    \Vertex(176,74){2.8}
    \Vertex(144,42){2.8}
    \Vertex(176,42){2.8}
    \Vertex(208,42){2.8}
    \Line(48,74)(48,10)
    \Line(16,42)(48,42)
    \Line(144,42)(208,42)
    \Line(176,74)(176,10)
    \Photon(208,42)(176,10){2.5}{5}
    \Photon(176,74)(144,42){2.5}{5}
    \Text(56,7)[lb]{\Black{$z$}}
    \Text(56,72)[lb]{\Black{$0$}}
    \Text(14,28)[lb]{\Black{$1$}}
    \Text(184,72)[lb]{\Black{$0$}}
    \Text(142,28)[lb]{\Black{$1$}}
    \Text(184,7)[lb]{\Black{$z$}}
    \Text(206,50)[lb]{\Black{$\infty$}}
    \Text(-20,40)[lb]{\Black{$G=$}}
    \Text(108,40)[lb]{\Black{$\overline{G}=$}}
  \end{picture}
}
\end{center}
\caption{A four dimensional graphical function with three edges and its completion.
The wiggly lines refer to edge weights $-1$.}
\label{fig:Li2}
\end{figure}
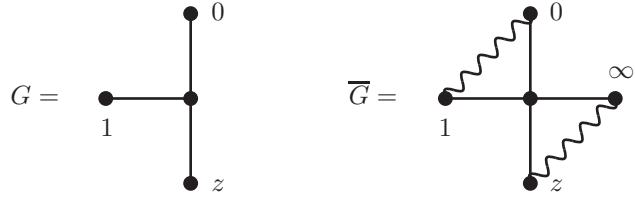

\begin{ex}\label{ex1}
The smallest non-trivial graphical function has four vertices and three edges (see Figure \ref{fig:Li2}).
In four dimensions, this graphical function is given by the Bloch-Wigner dilogarithm (see e.g.\ \cite{Zagierdilog}).
$$
f_G^{(\lambda)}(z)=\frac{4\ii D(z)}{z-\zz},
$$
with
\begin{equation}\label{D}
D(z)=\mathrm{Im}(\Li_2(z)+\ln(1-z)\ln|z|).
\end{equation}
Theorem \ref{completionthm} and $f_G^{(\lambda)}(z)=f_G^{(\lambda)}(\zz)$ reflect the symmetries of $D$.
\end{ex}

\subsection{Existence}
Existence of the graphical function $f_G^{(\lambda)}$ is best formulated in terms of the completion $\overline{G}$.

\begin{thm}[Lemma 3.19 in \cite{gf}]\label{gfexistence}
The graphical function $f_{\overline{G}}^{(\lambda)}$ exists if and only if for any vertex subset $V\subset\sV(\overline{G})$ with at least two vertices and at most one external vertex
the following inequality holds,
\begin{equation}\label{ultraviolet}
(d-2)N_g<d(|V|-1),
\end{equation}
where $N_g$ is the sum of edge weights in the induced subgraph $g$ of $V$ in $\overline{G}$ (i.e.\ $g$ has all the edges of $\overline{G}$ that have both vertices in $V$).
\end{thm}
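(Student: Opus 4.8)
The plan is to read the defining integral \eqref{fdef} as an absolute-convergence problem and to control it by its behaviour on the boundary strata of the integration domain, i.e.\ the regions where a cluster of vertices degenerates. Concretely, fix a subset $V\subset\sV(\overline{G})$ with $|V|\geq 2$ and at most one external vertex, and let the internal vertices of $V$ collapse onto a common point at a small scale $\rho\to 0$. First I would parametrise this region: if $V$ contains one external vertex I take it as the base point $p$ and write each internal vertex of $V$ as $p+\rho\hat{x}$ with $\hat{x}$ of order one; if $V$ contains no external vertex I fix the centroid as $p$ and scale the $|V|-1$ relative coordinates by $\rho$. In both cases exactly $|V|-1$ copies of $\RR^d$ scale with $\rho$, so the radial part of the measure is $\rho^{d(|V|-1)-1}\,\dd\rho$ times an angular factor of order one.

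The power counting of the integrand is then immediate. An edge of the induced subgraph $g$ has both endpoints in $V$, so its quadric $Q_e=\|u-v\|^2$ is of order $\rho^2$, whereas every edge leaving $V$ keeps a quadric of order one and contributes nothing singular. Hence $\prod_{e\in\sE(g)}Q_e^{-\lambda\nu_e}\sim\rho^{-2\lambda N_g}$, and the region contributes $\int_0\rho^{\,d(|V|-1)-1-2\lambda N_g}\,\dd\rho$, which is finite at $\rho=0$ exactly when $d(|V|-1)>2\lambda N_g$. Since $d=2+2\lambda$ gives $2\lambda=d-2$, this is precisely \eqref{ultraviolet}. This already explains the two side conditions: a cluster carrying two or more external vertices cannot collapse, since those points sit at fixed, distinct positions, so no such region exists; and a single internal vertex ($|V|=1$) has $N_g=0$ and produces no short-distance singularity, its escape to infinity being recovered from the two-element set $\{v,\infty\}$ below.

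It remains to account for the behaviour at large distances, and this is exactly what completion is for. By the M\"obius symmetry of Theorem \ref{completionthm}, together with the fact that every internal vertex has weighted degree $2d/(d-2)$, a conformal inversion sends the vertex $\infty$ to a finite point and transforms the integrand and the measure of each integrated vertex consistently. I would use such an inversion to turn every infrared region---a set of internal vertices escaping to infinity---into an ultraviolet collapse onto the image of $\infty$. After the inversion the edges formerly incident to $\infty$ acquire genuine quadrics of order $\rho^2$, so the count $N_g$ of the \emph{full} induced subgraph in $\overline{G}$ (edges to $\infty$ included) is the correct one, and the degree condition is again \eqref{ultraviolet}, now for subsets $V$ with $\infty$ as their unique external vertex. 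Thus necessity of \eqref{ultraviolet} for all admissible $V$ follows from power counting, with the ultraviolet and infrared cases unified over all of $\sV(\overline{G})$.

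For sufficiency I would resolve the integration domain into Hepp sectors, each sector corresponding to a flag $V_1\subset V_2\subset\cdots$ of vertex subsets; in sector coordinates the integrand becomes a monomial and convergence in each scaling variable is governed by a partial sum of the degrees $\omega(V_i):=d(|V_i|-1)-(d-2)N_{g_i}$. The genuine content of the statement---and the step I expect to be the main obstacle---is that convergence of every flag follows from the \emph{single}-subset inequalities $\omega(V)>0$, so that finitely many conditions control the full continuum of mixed and nested degenerations. I expect the key to be a supermodularity property of the weighted induced edge count $V\mapsto N_g(V)$, which makes $\omega$ submodular and the convergence cone a generalised permutohedron whose binding facets are indexed by individual subsets rather than by nested families. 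I would establish this (super)modularity and use it to show that if all single-subset degrees are strictly positive then so is every flag sum, giving absolute convergence; the reverse implication has already been secured region by region. Combining the two directions yields the equivalence claimed in the theorem.
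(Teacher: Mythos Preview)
The paper does not supply a proof of this theorem: it is stated with the attribution ``Lemma~3.19 in \cite{gf}'' and no \texttt{proof} environment follows, the text moving directly to an illustration in Example~\ref{ex1}. There is therefore nothing in the present paper to compare your argument against.

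That said, your outline is the standard and correct route to such a result: position-space power counting for necessity, conformal inversion to trade infrared for ultraviolet (which is exactly the role of the vertex $\infty$ in the completion), and a sector decomposition for sufficiency. This is essentially the Weinberg--Speer convergence analysis transported to coordinate space, and it is what one finds in \cite{gf}.

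One remark on the sufficiency step: you may be making it harder than necessary. In the Hepp sector attached to a flag $V_1\subset V_2\subset\cdots$, the radial integral in each scale variable $\rho_i$ is already controlled by the single quantity $\omega(V_i)=d(|V_i|-1)-(d-2)N_{g_i}$, not by a partial sum over the flag. Hence the hypothesis $\omega(V)>0$ for every admissible $V$ gives convergence in every sector directly, without invoking submodularity or a permutohedral description of the convergence cone. The supermodularity of $V\mapsto N_g(V)$ that you observe is correct and is useful for understanding how the convergence domain varies with the edge weights $\nu_e$, but it is not needed to close the argument at fixed weights.
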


In Example \ref{ex1} we only have the case $|V|=2$ and $N_g=1$, so that $f_G^{(\lambda)}(z)$ exists. In $d$ dimensions the edge $\{z,\infty\}$ has weight $(6-d)/(d-2)$
so that (\ref{ultraviolet}) implies that the graphical functions of Example \ref{ex1} exists in any dimension greater than three.

\subsection{General properties of graphical functions}

Graphical functions should be considered as functions on $\overline\CC=\CC\cup\{\infty\}$. They have the following general properties:

\begin{thm}\label{generalproperties} Let $G$ be a graph such that the graphical function $f_G^{(\lambda)}$ exists.
\begin{enumerate}
\item[(G1)]
\begin{equation}\label{reflection}
f_G^{(\lambda)}(z)=f_G^{(\lambda)}(\zz).
\end{equation}
\item[(G2)] $f_G^{(\lambda)}$ is a positive single-valued real analytic function on $\overline\CC\setminus\{0,1,\infty\}$.
\item[(G3)] The radius of convergence of the real analytic expansion of $f_G^{(\lambda)}$ at $z_0\in\overline\CC\setminus\{0,1,\infty\}$ is the distance from $z_0$ to the nearest singularity
of $f_G^{(\lambda)}$.

Let $\nu_z^>$ ($\nu_z^<$) be the sum of positive (negative) weights of edges adjacent to $z$. Let the dimension $d=2\lambda+2$ be even.
Then, if $z_0\in\{0,1\}$, we have for $|z-z_0|<1$ and coefficients $c_{\ell,m,n}(z_0)\in\CC$:
\begin{equation}\label{01expansion}
f_G^{(\lambda)}(z)=\sum_{\ell=0}^{V^{\mathrm{int}}}\sum_{m=M_{z_0}}^\infty\sum_{n=N_{z_0}}^\infty c_{\ell,m,n}(z_0)\log^\ell[(z-z_0)(\zz-\overline{z_0})](z-z_0)^m(\zz-\overline{z_0})^n,
\end{equation}
where
$$
M_{z_0},N_{z_0}>-\lambda\nu_z^>.
$$
If $z_0=\infty$ we have for $|z|>1$ and coefficients $c_{\ell,m,n}(\infty)\in\CC$:
\begin{equation}\label{inftyexpansion}
f_G^{(\lambda)}(z)=\sum_{\ell=0}^{V^{\mathrm{int}}}\sum_{m=-\infty}^{M_\infty}\sum_{n=-\infty}^{N_\infty}c_{\ell,m,n}(\infty)\log^\ell(z\zz)z^m\zz^n,
\end{equation}
where
$$
M_\infty,N_\infty<-\lambda\nu_z^<.
$$
\end{enumerate}
\end{thm}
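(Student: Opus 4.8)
The plan is to treat the three properties separately, leaning on the integral representation (\ref{fdef}) for (G1), (G2) and positivity, on \cite{par} together with the M\"obius symmetry of Theorem \ref{completionthm} for the behaviour at $\infty$, and on the hyperlogarithm structure together with power counting for (G3). For (G1) I would use the reflection symmetry of the Euclidean integral. The complex argument enters (\ref{fdef}) only through the position of the external vertex $z$, whose second coordinate is $\mathrm{Im}\,z=(z-\zz)/2\ii$; replacing $z$ by $\zz$ flips the sign of exactly that coordinate. Applying to every internal integration variable the orthogonal change of variables $x_v\mapsto(x_{v,1},-x_{v,2},x_{v,3},\ldots,x_{v,d})$ fixes the vertices $0,1$ on the first axis, leaves the Lebesgue measure and every Euclidean quadric $Q_e$ invariant, and hence leaves the integral unchanged, which is (\ref{reflection}).

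For (G2), positivity is immediate: for $z\notin\{0,1,\infty\}$ the integrand $\prod_{e}Q_e^{-\lambda\nu_e}$ is positive almost everywhere, so the convergent integral is positive. Single-valuedness and real-analyticity on $\CC\setminus\{0,1\}$ are the content of \cite{par}: the integrand is real-analytic in $\mathrm{Re}\,z$ and $\mathrm{Im}\,z$, and one differentiates under the integral sign using local uniform convergence on compacta away from the singular locus. To incorporate the point $\infty$ I would invoke Theorem \ref{completionthm}: a suitable double transposition of the external labels $\{0,1,z,\infty\}$ followed by its M\"obius transformation $\phi$ carries $\infty$ to a finite point $\phi(\infty)\notin\{0,1\}$ and satisfies $f_{\overline{G}}^{(\lambda)}(z)=f_{\sigma(\overline{G})}^{(\lambda)}(\phi(z))$; real-analyticity of the right-hand side at $\phi(\infty)$ then transfers real-analyticity of $f_{\overline{G}}^{(\lambda)}$ to $\infty$.

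For (G3), the radius-of-convergence statement follows from (G2): since $f_G^{(\lambda)}$ is genuinely real-analytic and single-valued on all of $\overline{\CC}\setminus\{0,1,\infty\}$, its expansion at a regular point $z_0$ converges on the largest disk centred at $z_0$ avoiding $\{0,1,\infty\}$ (a larger radius would force analyticity at one of those points), and single-valuedness rules out branch-type obstructions elsewhere. For the explicit expansions I would use that in even dimension $d=2\lambda+2$ the graphical function lies in the algebra of generalized single-valued hyperlogarithms, whose local expansions are exactly of the form (\ref{01expansion})--(\ref{inftyexpansion}). The bound $\ell\le V^{\mathrm{int}}$ on the logarithmic degree I would establish by induction on the number of internal vertices, using that the recursive construction in \cite{gf,par} integrates out one vertex at a time and each single $d$-dimensional integration raises the maximal logarithmic degree by at most one. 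The power bounds I would obtain by power counting at the singularity: near $z_0\in\{0,1\}$ the strongest singular contribution comes from the integration region in which internal vertices collapse onto the coalescing pair $\{z,z_0\}$; rescaling those vertices by $|z-z_0|$ and applying the existence criterion (Theorem \ref{gfexistence}) to the collapsed subgraph shows the singularity in each of the holomorphic and antiholomorphic variables is strictly milder than the power set by the positive-weight edges at $z$, giving $M_{z_0},N_{z_0}>-\lambda\nu_z^>$. The expansion at $\infty$ is the M\"obius image of the expansion at a finite singular point under the transformation of Theorem \ref{completionthm}, which interchanges the roles of the positive and negative weights adjacent to $z$ and thereby turns the lower bound into the upper bound $M_\infty,N_\infty<-\lambda\nu_z^<$.

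The hard part will be the rigorous, uniform control of the integral's asymptotics underlying (G3): one must justify the rescaling, verify that no competing integration region produces a stronger singularity, and pin the strict power-counting thresholds to $-\lambda\nu_z^>$ and $-\lambda\nu_z^<$. This is essentially the sector-decomposition and scaling analysis that also drives Theorem \ref{gfexistence}, and the logarithmic-degree bound likewise rests on the explicit single-vertex integration step of \cite{par,gf}. By contrast the symmetry argument for (G1), the treatment of $\infty$ in (G2), and the radius-of-convergence claim are comparatively routine.
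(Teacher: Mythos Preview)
Your treatment of (G1) and (G2) matches the paper: it declares (G1) ``immediate by symmetry'' (precisely your reflection argument) and defers (G2) to \cite{par}, so there is nothing further to compare there. The paper does not actually prove (G3) here; it announces that (G3) ``will be handled in \cite{gf2}'', a reference in preparation, so no detailed comparison is possible for that part.

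That said, your sketch for (G3) has a genuine gap. You write that ``in even dimension $d=2\lambda+2$ the graphical function lies in the algebra of generalized single-valued hyperlogarithms'' and derive the form of the expansions from this. That premise is false in general: the paper is explicit that only \emph{many} graphical functions are GSVHs, and Section~\ref{numbers} exhibits $\phi^4$ periods (e.g.\ $P_{8,37}$) whose geometry is a K3 surface, hence not mixed Tate, so the associated graphical functions cannot be hyperlogarithmic. Yet (G3) is asserted for \emph{all} graphical functions in even dimension. The log-polynomial expansions (\ref{01expansion}) and (\ref{inftyexpansion}) must therefore come from a direct analysis of the integral (sector decomposition, Mellin--Barnes, or the parametric representation of \cite{par}), not from an a priori functional class. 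Your power-counting and rescaling ideas for the exponent bounds are on the right track, but they need to be run on the integral itself rather than imported from the GSVH framework.

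A second, smaller point: your radius-of-convergence argument is too quick. Real-analyticity on $\overline{\CC}\setminus\{0,1,\infty\}$ does not by itself guarantee that the two-variable expansion in $(z-z_0)$ and $(\zz-\overline{z_0})$ converges out to the nearest of $0,1,\infty$; one must rule out singularities in the complexification where $z$ and $\zz$ are independent. This is a genuine assertion about the analytic structure of the integral, not a formal consequence of (G2), and is presumably part of what \cite{gf2} is meant to supply.
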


Property (G1) is immediate by symmetry. It also follows from the parametric representation of graphical functions \cite{par}. Property (G2) is proved in \cite{par} while
(G3) will be handled in \cite{gf2}. Conjecture \ref{asymptconj} gives additional information on the leading terms of the above expansions.
In the case of odd dimensions there exist expansions at 0, 1, $\infty$ which are similar to (\ref{01expansion}) and (\ref{inftyexpansion}) with a square root
$\sqrt{(z-z_0)(\zz-\zz_0)}$ \cite{gf2}.

\subsection{Appending edges}\label{append}
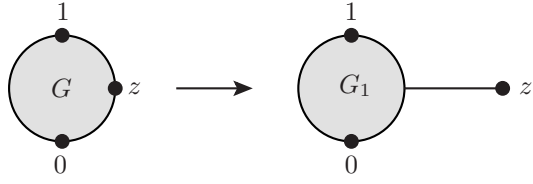
\begin{figure}
\begin{center}
\fcolorbox{white}{white}{
  \begin{picture}(336,79) (-52,-5)
    \SetWidth{0.8}
    \SetColor{Black}
    \GOval(32,34)(20,20)(0){0.882}
    \Vertex(32,54){2.8}
    \Vertex(32,14){2.8}
    \Vertex(52,34){2.8}
    \Text(30,60)[lb]{\normalsize{\Black{$1$}}}
    \Text(29,2)[lb]{\normalsize{\Black{$0$}}}
    \Text(57,32)[lb]{\normalsize{\Black{$z$}}}
    \Text(28,31)[lb]{\normalsize{\Black{$G$}}}

    \Line[arrow,arrowpos=1,arrowlength=5,arrowwidth=2,arrowinset=0.2](75,34)(100,34)

    \GOval(141,34)(20,20)(0){0.882}
    \Line(161,34)(196,34)
    \Vertex(141,54){2.8}
    \Vertex(141,14){2.8}
    \Vertex(198,34){2.8}
    \Text(139,60)[lb]{\normalsize{\Black{$1$}}}
    \Text(139,2)[lb]{\normalsize{\Black{$0$}}}
    \Text(205,32)[lb]{\normalsize{\Black{$z$}}}
    \Text(137,31)[lb]{\normalsize{\Black{$G_1$}}}
  \end{picture}
}
\end{center}
\caption{Appending an edge to the vertex $z$ in $G$ gives $G_1$.}
\label{fig:append}
\end{figure}

Edges between external vertices give factors in a graphical function. In the case that an edge is appended to the vertex $z$ creating a new vertex $z$ (see Figure \ref{fig:append})
the graphical functions are related by a differential equation \cite{gf,parG}.
\begin{lem}\label{appendlem}
In the setup of Figure \ref{fig:append} we have
\begin{equation}\label{diffeq}
\left(-\frac{1}{(z-\zz)}\,\partial_z\partial_{\zz}\,(z-\zz)+\frac{\lambda-1}{z-\zz}(\partial_z-\partial_{\zz})\right)f^{(\lambda)}_{G_1}(z)
=\frac{1}{\Gamma(\lambda)}f^{(\lambda)}_G(z),
\end{equation}
where $\Gamma(\lambda)=\int_0^\infty x^{\lambda-1}\exp(-x)\dd x$ is the gamma function.
\end{lem}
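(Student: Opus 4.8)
The plan is to derive the differential equation directly from the parametric/integral definition \eqref{fdef} of the graphical function, exploiting the fact that $G_1$ differs from $G$ only by one extra edge $\{x,z\}$ of weight $1$ attached at the integration vertex $x$ (the new internal vertex produced when appending). Writing the graphical function of $G_1$ as an integral over the new internal variable $x$ against the propagator $1/Q_{\{x,z\}}^{\lambda}$, the key observation is that the Laplacian in the external variable $z$ acts on this propagator as a Green's function identity. First I would set up $f^{(\lambda)}_{G_1}(z)$ as
$$
f^{(\lambda)}_{G_1}(z)=\int_{\RR^d}\frac{\dd^d x}{\pi^{d/2}}\,\frac{f^{(\lambda)}_{G}\text{-integrand}(x)}{Q_{\{x,z\}}^{\lambda}},
$$
so that the only $z$-dependence outside the already-integrated part of $G$ sits in $Q_{\{x,z\}}^{\lambda}=\|x-z\|^{2\lambda}$.

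The central step is to compute how the second-order operator on the left of \eqref{diffeq} acts on $\|x-z\|^{-2\lambda}$. The operator
$$
L_z=-\frac{1}{z-\zz}\,\partial_z\partial_{\zz}\,(z-\zz)+\frac{\lambda-1}{z-\zz}(\partial_z-\partial_{\zz})
$$
is, I claim, exactly the radial part (in the $d=2\lambda+2$-dimensional external space, reduced to the complex variable $z$) of the Euclidean Laplacian $\Delta_z$ acting on functions with the residual symmetry \eqref{reflection}. The factors $z\zz$, $(z-1)(\zz-1)$ appearing in the factorized quadrics (as recorded earlier in the excerpt) show that $z$ encodes a point in $\RR^d$ whose first two coordinates are $\mathrm{Re}\,z,\mathrm{Im}\,z$ and whose remaining $d-2=2\lambda$ coordinates vanish; so the $d$-dimensional Laplacian restricted to this plane splits into a $2$-dimensional piece $4\partial_z\partial_{\zz}$ plus a transverse piece that, by the symmetry $f(z)=f(\zz)$ and homogeneity in the transverse radius $|z-\zz|/(2\ii)$, contributes precisely the first-order terms with coefficient $\lambda-1$. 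Carefully matching these pieces is what produces the stated operator $L_z$. I would verify this by writing $\Delta_z$ in coordinates $(u,v)$ with $z=u+\ii v$, using that on the physical locus the transverse radial coordinate is $|v|$ and the transverse Laplacian of a rotationally symmetric function in $2\lambda$ dimensions is $\partial_v^2+\frac{2\lambda-1}{v}\partial_v$, then converting $u,v$ back to $z,\zz$.

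Once the operator is identified as the appropriate Laplacian, the result follows from the Green's function identity in $d=2\lambda+2$ dimensions,
$$
\Delta_z\,\|x-z\|^{-(d-2)}=-\,(d-2)\,\omega_{d-1}\,\delta^{(d)}(x-z),
$$
with $d-2=2\lambda$ matching the propagator exponent $\|x-z\|^{-2\lambda}$. Applying $L_z=\Delta_z$ (on symmetric functions) under the integral sign collapses the $x$-integration against the delta function, setting $x=z$ and returning the integrand of $f^{(\lambda)}_{G}$; the normalization constant $\omega_{d-1}$ together with the $\pi^{-d/2}$ measure factor and the $(d-2)$ prefactor combine into $1/\Gamma(\lambda)$, using $\omega_{d-1}=2\pi^{d/2}/\Gamma(d/2)$ and $d/2=\lambda+1$, so that $(d-2)\omega_{d-1}/\pi^{d/2}=2\lambda/\Gamma(\lambda+1)=1/\Gamma(\lambda)\cdot\text{(check factor)}$. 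Tracking this constant precisely is the main bookkeeping hurdle.

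The step I expect to be the genuine obstacle is the identification of $L_z$ with the restricted Laplacian $\Delta_z$ acting on functions obeying \eqref{reflection}: one must correctly account for the transverse dimensionality $2\lambda$ (not $2$) and show that the naive transverse radial operator, after conjugation by the factor $(z-\zz)$ and reduction to the $(z,\zz)$ variables, yields exactly the first-order term $\frac{\lambda-1}{z-\zz}(\partial_z-\partial_{\zz})$ rather than some other coefficient. An alternative, cleaner route that avoids distributional subtleties would be to work in the parametric representation of graphical functions \cite{par,parG}, where appending an edge multiplies the integrand by a Schwinger parameter factor and the operator $L_z$ corresponds to a simple derivative/integration-by-parts identity in that parameter; I would present the Green's-function argument as the conceptual proof and note the parametric derivation as the rigorous backup. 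In either approach the constant $1/\Gamma(\lambda)$ arises from the Schwinger integral $\int_0^\infty t^{\lambda-1}e^{-t}\,\dd t=\Gamma(\lambda)$ used to exponentiate the new propagator, which ties the two derivations together.
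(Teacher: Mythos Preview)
The paper does not itself prove this lemma; it merely states the result and refers to \cite{gf,parG} for the derivation. Your Green's-function approach is exactly the argument given in \cite{gf}: identify the operator $L_z$ with (a constant multiple of) the $d$-dimensional Laplacian $\Delta_w$ acting on the external vertex $w$, then use that $\|x-w\|^{-2\lambda}$ is the fundamental solution in $d=2\lambda+2$ dimensions. Your sketch is correct in outline, with one small slip and one point worth making explicit.

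The slip is in the dimension count. The function depends on the external position $w\in\RR^d$ only through $w_1$ and the radius $r=(w_2^2+\cdots+w_d^2)^{1/2}$ in the orthogonal $(d-1)$-dimensional subspace, with $z=w_1+\ii r$. The radial Laplacian there, in $d-1=2\lambda+1$ dimensions, carries the coefficient $\frac{2\lambda}{r}$, not $\frac{2\lambda-1}{r}$; splitting instead as $\RR^2\times\RR^{d-2}$ does not cleanly isolate the radial variable, because $r$ already involves $w_2$. With the correct split one finds
\[
\Delta_w=\partial_{w_1}^2+\partial_r^2+\frac{2\lambda}{r}\partial_r
=4\partial_z\partial_{\zz}-\frac{4\lambda}{z-\zz}(\partial_z-\partial_{\zz}),
\]
and expanding the conjugated term in $L_z$ shows that $L_z=-\partial_z\partial_{\zz}+\frac{\lambda}{z-\zz}(\partial_z-\partial_{\zz})=-\tfrac14\Delta_w$.

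The point to make explicit is precisely this factor $\tfrac14$. Your normalisation check actually gives
\[
\frac{(d-2)\,\omega_{d-1}}{\pi^{d/2}}=\frac{2(d-2)}{\Gamma(d/2)}=\frac{4\lambda}{\Gamma(\lambda+1)}=\frac{4}{\Gamma(\lambda)},
\]
and the missing $\tfrac14$ is supplied by $L_z=-\tfrac14\Delta_w$ rather than $-\Delta_w$. Once that is in place the constant $1/\Gamma(\lambda)$ drops out and the argument is complete. The parametric alternative you mention as a ``rigorous backup'' is exactly the content of the second reference \cite{parG}.
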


The differential equation is particularly simple in $d=4$ dimensions ($\lambda=1$). In this case we (uniquely) obtain the graphical function of $G_1$ by single-valued
integration with respect to $z$ and $\zz$ (see \cite{gf}).

\begin{lem}\label{kernellem}
Let $\lambda=1$. The differential operator on the left hand side of (\ref{diffeq}) has trivial kernel in the space of functions with general properties (G1) -- (G3).
\end{lem}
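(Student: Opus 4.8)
The plan is to exploit that at $\lambda=1$ the operator in (\ref{diffeq}) degenerates into a conjugate of the flat Laplacian, so that lying in its kernel becomes a harmonicity condition, and then to use the single-valuedness (G2) and the growth constraints (G3) to force the associated harmonic function to vanish identically.

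First I would set $g:=(z-\zz)f^{(1)}_{G_1}$ and note that for $\lambda=1$ the second term of the operator drops out, leaving
\[
\left(-\frac{1}{z-\zz}\,\partial_z\partial_{\zz}\,(z-\zz)\right)f^{(1)}_{G_1}=-\frac{1}{z-\zz}\,\partial_z\partial_{\zz}\,g .
\]
Since $\partial_z\partial_{\zz}=\tfrac14\Delta$, a function $f^{(1)}_{G_1}$ satisfying (G1)--(G3) lies in the kernel precisely when $\partial_z\partial_{\zz}g=0$ off the real axis; by real-analyticity of $g$ (a consequence of (G2)) this extends across $\RR$, so $g$ is harmonic on $\overline\CC\setminus\{0,1,\infty\}$. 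Because $f^{(1)}_{G_1}$ is real-valued by (G2), the function $h:=-\ii g=2(\mathrm{Im}\,z)\,f^{(1)}_{G_1}$ is a \emph{real}, single-valued, real-analytic harmonic function on $\overline\CC\setminus\{0,1,\infty\}$. (Property (G1) makes $h$ odd under $z\mapsto\zz$, hence $h$ vanishes on the real axis; this is consistent with the conclusion but is not otherwise needed.)

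Next I would read off from (G3) that $h$ tends to $0$ at each of the three singular points. Near $z_0\in\{0,1\}$ the vertex $z$ in $G_1$ carries the single appended edge of weight one, so $\nu_z^>=1$ and the exponents in (\ref{01expansion}) satisfy $M_{z_0},N_{z_0}>-\lambda\nu_z^>=-1$, i.e.\ $m,n\geq 0$ for integer weights in $d=4$; thus $f^{(1)}_{G_1}$ stays bounded up to finitely many powers of $\log|z-z_0|$, while $|\mathrm{Im}\,z|\le|z-z_0|\to0$ (as $z_0\in\RR$), so $h\to0$. At $\infty$ we have $\nu_z^<=0$, so by (\ref{inftyexpansion}) the exponents obey $M_\infty,N_\infty<-\lambda\nu_z^<=0$; hence $f^{(1)}_{G_1}$ decays at least like $|z|^{-2}$ up to logarithms and $h=2(\mathrm{Im}\,z)f^{(1)}_{G_1}\to0$ as well.

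Finally I would invoke two standard facts about harmonic functions. A harmonic function that is bounded near an isolated singularity extends harmonically across it (the only possible singular contributions, multiples of $\log|z-z_0|$ or more singular terms, are unbounded). Applying this at $0$ and $1$ shows $h$ extends to a harmonic function on all of $\CC$, which is bounded since it tends to $0$ at $\infty$; by the Liouville theorem for harmonic functions it is constant, and the constant must be $0$. Therefore $h\equiv0$, whence $f^{(1)}_{G_1}=h/(2\,\mathrm{Im}\,z)=0$ off the real axis and, by real-analyticity, everywhere. The main obstacle is the third paragraph: one must verify that (G3) genuinely delivers boundedness (indeed decay) of $h$ at all three punctures, which amounts to checking the exponent inequalities for the appended-edge configuration $\nu_z^>=1,\ \nu_z^<=0$ and controlling the logarithmic factors; the removable-singularity and Liouville steps are then routine.
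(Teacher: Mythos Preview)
Your argument is correct. The paper's proof differs in one organizational respect: rather than working directly with the harmonic function $g=(z-\zz)f$, it passes to the holomorphic function $\partial_z\big((z-\zz)f\big)$ (which is anti\-holomorphic-free because $\partial_{\zz}\partial_z g=0$), uses the same exponent bounds from (G3) to show this derivative is entire and vanishes at $\infty$, applies Liouville for holomorphic functions, and integrates back to get $(z-\zz)f=h(\zz)$; it then needs the reflection symmetry (G1) to kill the anti\-holomorphic remainder $h$. Your route via real harmonic Liouville handles both the holomorphic and anti\-holomorphic pieces simultaneously, so (G1) becomes superfluous (as you noted). Conversely, the paper avoids the removable-singularity step for harmonic functions by never allowing isolated singularities to appear. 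Both arguments rest on the same two inputs---single-valued real analyticity from (G2) and the exponent inequalities $M_{0},N_{0},M_{1},N_{1}\geq0$, $M_\infty,N_\infty\leq-1$ from (G3) with $\nu_z^>=1$, $\nu_z^<=0$---and are of comparable length.
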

\begin{proof}
Assume $f$ with properties (G1) -- (G3) is in the kernel of the differential operator for $\lambda=1$.
Because of (G2), $g(z)=\partial_z(z-\zz)f(z)$ is meromorphic. We use (\ref{01expansion}) with $N_0,M_0,N_1,M_1\geq0$ (because $\nu_z^>=1$ in $G_1$) and
conclude that $g$ is holomorphic on $\CC$. From (\ref{inftyexpansion}) with $N_\infty,M_\infty\leq-1$ we get $g(\infty)=0$, hence, by Liouville's theorem, $g=0$.
Therefore $f=h(\zz)/(z-\zz)$ for some anti-holomorphic function $h$. With (G1) we obtain $h=0$.
\end{proof}

Beginning with the empty graphical function we can construct many graphical functions by appending edges (see \cite{gf, ZZ}). A particularly simple class of such graphical functions
is handled by the following theorem.
\begin{thm}
Let $G$ be a graph with external vertex width four (constructible in \cite{gf}), i.e.\ $G$ can be constructed from the empty graph by adding edges between external vertices,
permuting external vertices, and appending edges. If $G$ has no edges between external vertices then we obtain in $d=4$ dimensions,
\begin{equation}
f_G^{(1)}(z)=P(z)/(z-\zz),
\end{equation}
where $P$ is a single-valued multiple polylogarithm \cite{BrSVMP} of weight $2V^{\mathrm{int}}(G)$.
\end{thm}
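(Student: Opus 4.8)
The plan is to argue by induction on the construction sequence that realises $G$ as a graph of external vertex width four, carrying along a slightly stronger invariant than the statement itself. Let $\mathcal{R}$ denote the $\CC$-algebra generated by the single-valued multiple polylogarithms together with the rational functions $z^{\pm1},\zz^{\pm1},(z-1)^{\pm1},(\zz-1)^{\pm1}$. I would prove that for \emph{every} graph $H$ occurring in the construction (including the intermediate ones, which do carry edges between external vertices) one has $f_H^{(1)}(z)=P_H(z)/(z-\zz)$ with $P_H\in\mathcal{R}$, and that the purely polylogarithmic part of $P_H$ has weight $2V^{\mathrm{int}}(H)$. The theorem is then the special case $H=G$ in which the absence of edges between external vertices removes the rational decorations, leaving $P_G$ a genuine single-valued multiple polylogarithm of weight $2V^{\mathrm{int}}(G)$. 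The base case is the empty graph, where $f^{(1)}=1$ corresponds to the degenerate element $P=z-\zz\in\mathcal{R}$; here $V^{\mathrm{int}}=0$ and the polylogarithmic weight is $0$.

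For the inductive step I treat the three construction moves separately. Adding an edge between two external vertices multiplies $f_H^{(1)}$ by one of $1$, $(z\zz)^{-1}$ or $((z-1)(\zz-1))^{-1}$, via the factorisation $Q_{\{0,1\}}=1$, $Q_{\{0,z\}}=z\zz$, $Q_{\{1,z\}}=(z-1)(\zz-1)$; this keeps $P_H$ inside $\mathcal{R}$ and leaves $V^{\mathrm{int}}(H)$ unchanged. Permuting external vertices acts, by Theorem \ref{completionthm}, through an anharmonic M\"obius transformation $\phi$ fixing $\{0,1,\infty\}$ as a set. Since the single-valued multiple polylogarithms are closed under this finite group, and since $\overline{\phi(z)}=\phi(\zz)$ gives $\phi(z)-\phi(\zz)=(z-\zz)\cdot r(z,\zz)$ with $r$ a rational factor of the allowed type, the form $P_H/(z-\zz)$ is preserved with the same $V^{\mathrm{int}}$ and the same polylogarithmic weight.

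The decisive move is appending an edge to $z$. By Lemma \ref{appendlem} with $\lambda=1$ the term $\tfrac{\lambda-1}{z-\zz}(\partial_z-\partial_{\zz})$ drops out and $\Gamma(1)=1$, so, writing $f_{H_1}^{(1)}=P_{H_1}/(z-\zz)$ and using $P_H=(z-\zz)f_H^{(1)}$, equation (\ref{diffeq}) becomes $\partial_z\partial_{\zz}\,[(z-\zz)\,f_{H_1}^{(1)}]=-P_H$, i.e. $\partial_z\partial_{\zz}P_{H_1}=-P_H$. Here I invoke the fundamental closure property of single-valued multiple polylogarithms: every element of $\mathcal{R}$ admits single-valued primitives in $z$ and in $\zz$ inside $\mathcal{R}$, each integration raising the polylogarithmic weight by one. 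Applying this twice produces a candidate $P_{H_1}\in\mathcal{R}$ of polylogarithmic weight two higher than $P_H$, matching $2V^{\mathrm{int}}(H_1)=2V^{\mathrm{int}}(H)+2$ because appending an edge turns the old external vertex $z$ into an internal one. That this candidate \emph{is} the graphical function, uniquely normalised, follows from Lemma \ref{kernellem}: the operator of (\ref{diffeq}) has trivial kernel on functions with properties (G1)--(G3), and the boundary data are pinned down by the expansions (\ref{01expansion})--(\ref{inftyexpansion}); for the first append this recovers the Bloch--Wigner dilogarithm of Example \ref{ex1}. Finally, when $G$ has no edges between external vertices, the vertex $z$ carries a single adjacent edge, so $\nu_z^>=1$ and $\nu_z^<=0$; then (G3) forces $M_{z_0},N_{z_0}\ge 0$ at $z_0\in\{0,1\}$ and $M_\infty,N_\infty<0$ at $\infty$, which rules out poles at $0,1$ and growth at $\infty$, hence excludes any rational decoration and leaves $P_G$ a pure single-valued multiple polylogarithm of weight $2V^{\mathrm{int}}(G)$.

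The main obstacle is the appending step, specifically the closure of $\mathcal{R}$ under single-valued holomorphic and antiholomorphic integration together with precise control of the weight: one must know that the double primitive of an element of $\mathcal{R}$ can be chosen single-valued, lies again in $\mathcal{R}$, has polylogarithmic weight raised by exactly two, and is uniquely fixed by its behaviour at the punctures $0,1,\infty$. This is exactly what Brown's theory of single-valued multiple polylogarithms \cite{BrSVMP} supplies. The role of Lemmas \ref{appendlem} and \ref{kernellem} is to convert the analytic existence-and-uniqueness statement for graphical functions into this algebraic integration problem in $\mathcal{R}$, while the degree bound in (G3) guarantees that the rational factors, which are genuinely present at intermediate stages, cancel precisely when no edges between external vertices remain.
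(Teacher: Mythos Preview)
The paper does not actually prove this theorem here; it explicitly defers the argument to the forthcoming reference \cite{gf2}. Your approach---induction along the construction sequence, with Lemma~\ref{appendlem} turning the appending move into a double single-valued integration and Lemma~\ref{kernellem} supplying uniqueness---is precisely the machinery the paper has set up for this purpose and is the expected proof.

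Two technical points are worth tightening. First, the blanket claim that every element of your algebra $\mathcal{R}$ admits a single-valued primitive in $\mathcal{R}$ with the polylogarithmic weight raised by exactly one is not literally true (integrating $1/z^2$ or a polynomial does not increase the weight). What actually controls the weight is that, just before each append, the external vertex $z$ can carry at most one edge to $0$ and at most one edge to $1$: this follows from the convergence criterion of Theorem~\ref{gfexistence} applied to the two-vertex subgraphs $\{0,z_{\mathrm{old}}\}$ and $\{1,z_{\mathrm{old}}\}$ in the completed graph after the append, which forces $a,b\le 1$. Hence the integrand has only simple poles at $0$ and $1$, partial fractions produce only the forms $dz/z$, $dz/(z-1)$ and their conjugates, and the two integrations genuinely append two letters, raising the weight by exactly two. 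Second, your final argument that (G3) ``excludes any rational decoration'' is cleaner if phrased constructively: since the last non-permutation move in building a $G$ without external--external edges must be an append, and since that double integration of simple-pole data lands directly in the pure single-valued multiple polylogarithms (no integration-by-parts residues), the rational factors are absent by construction rather than eliminated after the fact.
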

The proof of the theorem will be in \cite{gf2}. Note that the theorem is consistent with Example \ref{ex1}.

In $d=4-\epsilon$ dimensions edges can still be appended if one expands $f^{(\lambda)}_G$ into a Laurant series in $\epsilon$ to a given order. The procedure, however, is
more subtle, see Sect.\ \ref{dimreg}.

\subsection{Identities}
There exist many identities for graphical functions. A Fourier-identity relating planar duals was proved in \cite{par}. Like in the case of periods there also exists a twist identity.
Some complicated graphical functions can be calculated by a Gegenbauer technique \cite{gf,gf2}. All known identities are included in {\tt HyperlogProcedures} which (among other things)
can calculate many graphical functions \cite{Hyperlogproc}. If a graphical function is inaccessible to all of these methods then sometimes it can still be calculated by
parametric integration, due to F. Brown \cite{BrH1,BrH2} and E. Panzer \cite{Panzer:PhD,Panzer:HyperInt}.

In \cite{fishnet} B. Basso and L. J. Dixon provide an intriguingly simple formula for graphical functions of the $(m,n)$ `fishnet' topology. The result is conjectured from
methods and properties of $N=4$ Supersymmetric Yang-Mills Theory. For $m,n>1$ the fishnet topology is inaccessible to the tools presented here. This indicates that much more
powerful methods may exist for the calculation of graphical functions.

\subsection{From graphical functions to periods}
\begin{figure}
\begin{center}
\fcolorbox{white}{white}{
  \begin{picture}(336,75) (24,-38)
    \SetColor{Black}
    \SetWidth{0.8}
    \Vertex(53,33){2.8}
    \Vertex(64,-13){2.8}
    \Vertex(40,-13){2.8}
    \Vertex(76,5){2.8}
    \Vertex(32,24){2.8}
    \Vertex(71,24){2.8}
    \Vertex(28,3){2.8}
    \Arc(52,9)(24.413,145,505)
    \Arc(86.871,38.557)(34.324,-170.683,-108.465)
    \Arc(85.63,2.475)(26.026,124.203,213.79)
    \Arc(18.357,-68.443)(71.964,51.658,83.102)
    \Arc(15.214,0.679)(28.734,-26.183,54.255)
    \Arc(10.5,43)(43.661,-66.371,-13.241)
    \Arc(51,61.5)(42.5,-118.072,-61.928)
    \Arc(73.318,-34.136)(39.172,87.539,145.591)
    \Text(44,-35)[lb]{\normalsize{\Black{$\overline{Z_5}$}}}

    \Vertex(133,33){2.8}
    \Vertex(132,-16){2.8}
    \Vertex(158,9){2.8}
    \Vertex(108,9){2.8}
    \Vertex(150,26){2.8}
    \Vertex(151,-8){2.8}
    \Vertex(115,-8){2.8}
    \Vertex(116,26){2.8}
    \Arc(133,9)(24.413,145,505)
    \Arc(162.86,39.265)(30.51,-168.151,-97.268)
    \Arc(173.5,8.5)(28.504,142.125,217.875)
    \Arc(168.6,-28.567)(39.034,105.757,159.662)
    \Arc(131.231,-54.462)(50.493,66.95,109.953)
    \Arc(97.667,10)(24.333,-41.112,41.112)
    \Arc[clock](106.1,-16.8)(25.962,83.587,3.976)
    \Arc(97.4,46.567)(39.034,-74.243,-20.338)
    \Arc(133.405,57.177)(36.204,-120.555,-62.718)
    \Text(129,-35)[lb]{\normalsize{\Black{$\overline{Z_6}$}}}

    \Vertex(191,-17){2.8}
    \Vertex(207,6){2.8}
    \Vertex(214,-17){2.8}
    \Vertex(231,6){2.8}
    \Vertex(236,-17){2.8}
    \Vertex(252,6){2.8}
    \Line(191,-17)(236,-17)
    \Line(191,-17)(207,6)
    \Line(207,6)(214,-17)
    \Line(214,-17)(231,6)
    \Line(231,6)(236,-17)
    \Line(236,-17)(252,6)
    \Line(252,6)(207,6)
    \Arc[clock](221.5,-5.5)(33.534,-159.944,-339.944)
    \Text(211,-35)[lb]{\normalsize{\Black{$Z_5$}}}

    \Vertex(284,-17){2.8}
    \Vertex(299,6){2.8}
    \Vertex(308,-17){2.8}
    \Vertex(321,6){2.8}
    \Vertex(332,-17){2.8}
    \Vertex(345,6){2.8}
    \Vertex(355,-17){2.8}
    \Line(284,-17)(355,-17)
    \Line(284,-17)(299,6)
    \Line(299,6)(345,6)
    \Line(299,6)(308,-17)
    \Line(308,-17)(321,6)
    \Line(321,6)(332,-17)
    \Line(332,-17)(345,6)
    \Line(345,6)(355,-17)
    \Arc[clock](319.375,-7.6)(36.602,-165.119,-373.267)
    \Text(315,-35)[lb]{\normalsize{\Black{$Z_6$}}}
  \end{picture}
}
\end{center}
\caption{Completed ($\overline{Z_\bullet}$) and uncompleted ($Z_\bullet$) zig-zag graphs with five and six loops.}
\label{fig:zigzag}
\end{figure}
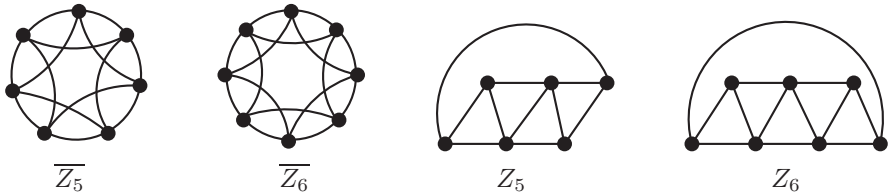

There exist several options to derive Feynman periods from graphical functions.

First, one can specify the variable $z$ in $f_G^{(\lambda)}(z)$ to 0, 1, or $\infty$.
In the case of the zig-zag graphs depicted in Figure \ref{fig:zigzag} this method leads to a proof of a conjecture by D. Broadhurst and D. Kreimer in 1995 \cite{BK}.

\begin{thm}[F. Brown, O. Schnetz, \cite{ZZ}]\label{zigzag}
The period of the graph $Z_n$ is given by
\begin{equation}\label{PZ}
P(Z_n)=4\frac{(2n-2)!}{n!(n-1)!}\Big(1-\frac{1-(-1)^n}{2^{2n-3}}\Big)\zeta(2n-3).
\end{equation}
\end{thm}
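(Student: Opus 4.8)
The plan is to realize each period $P(Z_n)$ as a special value of a graphical function and to compute these graphical functions recursively, as the sentence preceding the statement suggests. First I would check that the uncompleted zig-zag graph $Z_n$ has external vertex width four: starting from the three-edge graphical function of Example \ref{ex1}, every additional zig-zag rung is obtained by permuting the external labels (so as to move the active vertex alternately towards $0$ and $1$) and then appending a single edge to the vertex $z$. Thus $Z_n$ is constructible, and the theorem on graphs of external vertex width four yields $f_{Z_n}^{(1)}(z)=P_n(z)/(z-\zz)$ with $P_n$ a single-valued multiple polylogarithm.

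Next I would turn the appending construction into a recursion for the $P_n$. By Lemma \ref{appendlem} with $\lambda=1$, appending an edge is inverted by single-valued integration, and Lemma \ref{kernellem} guarantees that this inverse is unique on the space of functions with properties (G1)--(G3) of Theorem \ref{generalproperties}. Hence $f_{Z_{n+1}}^{(1)}$ is the unique single-valued primitive of the $\sigma$-transform of $f_{Z_n}^{(1)}$, where $\sigma$ is the external permutation realizing the new rung (Theorem \ref{completionthm}). I would package these relations into a generating series $F(z,t)=\sum_{n}P_n(z)\,t^n$, so that the recursion assembles into a single closed equation for $F$ combining single-valued integration in $z$ with the M\"obius action of the permutations. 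Its single-valued solution should be expressible through elementary single-valued functions of $z$ with algebraic (square-root) dependence on $t$, which is the natural source of the Catalan numbers $\frac{(2n-2)!}{n!(n-1)!}$.

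I would then extract the period by sending $z$ to one of the punctures $0,1,\infty$, where the external vertex collapses onto a neighbour and $f_{Z_n}^{(1)}$ degenerates to $P(Z_n)$; property (G3) of Theorem \ref{generalproperties} controls the coefficient of the expansion that is relevant there. The arithmetic heart of the matter is that the single-valued versions of even Riemann zeta values vanish while the odd ones are doubled, which forces the result to be a rational multiple of the single odd zeta $\zeta(2n-3)$ and accounts for the overall factor $4$.

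The main obstacle will be to solve the generating-function equation in closed form and to read off the exact rational coefficient, in particular the parity-dependent correction $1-\tfrac{1-(-1)^n}{2^{2n-3}}$. The Catalan factor should fall out of the algebraic $t$-dependence of $F$, but the correction distinguishes even from odd $n$ and reflects a boundary contribution of the generating function at a special point (morally at $z=-1$, where single-valued polylogarithms specialize to alternating sums and produce the powers of $2$). Matching this contribution precisely to the claimed denominator $2^{2n-3}$ is where the genuine computation lies.
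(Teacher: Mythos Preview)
The paper does not prove this theorem; it states it with a citation to \cite{ZZ} and indicates only that the period is obtained by specializing the zig-zag graphical function at $z\in\{0,1,\infty\}$. Your overall architecture matches that source: the zig-zag graphs are indeed constructible, each step of the recursion is an external permutation followed by the single-valued inverse of the operator in Lemma~\ref{appendlem} (uniqueness via Lemma~\ref{kernellem}), the period is read off at a puncture, and the vanishing of single-valued even zeta values is exactly why a single odd $\zeta(2n-3)$ survives. So the outer shell of your plan is right.

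The gap is in your computational core. You propose to package the recursion into a generating series $F(z,t)$ and hope for a closed form with square-root dependence on $t$, from which the Catalan factor would fall out. Single-valued double integration intertwined with a M\"obius transformation is not the kind of operator that produces algebraic generating functions, and \cite{ZZ} does not proceed this way. There the zig-zag graphical functions are written explicitly as finite integer linear combinations of single-valued multiple polylogarithms indexed by words in the letters $0$ and $1$; the appending-edge recursion becomes a purely combinatorial recursion on those words, and the central binomial coefficient $\binom{2n-2}{n-1}$ arises from counting the admissible words (essentially a lattice-path count). The parity correction $1-\tfrac{1-(-1)^n}{2^{2n-3}}$ then comes out of the explicit evaluation of the surviving depth-one single-valued polylogarithms at the puncture, not from a separate boundary contribution at $z=-1$. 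So while your plan is pointed in the right direction, the mechanism you propose for the actual computation is speculative, and the combinatorial engine that is known to deliver the exact rational coefficient is missing from your outline.
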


Second, in integer dimensions $d\geq3$ one can integrate $f_G^{(\lambda)}(z)$ over the external variable $z$ \cite{gf}.
This effectively makes $z$ an internal variable. For a general graph in integer dimensions this is the best method to calculate Feynman periods.
It is used in \cite{Hyperlogproc}.

Third, in the case of dimensional regularization we need to treat the dimension $d$ as a parameter and expand periods in $4-\epsilon$ `dimensions' at
$\epsilon=0$.
Using the second method one obtains an integration measure of $(z-\zz)^{2-\epsilon}$ which does not expand in $\epsilon$ into GSVHs (see Sect.\ \ref{GSVH}).
One can resort to the following procedure to integrate over $z$:
Add an edge of weight $-1$ between the external vertices 0 and $z$. Then, append an edge to the vertex $z$ creating a new vertex $z$.
Set $z=0$ so that the newly appended edge cancels the previously added edge of weight $-1$. This integrates over $z$ in any possibly non-integer dimension $d$.
This method is used for perturbative calculations in dimensionally regularized $\phi^4$ theory (see Sect.\ \ref{phi4}) \cite{Hyperlogproc}. It is also possible to devise a more direct
method to perform $4-\epsilon$ dimensional integrations (private communication with E. Panzer). However, this is conceptually more demanding and in practice the
method that appends an edge works quite well.

\subsection{Generalized single-valued hyperlogarithms}\label{GSVH}

Many graphical functions can be expressed in terms of iterated integrals \cite{Chen}. Although, by (G2) of Sect.\ \ref{generalproperties}, graphical functions have only
singularities at 0, 1, and $\infty$ it turns out that single-valued multiple polylogarithms \cite{BrSVMP} (i.e.\ letters 0 and 1 in the iterated
integrals) are too restrictive. Let us consider the following example (which shows the same mechanism although it is not a graphical function).

\begin{ex}\label{exGSVH}
let $r>0$. Define
\begin{equation}\label{GSVHex}
f(z)=\frac{\log(z\zz/r)}{z-r/\zz}.
\end{equation}
Then $f$ is real analytic on $\overline\CC\setminus\{0,\infty\}$ because the zero locus $z=r/\zz$ in the denominator is canceled by the numerator.
By general principles, there should exist a single-valued primitive of $f$ which also is real analytic on $\overline\CC\setminus\{0,\infty\}$.
By single-valuedness the primitive is determined up to a rational function in $\zz$. It is unique in the space of hyperlogarithms which vanish at 0.
\end{ex}

We generalize the above example in the sense that we consider bilinear denominators in $z$ and $\zz$ (i.e.\ $a+bz+c\zz+dz\zz$, with $a,b,c,d\in\CC$).
We require that non point-like zero loci in the denominator are canceled by the numerator. This leads to functions with point-like singularities, so that single-valuedness makes sense.
We can construct single-valued primitives of single-valued functions leading to the space of generalized single-valued hyperlogarithms (GSVHs).

There also exist GSVHs with no singularities on the complex plane. A trivial example of this type is $\log(1+z\zz/r)$ for $r>0$. In the context of graphical functions,
however, we have GSVHs of the type in Example \ref{exGSVH}. A QFT example of weight three is the single-valued primitive of $4\ii D(z)/(z-\zz)$ where $D$ is the Bloch-Wigner dilogarithm
(\ref{D}) (see also \cite{Duhr}).

Let $\sG$ be the space of GSVHs. Upon differentiation with respect to $z$ and $\zz$ we obtain the spaces
$\partial_z\sG$, $\partial_{\zz}\sG$, $\partial_{\zz}\partial_z\sG$. For example, $f$ in (\ref{GSVHex}) is in $\partial_z\sG$.
We would like to construct an algorithm for single-valued integration in $\partial_z\sG$.

Functions in $\partial_z\sG$ have expansions (\ref{01expansion}) and (\ref{inftyexpansion}) for general $z_0\in\overline\CC$.
The holomorphic residue res$_{z_0}$ is the coefficient $c_{0,-1,0}(z_0)$, whereas the anti-holomorphic residue $\overline{\mathrm{res}}_{z_0}$ is the coefficient $c_{0,0,-1}(z_0)$.
let $\pi_0$ ($\overline{\pi_0}$) be the projection onto the (anti-)residue free part,
\begin{eqnarray*}
\pi_0:\partial_z\sG\to\partial_z\sG&,&\quad f(z)\mapsto f(z)-\sum_{z_0\in\CC}\frac{\mathrm{res}_{z_0}(f)}{z-z_0},\\
\overline{\pi_0}:\partial_{\zz}\sG\to\partial_{\zz}\sG&,&\quad f(z)\mapsto f(z)-\sum_{z_0\in\CC}\frac{\overline{\mathrm{res}}_{z_0}(f)}{\zz-\overline{z_0}}.
\end{eqnarray*}

An efficient method to obtain single-valued primitives relies on the commutative hexagon in Figure \ref{fig:GSVH}, where $\int_{\mathrm{sv}}$ stands for single-valued integration.

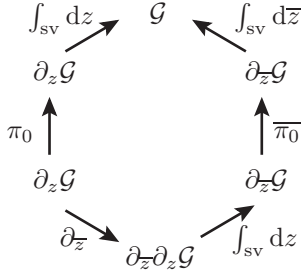
\begin{figure}
\begin{center}
\fcolorbox{white}{white}{
  \begin{picture}(160,135) (19,-11)
    \SetWidth{1.0}
    \SetColor{Black}
    \Line[arrow,arrowpos=1,arrowlength=5,arrowwidth=2,arrowinset=0.2](48,38)(48,56)
    \Line[arrow,arrowpos=1,arrowlength=5,arrowwidth=2,arrowinset=0.2](128,38)(128,56)
    \Line[arrow,arrowpos=1,arrowlength=5,arrowwidth=2,arrowinset=0.2](54,17)(71,7)
    \Line[arrow,arrowpos=1,arrowlength=5,arrowwidth=2,arrowinset=0.2](122,77)(105,87)
    \Line[arrow,arrowpos=1,arrowlength=5,arrowwidth=2,arrowinset=0.2](105,7)(122,17)
    \Line[arrow,arrowpos=1,arrowlength=5,arrowwidth=2,arrowinset=0.2](54,77)(71,87)
    \Text(42,24)[lb]{\Black{$\partial_z\sG$}}
    \Text(122,24)[lb]{\Black{$\partial_{\zz}\sG$}}
    \Text(122,65)[lb]{\Black{$\partial_{\zz}\sG$}}
    \Text(42,65)[lb]{\Black{$\partial_z\sG$}}
    \Text(32,43)[lb]{\Black{$\pi_0$}}
    \Text(133,43)[lb]{\Black{$\overline{\pi_0}$}}
    \Text(40,84)[lb]{\Black{$\int_{\mathrm{sv}}\dd z$}}
    \Text(118,84)[lb]{\Black{$\int_{\mathrm{sv}}\dd\zz$}}
    \Text(77,-5)[lb]{\Black{$\partial_{\zz}\partial_z\sG$}}
    \Text(86,88)[lb]{\Black{$\sG$}}
    \Text(52,2)[lb]{\Black{$\partial_{\zz}$}}
    \Text(117,0)[lb]{\Black{$\int_{\mathrm{sv}}\dd z$}}
  \end{picture}
}
\end{center}
\caption{The inductive construction of GSVHs by a commutative hexagon.}
\label{fig:GSVH}
\end{figure}

\begin{thm}
The diagram in Figure \ref{fig:GSVH} commutes.
\end{thm}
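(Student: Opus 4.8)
The plan is to show that the two directed paths around the hexagon, both running from the lower-left copy of $\partial_z\sG$ to the apex $\sG$, induce the same map; this single equality is the whole content of commutativity. Write $L(f)=\int_{\mathrm{sv}}(\pi_0 f)\,\dd z$ for the left composite and $R(f)=\int_{\mathrm{sv}}\bigl(\overline{\pi_0}\,g\bigr)\,\dd\zz$ with $g:=\int_{\mathrm{sv}}(\partial_{\zz}f)\,\dd z$ for the composite around the bottom. Both land in $\sG$, so by (G2) their difference is single-valued and real-analytic on $\overline\CC\setminus\{0,1,\infty\}$. The strategy is the standard one for primitives: show that $\partial_z$ and $\partial_{\zz}$ agree on $L(f)$ and $R(f)$, so that $L(f)-R(f)$ is annihilated by both derivatives and is therefore constant on the connected domain $\overline\CC\setminus\{0,1,\infty\}$, and then fix that constant to zero using the normalization (cf.\ Example \ref{exGSVH}) that single-valued primitives vanish at $0$.

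Next I would record the facts that drive the derivative computation. First, single-valued integration is a section of the matching derivative on inputs that are free of the obstructing residue, so $\partial_z L(f)=\pi_0 f$ and $\partial_{\zz}R(f)=\overline{\pi_0}\,g$, while $\partial_z g=\partial_{\zz}f$. Second, the holomorphic residues removed by $\pi_0$ are $\zz$-independent constants and the anti-holomorphic residues removed by $\overline{\pi_0}$ are $z$-independent constants, so $\partial_{\zz}\pi_0=\partial_{\zz}$ and $\partial_z\overline{\pi_0}=\partial_z$ on the relevant spaces. The mixed derivatives then match: $\partial_{\zz}\partial_z L(f)=\partial_{\zz}(\pi_0 f)=\partial_{\zz}f$, whereas $\partial_{\zz}\partial_z R(f)=\partial_z\partial_{\zz}R(f)=\partial_z(\overline{\pi_0}\,g)=\partial_z g=\partial_{\zz}f$. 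Consequently $\partial_z L(f)$ and $\partial_z R(f)$ carry the same $\zz$-derivative, and by the same two identities $\partial_{\zz}L(f)$ and $\partial_{\zz}R(f)$ carry the same $z$-derivative.

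The hard part is upgrading ``same transverse derivative'' to genuine equality, and this is where single-valuedness enters. The key structural input is that the single-valued $z$-primitive of a holomorphic-residue-free function returns an element of $\sG$ whose $\zz$-derivative is anti-residue-free, and dually; granting this, $\partial_z L(f)=\pi_0 f$ and $\partial_z R(f)$ are both holomorphic-residue-free, while $\partial_{\zz}L(f)$ and $\partial_{\zz}R(f)=\overline{\pi_0}\,g$ are both anti-residue-free. Therefore $\partial_z R(f)-\pi_0 f$ is holomorphic, residue-free, single-valued, and vanishes at $\infty$ by the expansion (\ref{inftyexpansion}); a Liouville argument exactly as in Lemma \ref{kernellem} forces it to be $0$, giving $\partial_z L(f)=\partial_z R(f)$, and the conjugate argument gives $\partial_{\zz}L(f)=\partial_{\zz}R(f)$. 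With both partial derivatives equal, $L(f)-R(f)$ is constant and the normalization at $0$ finishes the proof. I expect the main obstacle to be precisely the structural input on residues: one must verify that stripping the holomorphic residues before integrating in $z$ on the left path is compensated exactly by the anti-residue projection $\overline{\pi_0}$ on the right path, which is the balancing that makes the two single-valued primitives coincide.
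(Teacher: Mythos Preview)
The paper does not actually prove this theorem: immediately after the statement it says ``The proof of the theorem will be in \cite{GSVH}.'' So there is no argument in the paper to compare your proposal against; what follows is an assessment of your sketch on its own.

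Your strategy---show $\partial_z L(f)=\partial_z R(f)$ and $\partial_{\zz}L(f)=\partial_{\zz}R(f)$, then kill the constant by normalization---is sound in outline, and the mixed-derivative computation $\partial_{\zz}\partial_z L(f)=\partial_{\zz}f=\partial_{\zz}\partial_z R(f)$ is correct. But you repeatedly borrow properties that the paper states only for \emph{graphical functions}, not for general GSVHs: (G2) in Theorem~\ref{generalproperties}, the expansion (\ref{inftyexpansion}) with its bounds $M_\infty,N_\infty<-\lambda\nu_z^<$, and Lemma~\ref{kernellem}. GSVHs need not have their singularities confined to $\{0,1,\infty\}$ (the defining bilinear denominators $a+bz+c\zz+dz\zz$ can vanish anywhere; cf.\ Example~\ref{exGSVH}), and there is no graph datum $\nu_z^<$ to feed into your Liouville step. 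You would need GSVH-specific growth and expansion statements to make the ``vanishes at $\infty$, hence zero'' argument go through.

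The second gap is the one you yourself flag: that $\partial_z R(f)$ is holomorphic-residue-free. Without this, $\partial_z R(f)-\pi_0 f$ is a single-valued meromorphic function rather than an entire one, and Liouville does nothing. You describe this as a ``structural input'' and the ``main obstacle,'' but it is essentially the content of the theorem: the right-hand route was engineered so that the anti-residue projection $\overline{\pi_0}$ after the bottom $z$-integration exactly compensates for what $\pi_0$ strips on the left. Establishing that residue balance (for instance, by tracking how $\int_{\mathrm{sv}}(\partial_{\zz}f)\,\dd z$ reconstructs $f$ up to a function whose residues are absorbed by $\overline{\pi_0}$ after the $\zz$-integration) is where the real work lies, and your sketch does not supply it.
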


The proof of the theorem will be in \cite{GSVH}. By virtue of Figure 6 we can express a single-valued primitive with respect to $z$ also as a single-valued primitive with respect
to $\zz$. Because single-valued integration in $z$ (resp.\ $\zz$) equals ordinary integration up to an anti-holomorphic (resp.\ holomorphic) function,
knowing both integrands determines the single-valued primitive up to a constant (which is fixed by the condition that the single-valued primitive vanishes at $z=0$).
Using integration by parts at the bottom right arrow in Figure 6, we can reduce single-valued integration to lower weights.

If $f\in\partial_z\sG$ has weight 0 then the integration from $\partial_{\zz}\partial_z\sG$ to $\partial_{\zz}\sG$ is purely rational. The single-valued integration of the
residues (which has to be done separately) is trivial: The single-valued primitive of $1/(z-c)$ is $\log[(z-c)(\zz-\overline{c})]$ for any $c\in\CC$.

\begin{ex}[Example \ref{exGSVH} continued]
In terms of iterated integrals (writing from right to left) we obtain for the numerator of the integrand
$$
\log(z\zz/r)=I(z,0,0)+I(\zz,0,0)-I(r,0,0).
$$
The single-valued primitive of $f$ has the general form (note that $f$ is residue-free)
\begin{equation}\label{exeq1}
\int_{\mathrm{sv}}f(z)\dd z=I(z,r/\zz,0,0)+I(z,r/\zz,0)[I(\zz,0,0)-I(r,0,0)]+g(\zz)
\end{equation}
for some anti-holomorphic $g$. Differentiation with respect to $\zz$ yields
$$
\partial_{\zz}f(z)=\frac{1}{z\zz-r}-r\frac{\log(z\zz/r)}{(z\zz-r)^2}.
$$
Using integration by parts on the second term we find
$$
\int_{\mathrm{sv}}-r\frac{\log(z\zz/r)}{(z\zz-r)^2}\dd z=\frac{r\log(z\zz/r)}{\zz(z\zz-r)}-\int_{\mathrm{sv}}\frac{r}{z\zz(z\zz-r)}\dd z.
$$
Adding the two terms which remain to be integrated, the factor $(z\zz-r)$ cancels (it has to) and we obtain
$$
\int_{\mathrm{sv}}\partial_{\zz}f(z)\dd z=\frac{r\log(z\zz/r)}{\zz(z\zz-r)}+\frac{\log(z\zz)}{\zz}.
$$
In fact, there is an ambiguity in form of a rational function in $\zz$. However, because the result has to be in $\partial_{\zz}\sG$, the ambiguity can only
be an anti-residue and it is removed by the projection $\overline{\pi_0}$. The above expression has an anti-residue at $z=0$ with value $\log(r)$. Subtraction yields
$$
\overline{\pi_0}\int_{\mathrm{sv}}\partial_{\zz}f(z)\dd z=\frac{\log(z\zz/r)}{\zz-r/z}.
$$
Using the commutative hexagon we obtain by integration with respect to $\zz$,
\begin{equation}\label{exeq2}
\int_{\mathrm{sv}}f(z)\dd z=I(\zz,r/z,0,0)+I(\zz,r/z,0)[I(z,0,0)-I(r,0,0)]+h(z)
\end{equation}
for some holomorphic function $h$. If we write (\ref{exeq2}) as hyperlogarithms in $z$ with coefficients which are hyperlogarithms in $\zz$ we get (\ref{exeq1}) with $h(z)$
instead of $g(\zz)$. (Alternatively we may treat $z$ and $\zz$ as independent variables in (\ref{exeq2}) and consider the limit $z\to0$.)
We conclude that in this example $h(z)=g(\zz)$ is a constant. This constant is zero because the single-valued integral is required to vanish at $z=0$.
\end{ex}

At four dimensions some (few) graphical functions exist which can be expressed in terms of ordinary single-valued multiple polylogarithms (see e.g.\ \cite{ZZ}).
A large majority of graphical functions which can be expressed in terms of iterated integrals are GSVHs which are not single-valued multiple polylogarithms.
In $4-\epsilon$ `dimensions' every non-trivial graphical function expands in $\epsilon$ with coefficients which are not single-valued multiple polylogarithms (but often GSVHs).

\subsection{$4-\epsilon$ dimensions}\label{dimreg}
We can use the parametric representation of graphical functions \cite{par} to define graphical function for non-integer $d$. Using $4-\epsilon$ `dimensions' regularizes graphical
functions: Graphical functions that diverge in 4 dimensions may (and generically do) exist in $4-\epsilon$ dimensions.

Although general properties (G1) and (G2) in Theorem \ref{generalproperties} remain valid, graphical functions in non-integer dimensions can hardly be calculated in terms of known functions.
However, in QFT it suffices to know their Laurant expansions at $\epsilon=0$ to some (small) order in $\epsilon$. For the Laurant coefficients
(G3) holds and often they can be expressed in terms of GSVHs.

The main tool for constructing these coefficients is again appending edges (see Sect.\ \ref{append}).
Equation (\ref{diffeq}) can be solved iteratively in powers of $\epsilon$. However, in this approach we cannot directly use Lemma \ref{kernellem} to avoid the kernel of the
differential operator. We first need to subtract poles at $z=z_0$, $z_0\in\{0,1\}$, which are of order four (or higher) in $|z-z_0|$. (Using completion an analogous subtraction
is necessary at $z_0=\infty$.) For these singular contributions we need exact results; it is not sufficient to know them to a limited order in $\epsilon$.
In a renormalizable QFT we only have to deal with `logarithmic' singularities. I.e.\ the case of poles of order four suffices.
This leading order asymptotic behavior of graphical functions is obtained by the following result which we leave as a well tested conjecture.

\begin{figure}
\begin{center}
\fcolorbox{white}{white}{
  \begin{picture}(352,80) (19,-30)
    \SetWidth{1.0}
    \SetColor{Black}
    \GOval(64,16)(16,16)(0){0.882}
    \GOval(128,16)(16,16)(0){0.882}
    \GOval(240,16)(16,16)(0){0.882}
    \GOval(304,16)(16,16)(0){0.882}
    \SetWidth{2.0}
    \Line(32,16)(48,16)
    \Line(144,16)(160,16)
    \Line(208,16)(224,16)
    \Line(320,16)(336,16)
    \Line(240,0)(240,-16)
    \Line(304,0)(304,-16)
    \Line(64,0)(96,-16)
    \Line(128,0)(96,-16)
    \Line(80,16)(112,16)
    \Line(304,-16)(336,16)
    \Vertex(32,16){2.8}
    \Vertex(96,-16){2.8}
    \Vertex(240,-16){2.8}
    \Vertex(304,-16){2.8}
    \Vertex(160,16){2.8}
    \Vertex(208,16){2.8}
    \Vertex(336,16){2.8}
    \Arc[clock](86.222,-13.333)(61.648,151.587,47.337)
    \Arc[clock](106,-12)(60.828,133.668,27.408)
    \Arc(312,8)(25.298,161.565,251.565)
    \Arc(296,8)(25.298,71.565,288.435)
    \Text(172,9)[lb]{\LARGE\Black{$\sim\sum$}}
    \Text(94,-29)[lb]{\Black{$0$}}
    \Text(238,-29)[lb]{\Black{$0$}}
    \Text(303,-29)[lb]{\Black{$0$}}
    \Text(29,5)[lb]{\Black{$z$}}
    \Text(206,5)[lb]{\Black{$z$}}
    \Text(159,4)[lb]{\Black{$1$}}
    \Text(335,4)[lb]{\Black{$1$}}
  \end{picture}
}
\end{center}
\caption{The asymptotic expansion of graphical functions at $z=0$. The bold lines stand for sets of edges.}
\label{fig:asympt}
\end{figure}
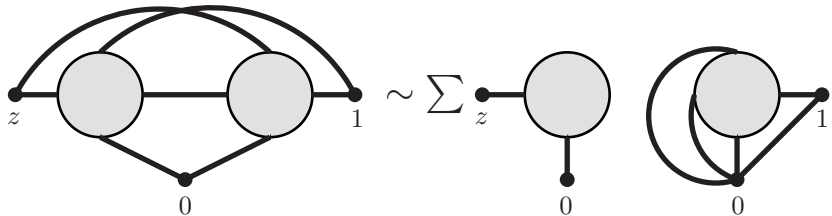

\begin{conj}\label{asymptconj}
Let $G$ be a graph with $\sV^{\mathrm{int}}$ internal and $\sV^{\mathrm{ext}}=\{0,1,z\}$ external vertices such that the graphical function $f_G^{(\lambda)}$ exists.
Let $G[V]$ be the subgraph of $G$ which is induced by $V$, i.e.\ the subgraph which contains the vertices $V$ and all edges of $G$ with both vertices in $V$.
Further let $z_0\in\{0,1\}$ and $G[V=z_0]$ be the graph $G/G[V]$ where one identifies all vertices in $V$ with the vertex $z_0$. Then (see Figure \ref{fig:asympt})
we obtain the asymptotic expansions at $z=z_0$ by
\begin{equation}\label{01asympt}
f_G^{(\lambda)}(z)=\sum_{V\subseteq\sV^{\mathrm{int}}}f_{G[V\cup\{z_0,z\}]}^{(\lambda)}(z)f_{G[V\cup\{z_0,z\}=z_0]}^{(\lambda)}(1+O(|z-z_0|^2))
\end{equation}
whenever the right hand side exists. The asymptotic expansion at $z=\infty$ is given by
\begin{equation}\label{inftyasympt}
f_G^{(\lambda)}(z)=\sum_{V\subseteq\sV^{\mathrm{int}}}f_{G[V\cup\{0,1\}]}^{(\lambda)}f_{G[V\cup\{0,1\}=0]}^{(\lambda)}(z)(1+O(|z|^{-2}))
\end{equation}
whenever the right hand side exists.
\end{conj}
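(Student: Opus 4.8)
The plan is to prove the factorization by a position-space expansion by regions, exactly matching the clustering picture of Figure \ref{fig:asympt}, and to reduce the three base points to a single case using the symmetries of Theorem \ref{completionthm}. First I would treat $z_0=0$. The case $z_0=1$ then follows from the double transposition $0\leftrightarrow1$, $z\leftrightarrow\infty$ of Theorem \ref{completionthm}, which acts on the argument by $z\mapsto 1-z$ and hence maps the limit $z\to1$ to $z\to0$. The case $z_0=\infty$ follows by passing to the completion $\overline{G}$, where $\infty$ has the same status as $0$ and $1$, and applying the double transposition exchanging $0\leftrightarrow\infty$; this sends $z\to\infty$ to $z\to0$ and interchanges the roles of the $z$-carrying factor and the constant factor, turning \eqref{inftyasympt} into an instance of \eqref{01asympt}. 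So it suffices to analyze $f_G^{(\lambda)}(z)$ as $z\to0$.

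For $z_0=0$ I would split the $V^{\mathrm{int}}$-fold integration domain by the scale of each internal vertex relative to $|z|$: for $V\subseteq\sV^{\mathrm{int}}$, the region $R_V$ is where the vertices of $V$ sit at distance $O(|z|)$ from the origin and the remaining internal vertices at distance $O(1)$. On $R_V$ I rescale $x_v=|z|\,y_v$ for $v\in V$. Under this rescaling the quadrics $Q_e$ of edges with both endpoints in $V\cup\{0,z\}$ are homogeneous of degree two in $|z|$, and produce exactly the rescaled integrand of the induced two-point graphical function $f_{G[V\cup\{0,z\}]}^{(\lambda)}$, which by \eqref{2external} is a pure power of $\|z\|$. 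Every quadric joining a vertex of $V$ to a vertex outside $V$ (or to the external vertex $1$) has one small endpoint $|z|\,y_v$, and Taylor-expanding about $|z|\,y_v=0$ replaces that endpoint by the origin at leading order; collecting these leading factors yields precisely the integrand in which the whole cluster $V\cup\{0,z\}$ is contracted to $0$, i.e.\ the period $f_{G[V\cup\{0,z\}=0]}^{(\lambda)}$. Thus the leading contribution of $R_V$ factorizes as the product of the two graphical functions in \eqref{01asympt}.

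Next I would control the remainder. Since the rescaled subintegral is, by \eqref{2external}, an exact power of $\|z\|$ with no subleading terms, every correction to the factorized leading term arises from Taylor-expanding the soft quadrics that join the cluster $V\cup\{0,z\}$ to the remaining vertices. The first such correction is linear in the small displacements $|z|\,y_v$, and I would show it cancels after integration using the reflection symmetry of property (G1) together with rotational invariance of the contracted integral, leaving the relative remainder $(1+O(|z-z_0|^2))$ of the statement. The qualifier ``whenever the right hand side exists'' is then exactly the requirement that the rescaled subintegral and the contracted complementary integral each converge; by Theorem \ref{gfexistence} these are the power-counting inequalities \eqref{ultraviolet} applied to the induced subgraph $G[V\cup\{0,z\}]$ and to the contracted graph $G[V\cup\{0,z\}=0]$ respectively, and I would argue that regions violating them are scaleless and drop out. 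Summing the factorized leading terms over all admissible $V$ then reproduces \eqref{01asympt}, and \eqref{inftyasympt} follows by the reduction above.

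The hard part will be making the expansion by regions rigorous: proving \emph{completeness} (that only the single-scale regions indexed by subsets $V$ contribute, with no genuinely intermediate scales), proving \emph{absence of double counting} (that the overlap, or zero-bin, subtractions between nested regions cancel), and verifying that the linear soft correction truly vanishes so that the remainder begins at relative order $|z-z_0|^2$ rather than $|z-z_0|$. I expect the cleanest route to rigor is to pass to the parametric representation of \cite{par} and read the expansion off the Newton polytope of the associated graph polynomial: the relevant faces should be in bijection with the subsets $V$, sector decomposition resolves the overlaps, and the leading face contributions factor into the two sub-polynomials, yielding \eqref{01asympt} with a controlled $O(|z-z_0|^2)$ remainder. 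Establishing this polytopal factorization, and checking that the degenerate (divergent) faces are precisely the non-existent terms excluded by the hypothesis, is where the main work lies.
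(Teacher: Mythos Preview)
The statement you are trying to prove is labeled a \emph{Conjecture} in the paper, and the paper does not give a proof. Immediately after stating it, the author remarks that \eqref{01asympt} and \eqref{inftyasympt} ``are formally obtained by rescaling some internal variables $x_i\mapsto x_i|z|$ followed by a naive expansion in the integrand. The sum is over all possible ways to do this.'' That is the entirety of the paper's argument: it is precisely the expansion-by-regions heuristic you describe, presented as a formal manipulation with no error control, no treatment of overlapping regions, and no justification of the $O(|z-z_0|^2)$ remainder. So there is no proof in the paper to compare your proposal against; your first two paragraphs reproduce the paper's heuristic and your third paragraph goes well beyond it.

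Your assessment of where the genuine difficulties lie (completeness of the region decomposition, absence of double counting, vanishing of the linear correction, and the parametric/polytopal route to rigor) is accurate and is exactly why the author left this as a ``well tested conjecture'' rather than a theorem. One small caution on your symmetry reductions: the transposition argument for $z_0=1$ is fine, but the reduction of $z_0=\infty$ to $z_0=0$ via completion is more delicate than you indicate. Edges adjacent to the vertex $\infty$ carry the degenerate propagator $Q_e=1$, so after the double transposition your ``$z\to 0$'' analysis must be carried out for a graph with these degenerate edges, and the rescaling/Taylor step behaves differently on them. It is cleaner to argue $z\to\infty$ directly by rescaling the \emph{complementary} set of internal vertices (those clustering with $0$ and $1$) by $1/|z|$, which is what produces the swapped factorization in \eqref{inftyasympt}; this is also closer to the paper's one-line heuristic.
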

Note that on the right hand side of the above equations one has graphs with two external vertices (see (\ref{2external})).
The calculation of their functions amounts to calculating periods (see Sect.\ \ref{numbers}) which is much simpler than the calculation of graphical functions.
Equations (\ref{01asympt}) and (\ref{inftyasympt}) are formally obtained by rescaling some internal variables $x_i\mapsto x_i|z|$ followed
by a naive expansion in the integrand. The sum is over all possible ways to do this. In (rare) situations the right hand sides may fail to exist
(although working in $4-\epsilon$ dimensions). In these cases we have no result for the asymptotic expansion of the left hand side.

In addition to appending edges there exists a variety of tools that allows one to calculate the $\epsilon$-expansions of graphical functions in $4-\epsilon$ dimensions to low orders
in $\epsilon$ \cite{7phi4}.

\subsection{$\beta$, $\gamma$, $\gamma_m$, and the self-energy in dimensionally regularized $\phi^4$ theory}\label{phi4}
Most efficiently one calculates $\phi^4$ renormalization functions in the minimal subtraction scheme of dimensional regularization \cite{IZ}.
The seven loop $\beta$ function, anomalous dimension $\gamma$, and anomalous mass dimension $\gamma_m$ are (The Feynman graphs were generated with M. Borinsky's program
{\tt feyngen} \cite{feyngen}.)

\begin{eqnarray}
\beta&=&\left(\frac{195654269}{23040}+\frac{15676169}{720}\zeta(3)-\frac{316009}{3840}\pi^4+\frac{18326039}{480}\zeta(5)-\frac{129631}{5040}\pi^6\right.\nonumber\\
&&+\,\frac{516957}{20}\zeta(3)^2-\frac{4453}{60}\pi^4\zeta(3)+\frac{1536173}{20}\zeta(7)-\frac{20425591}{1260000}\pi^8\nonumber\\
&&+\,116973\zeta(3)\zeta(5)+\frac{947214}{25}\zeta(5,3)-\frac{1010}{63}\pi^6\zeta(3)+\frac{613}{5}\pi^4\zeta(5)+4176\zeta(3)^3\nonumber\\
&&+\,\frac{547118}{3}\zeta(9)-\frac{45106}{43659}\pi^{10}-48\pi^4\zeta(3)^2+\frac{84231}{2}\zeta(3)\zeta(7)-\frac{273030}{7}\zeta(5)^2\nonumber\\
&&+\,\frac{8460}{7}\zeta(7,3)-\frac{174}{25}\pi^8\zeta(3)+\frac{6227}{35}\pi^6\zeta(5)-\frac{56043}{25}\pi^4\zeta(7)\nonumber\\
&&-\,504387\pi^2\zeta(9)+46845\zeta(3)^2\zeta(5)+27216\zeta(3)\zeta(5,3)-\frac{336258}{5}\zeta(5,3,3)\nonumber\\
&&\left.+\,\frac{52756839}{10}\zeta(11)+24P_{7,11}\right)g^8\nonumber\\
&&+\left(-\frac{18841427}{11520}-\frac{779603}{240}\zeta(3)+\frac{5663}{480}\pi^4-\frac{63723}{10}\zeta(5)+\frac{6691}{1890}\pi^6-\frac{8678}{5}\zeta(3)^2\right.\nonumber\\
&&+\,\frac{9}{5}\pi^4\zeta(3)-\frac{63627}{5}\zeta(7)+\frac{88181}{78750}\pi^8-4704\zeta(3)\zeta(5)-\frac{51984}{25}\zeta(5,3)-768\zeta(3)^3\nonumber\\
&&\left.-\,\frac{46112}{3}\zeta(9)\right)g^7\nonumber\\
&&+\left(\frac{764621}{2304}+\frac{7965}{16}\zeta(3)-\frac{1189}{720}\pi^4+987\zeta(5)-\frac{5}{14}\pi^6+45\zeta(3)^2+1323\zeta(7)\right)g^6\nonumber\\
&&+\left(-\frac{3499}{48}-78\zeta(3)+\frac{1}{5}\pi^4-120\zeta(5)\right)g^5+\left(\frac{145}{8}+12\zeta(3)\right)g^4-\frac{17}{3}g^3+3g^2\\
&\approx&474651g^8-34776.1g^7+2848.57g^6-271.606g^5+32.5497g^4-5.66667g^3+3g^2,\nonumber
\end{eqnarray}
where $P_{7,11}$ is given in (\ref{711}). This confirms (and goes beyond) a recent six loop result by M.V. Kompaniets and E. Panzer \cite{KP6loopbeta}.

\begin{eqnarray}
\gamma&=&\left(-\frac{214519}{5120}-\frac{52883}{1920}\zeta(3)-\frac{4247}{23040}\pi^4+\frac{8023}{320}\zeta(5)-\frac{71}{1080}\pi^6-\frac{523}{40}\zeta(3)^2\right.\nonumber\\
&&-\,\left.\frac{1}{20}\pi^4\zeta(3)+\frac{3573}{40}\zeta(7)-\frac{2063}{210000}\pi^8+27\zeta(3)\zeta(5)+\frac{162}{25}\zeta(5,3)\right)g^7\nonumber\\
&&+\left(\frac{73667}{9216}+\frac{295}{192}\zeta(3)+\frac{73}{1920}\pi^4-\frac{37}{8}\zeta(5)+\frac{5}{756}\pi^6-\frac{1}{2}\zeta(3)^2\right)g^6\nonumber\\
&&+\left(-\frac{3709}{2304}+\frac{3}{16}\zeta(3)-\frac{1}{180}\pi^4\right)g^5+\frac{65}{192}g^4-\frac{1}{16}g^3+\frac{1}{12}g^2\\
&\approx&-124.159g^7+14.3840g^6-1.92558g^5+0.338542g^4-0.0625g^3+0.0833333g^2.\nonumber
\end{eqnarray}
This confirms (and goes beyond) a recent six loop calculation of D.V. Batkovich, K.G. Chetyrkin, and M.V. Kompaniets \cite{6loopgamma}.

\begin{eqnarray}
\gamma_m&=&\left(-\frac{24838423}{13824}-\frac{2399489}{864}\zeta(3)+\frac{329}{960}\pi^4-\frac{25511}{24}\zeta(5)-\frac{1865}{1134}\pi^6\right.\nonumber\\
&&-\,\frac{140153}{48}\zeta(3)^2-\frac{68}{45}\pi^4\zeta(3)+\frac{46625}{12}\zeta(7)-\frac{83003}{378000}\pi^8-4519\zeta(3)\zeta(5)\nonumber\\
&&-\,\frac{6147}{5}\zeta(5,3)-\frac{412}{189}\pi^6\zeta(3)+\frac{167}{30}\pi^4\zeta(5)+424\zeta(3)^3+\frac{60289}{12}\zeta(9)\nonumber\\
&&\left.-\,\frac{45106}{654885}\pi^{10}-\frac{16}{5}\pi^4\zeta(3)^2+\frac{777}{2}\zeta(3)\zeta(7)+\frac{31778}{7}\zeta(5)^2+\frac{564}{7}\zeta(7,3)\right)g^7\nonumber\\
&&+\left(\frac{7915913}{23040}+\frac{472891}{1440}\zeta(3)+\frac{113}{192}\pi^4+\frac{4019}{40}\zeta(5)+\frac{163}{540}\pi^6+\frac{446}{5}\zeta(3)^2\right.\nonumber\\
&&\left.+\,\frac{3}{5}\pi^4\zeta(3)-\frac{4629}{20}\zeta(7)+\frac{2063}{35000}\pi^8-288\zeta(3)\zeta(5)-\frac{972}{25}\zeta(5,3)\right)g^6\nonumber\\
&&+\left(-\frac{158849}{2304}-\frac{1519}{48}\zeta(3)-\frac{13}{72}\pi^4-\zeta(5)-\frac{5}{126}\pi^6+9\zeta(3)^2\right)g^5\nonumber\\
&&+\left(\frac{477}{32}+\frac{3}{2}\zeta(3)+\frac{1}{30}\pi^4\right)g^4-\frac{7}{2}g^3+\frac{5}{6}g^2-g\nonumber\\
&\approx&-13759.8g^7+1354.64g^6-150.756g^5+19.9563g^4-3.5g^3+0.833333g^2-g.
\end{eqnarray}
This confirms (and goes beyond) a recent six loop result by M.V. Kompaniets and E. Panzer \cite{KP6loopbeta}.

The six loop self-energy $\Sigma$ is
\begin{eqnarray}
\frac{\Sigma(p)}{p^2}&=&\left[-\frac{27}{2}L^5-\frac{3643}{24}L^4+\left(-\frac{648011}{864}-16\zeta(3)\right)L^3\right.\nonumber\\
&&+\,\left(-\frac{291187}{144}-82\zeta(3)-\frac{1}{20}\pi^4-20\zeta(5)\right)L^2\nonumber\\
&&+\,\left(-\frac{1699885}{576}-\frac{32953}{192}\zeta(3)-\frac{11}{48}\pi^4-\frac{211}{4}\zeta(5)-\frac{5}{378}\pi^6+\zeta(3)^2\right)L\nonumber\\
&&-\,\frac{33992153}{18432}-\frac{683389}{4608}\zeta(3)-\frac{18403}{69120}\pi^4-\frac{8681}{192}\zeta(5)\nonumber\\
&&-\,\left.\frac{359}{18144}\pi^6-\frac{83}{48}\zeta(3)^2+\frac{1}{360}\pi^4\zeta(3)+\frac{5}{6}\zeta(7)\right]g^6\nonumber\\
&&+\,\left[\frac{9}{2}L^4+\frac{1375}{36}L^3+\left(\frac{12935}{96}+2\zeta(3)\right)L^2+\left(\frac{8353}{36}+5\zeta(3)+\frac{1}{90}\pi^4\right)L\right.\nonumber\\
&&+\,\left.\frac{1874629}{11520}+\frac{6319}{1440}\zeta(3)+\frac{251}{14400}\pi^4-\frac{1}{5}\zeta(5)\right]g^5\nonumber\\
&&+\,\left[-\frac{3}{2}L^3-\frac{53}{6}L^2-\frac{1867}{96}L-\frac{2017}{128}+\frac{3}{32}\zeta(3)\right]g^4\nonumber\\
&&+\,\left[\frac{1}{2}L^2+\frac{7}{4}L+\frac{167}{96}\right]g^3+\left[-\frac{1}{6}L-\frac{13}{48}\right]g^2,
\end{eqnarray}
where
$$
L=\frac{1}{2}\log\left(\frac{4\pi\Lambda^2}{\exp(C)p^2}\right),
$$
with the renormalization scale $\Lambda$ and Euler-Mascheroni constant $C=0.577\ldots$.
This result confirms an unpublished five loop result by D. Broadhurst for the propagator $1/(p^2-\Sigma(p))$ at $L=0$ \cite{5loopB}.

The analogous results for the $O(n)$-symmetric model have also been calculated. They are available in {\tt HyperlogProcedures} \cite{Hyperlogproc}.
With these results E. Panzer improved his QFT predictions for critical exponents in three dimensional statistical models (private communication, \cite{KP6loopbeta}).

While eight loop calculations of the anomalous dimensions $\gamma$ and $\gamma_m$ seem possible, an eight loop result for the $\beta$ function demands
serious determination. For the time being the author is not pursuing eight loop calculations.

\subsection{The anomalous magnetic moment of the electron}\label{g2}
So far, all results were obtained in the framework of massless bosonic $\phi^4$ theory. How does the picture change for a physical gauge theory with massive fermions?
An excellent test is the QED contribution to the anomalous magnetic moment of the electron $a_e$ where three orders in $\alpha/\pi$ are known \cite{L1}
with a recent partial fourth order result by S. Laporta \cite{Laporta}.

In the $f$ alphabet for MZVs with extensions by all sixth roots of unity we obtain---we use the letters $g^6$ to make the distinction to (\ref{711}) which refers to
a number subset which has no weight 1 letters $g^6_1\cong2\log2$ and $\log3$ (the letter $\log3$ is absent in all known terms of $a_e$):
\begin{eqnarray}\label{ae}
a_e&=&\frac{1}{2}\left(\frac{\alpha}{\pi}\right)+\left(\frac{197}{144}+\frac{1}{12}\pi^2+\frac{27}{32}g^6_3-\frac{1}{4}g^6_1\pi^2\right)\left(\frac{\alpha}{\pi}\right)^2\nonumber\\
&&+\,\left(\frac{28259}{5184}+\frac{17101}{810}\pi^2+\frac{139}{16}g^6_3-\frac{149}{9}g^6_1\pi^2-\frac{525}{32}g^6_1g^6_3+\frac{1969}{8640}\pi^4-\frac{1161}{128}g^6_5\right.\nonumber\\
&&\qquad+\,\left.\frac{83}{64}g^6_3\pi^2\right)\left(\frac{\alpha}{\pi}\right)^3.
\end{eqnarray}
In the $f$ alphabet the Galois coaction (\ref{co}) is deconcatenation. It is therefore easy to read off the Galois conjugates on the right hand side of the tensor product
in the coaction. Up to weight three we only have the four Galois conjugates
$$1,\quad\pi^2,\quad g^6_3,\quad g^6_1\pi^2.$$
Although this list follows from a three loop result it can be conjectured that the list is complete to all loop orders (see e.g.\ \cite{Bcoact1, Bcoact2}).
In general we expect that one is able to extract the complete list of Galois conjugates of weight $\leq n$ from an $n$ loop result.

In \cite{Laporta} Laporta presents an explicit result for the hyperlogarithmic part of the fourth order $a_e$. The conversion into the $f$ alphabet is given in \cite{motg2}.
It is similar to (\ref{ae}) with additional extensions of MZVs by fourth roots of unity. With $f^4_2\cong 2\ii\,$Im$\,\Li_2(\ii)$ we obtain the following six Galois conjugates of weight 4:
$$g^6_4,\quad g^6_1g^6_3,\quad g^6_2\pi^2,\quad f^4_2\pi^2,\quad g^6_1g^6_1\pi^2,\quad \pi^4.$$
The result is preliminary because it does not contain the Galois conjugates of the non-hyperlogarithmic part of $a_e$. It is also possible that Laporta's result misses some
hyperlogarithmic terms (hidden in the non-hyperlogarithmic part). Because the coaction conjectures basically work graph by graph, it still makes sense to analyze the partial
result \cite{motg2}.

The two most remarkable properties of the motivic structure of $a_e$ are
\begin{enumerate}
\item The $\QQ$ vector spaces of Galois conjugates at given weight have very low dimensions. This is a strong sparsity property of QED, similar to the one found for $\sP_{\phi^4}$.
\item The type of numbers in $a_e$ corresponds to what we found in $\sP_{\phi^4}$. We have MZVs ($g^6_3,g^6_5$), Euler sums ($g^6_1,g^6_1g^6_1,g^6_1g^6_3$), extensions by sixth roots
of unity ($g^6_2,g^6_4$), and extensions by fourth roots of unity ($f^4_2$). These correspond to the $c_2$ invariants $-1$, $-z_2$, $-z_3$, $-z_4$ found in $\phi^4$ theory.
There seems to exist nothing else in the polylogarithmic part of $a_e$ up to loop order four. The only difference to massless $\phi^4$ theory is that the numbers come at smaller
loop orders, namely 2,2,4,4 for $a_e$ in contrast to 3,9,7,8 for $\phi^4$.
\end{enumerate}

\newpage
\begin{tabular}{llllllll}
name&graph&numerical value&$|$Aut$|$&index&anc.&$-c_2$&remarks, [Lit]\\[1ex]
\multicolumn{2}{l}{weight}&\multicolumn{6}{l}{exact value}\\[1ex]\hline\hline
$P_1$&\hspace*{-2mm}\raisebox{-9mm}{\includegraphics[width=12mm]{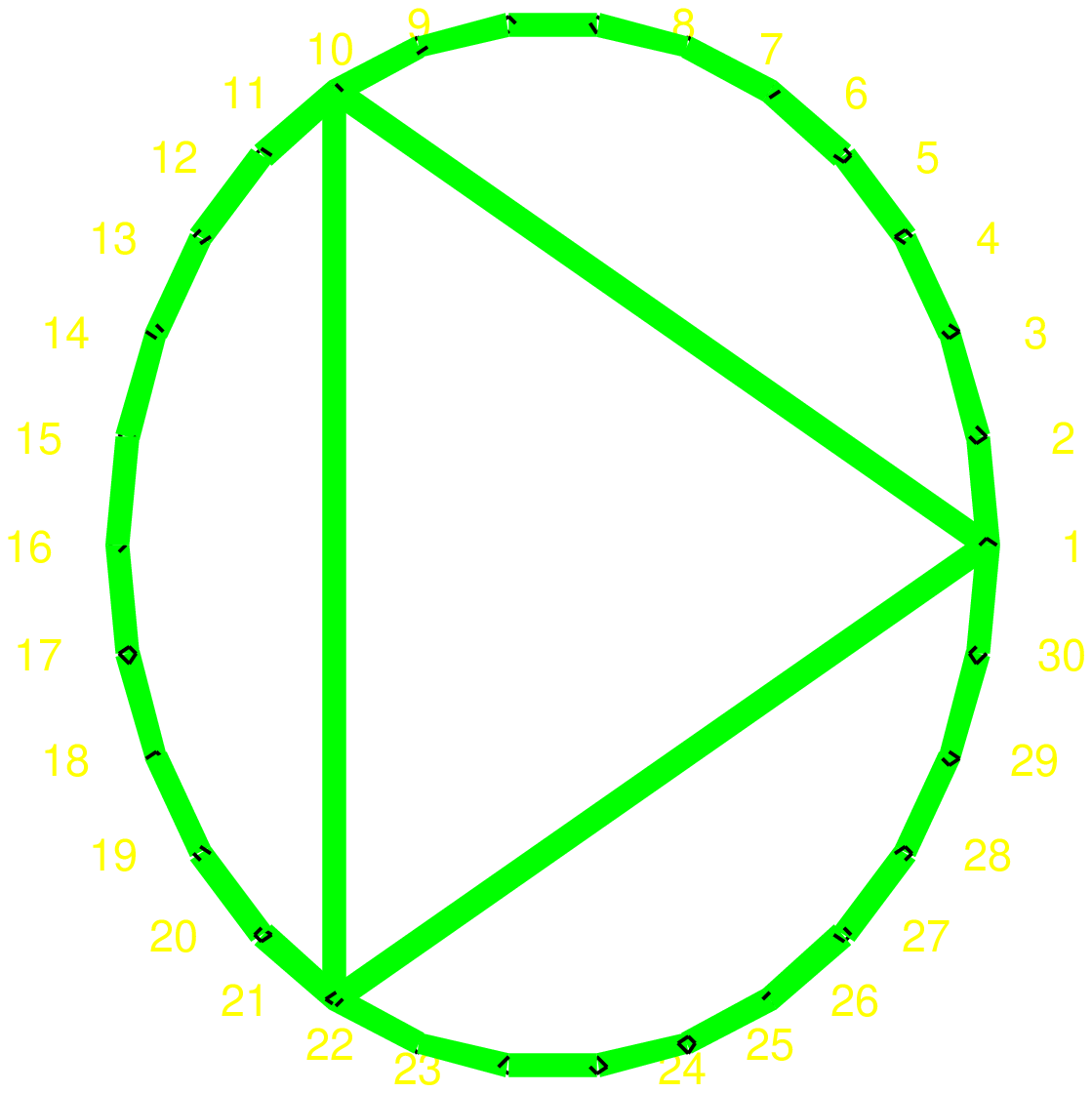}}&1&48&---&$P_1$&---&$C^3_{1,1}$\\[-6mm]
0&&\multicolumn{6}{l}{1}\\[1ex]\hline\hline
$P_3$&\hspace*{-2mm}\raisebox{-9mm}{\includegraphics[width=12mm]{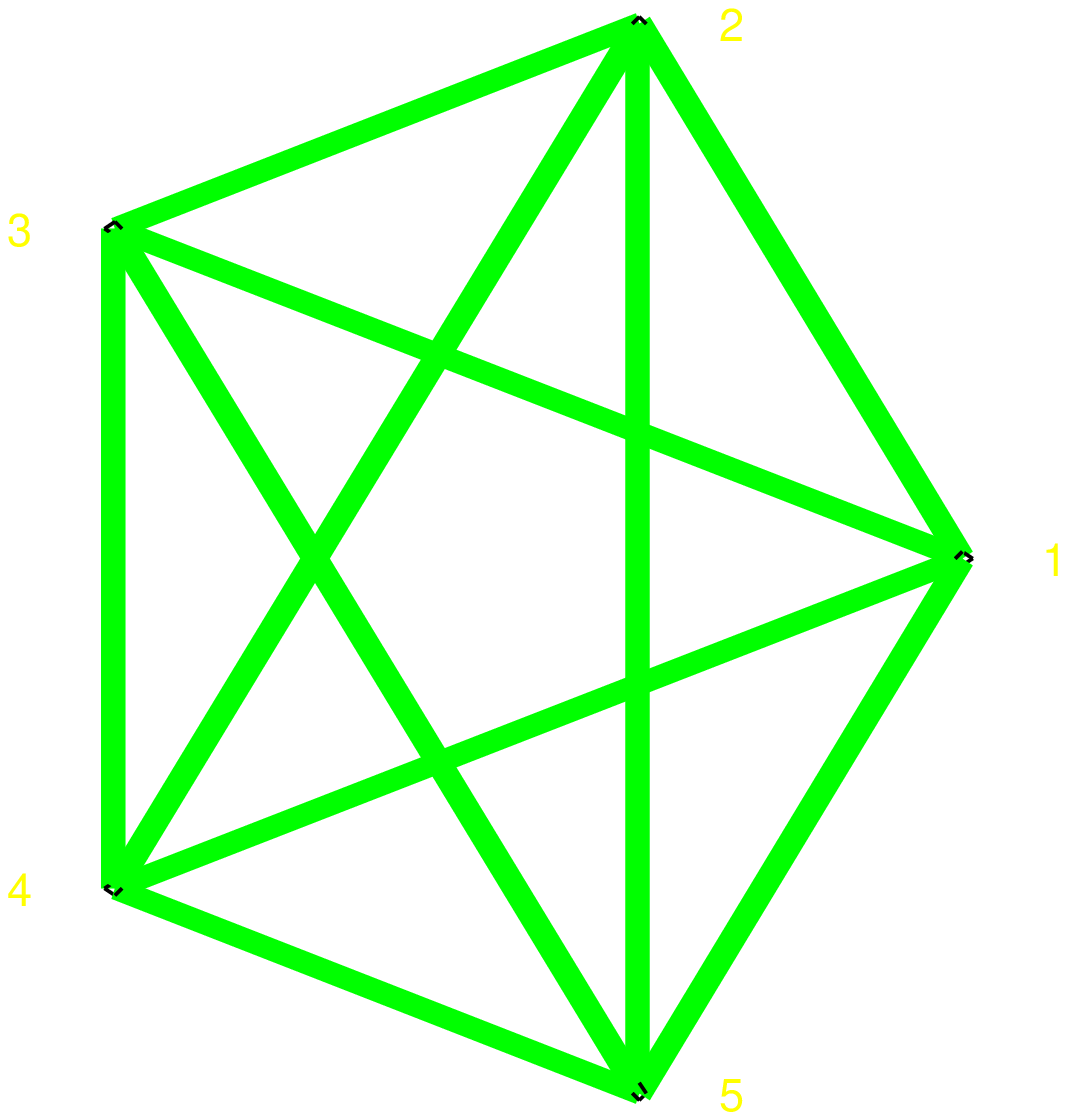}}&7.212~341~418&120&6&$P_3$&1&$C^5_{1,2}$, $K_5$, \cite{C4}\\[-6mm]
3&&\multicolumn{6}{l}{$6Q_3$}\\[1ex]\hline\hline
$P_4$&\hspace*{-2mm}\raisebox{-9mm}{\includegraphics[width=12mm]{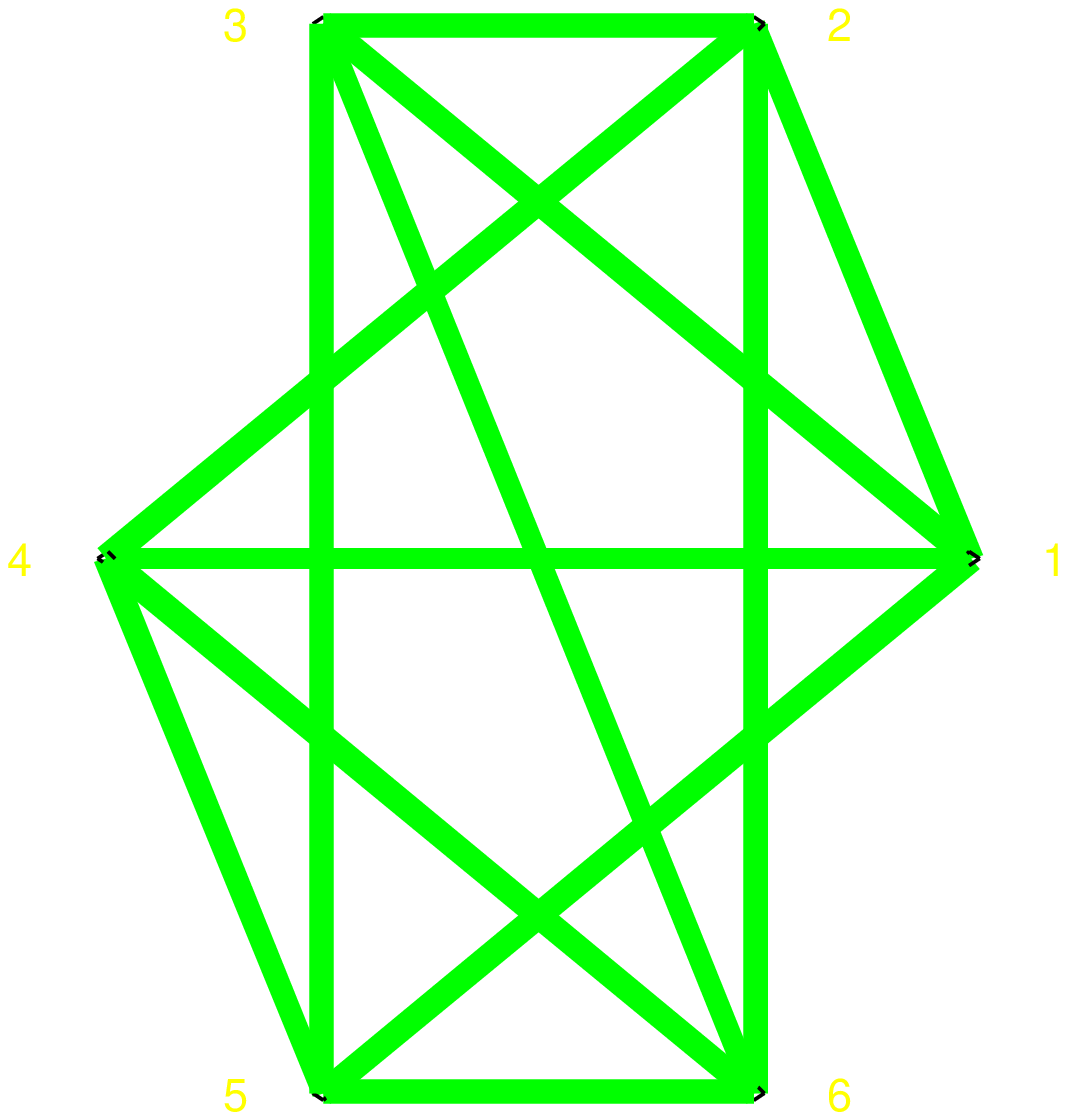}}&20.738~555~102&48&40&$P_3$&1&$C^6_{1,2}$, $O_3$, \cite{C5}\\[-6mm]
5&&\multicolumn{6}{l}{$20Q_5$}\\[1ex]\hline\hline
$P_5$&\hspace*{-2mm}\raisebox{-9mm}{\includegraphics[width=12mm]{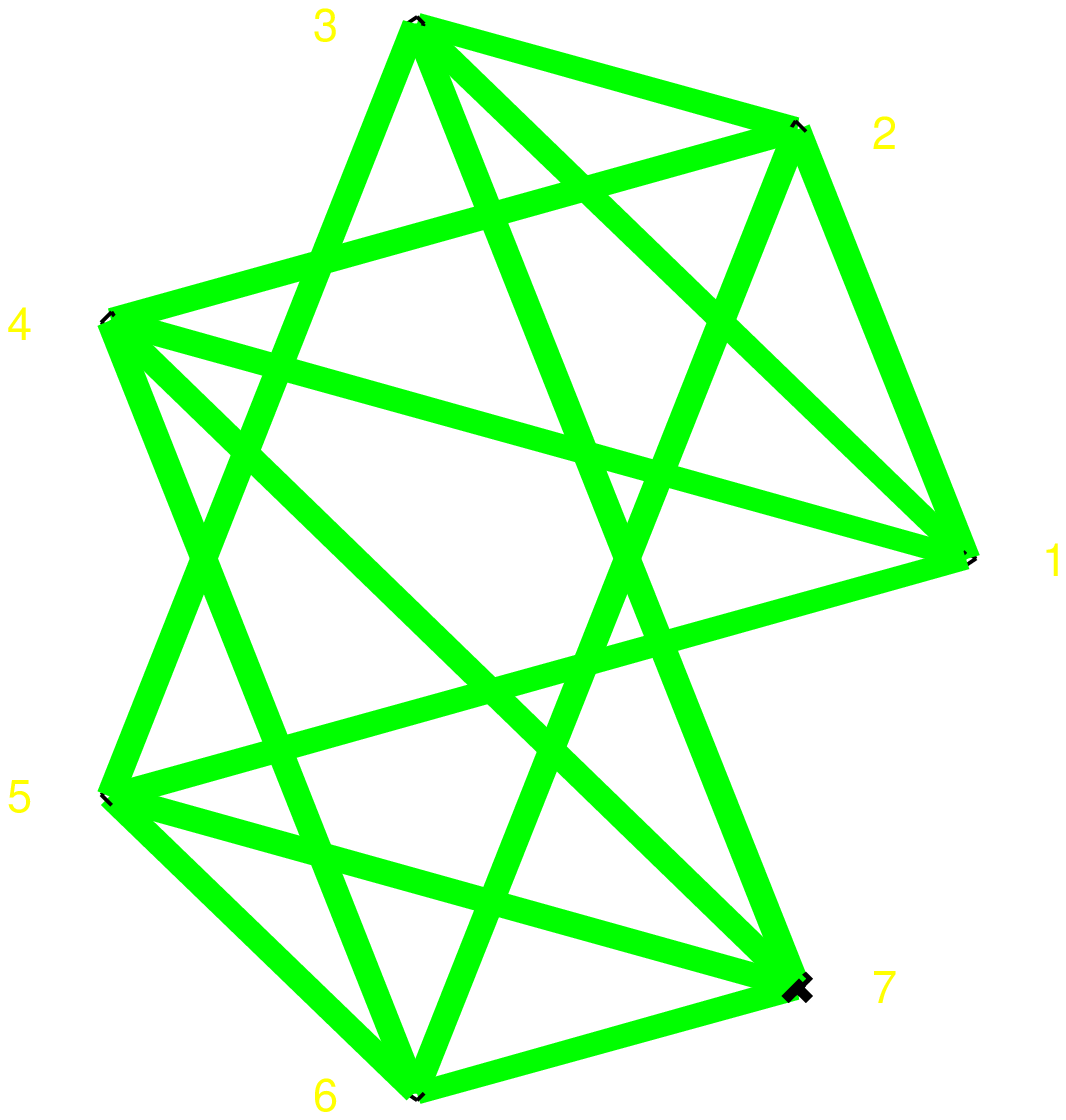}}&55.585~253~915&14&882&$P_3$&1&$C^7_{1,2}$, $\overline{C_7}$, \cite{K2}\\[-6mm]
7&&\multicolumn{6}{l}{$\frac{441}{8}Q_7$}\\[1ex]\hline\hline
$P_{6,1}$&\hspace*{-2mm}\raisebox{-9mm}{\includegraphics[width=12mm]{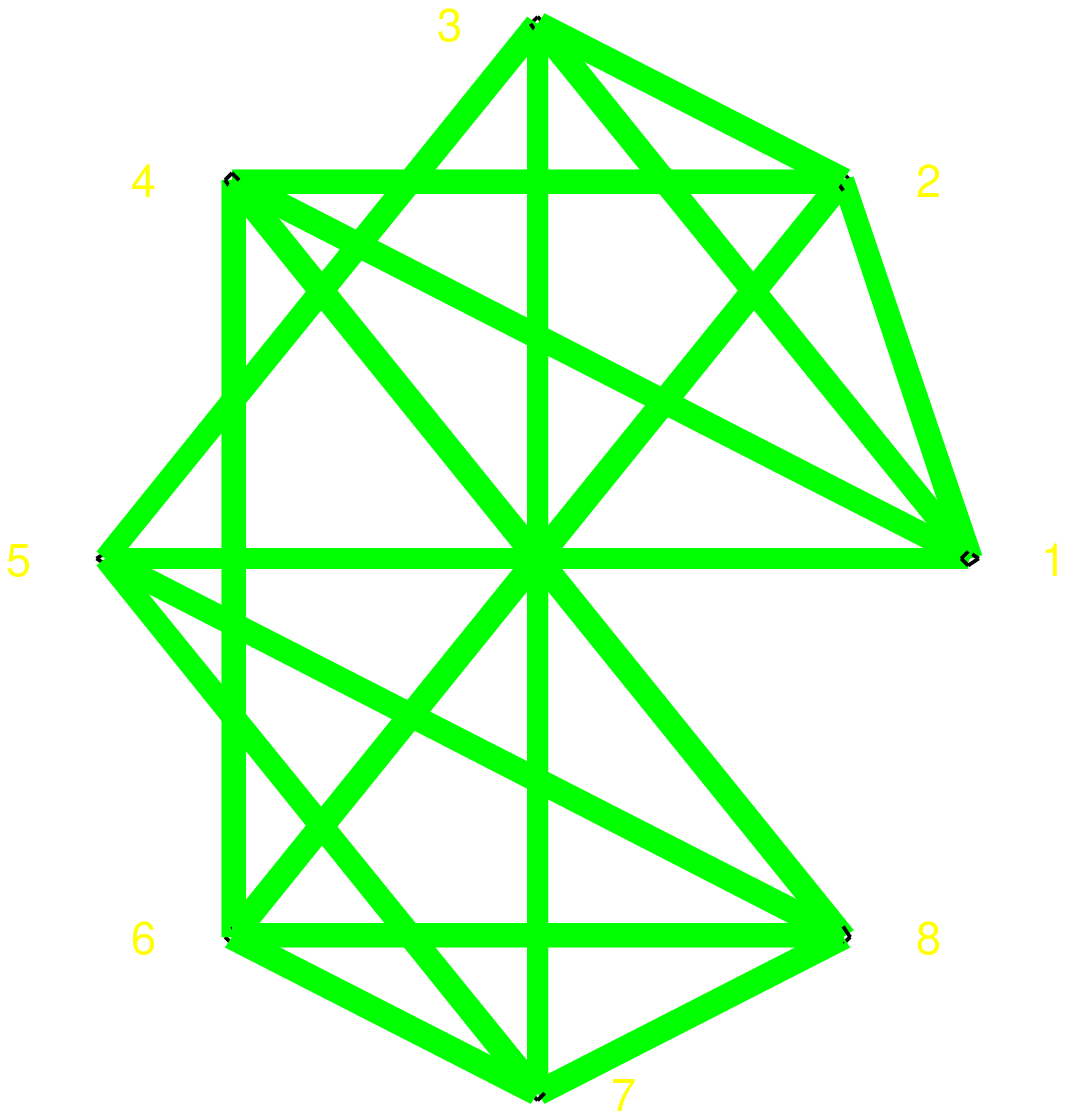}}&168.337~409~994&16&24192&$P_3$&1&$C^8_{1,2}$ \cite{U1}\\[-6mm]
9&&\multicolumn{6}{l}{$168Q_9$}\\[1ex]\hline
$P_{6,2}$&\hspace*{-2mm}\raisebox{-9mm}{\includegraphics[width=12mm]{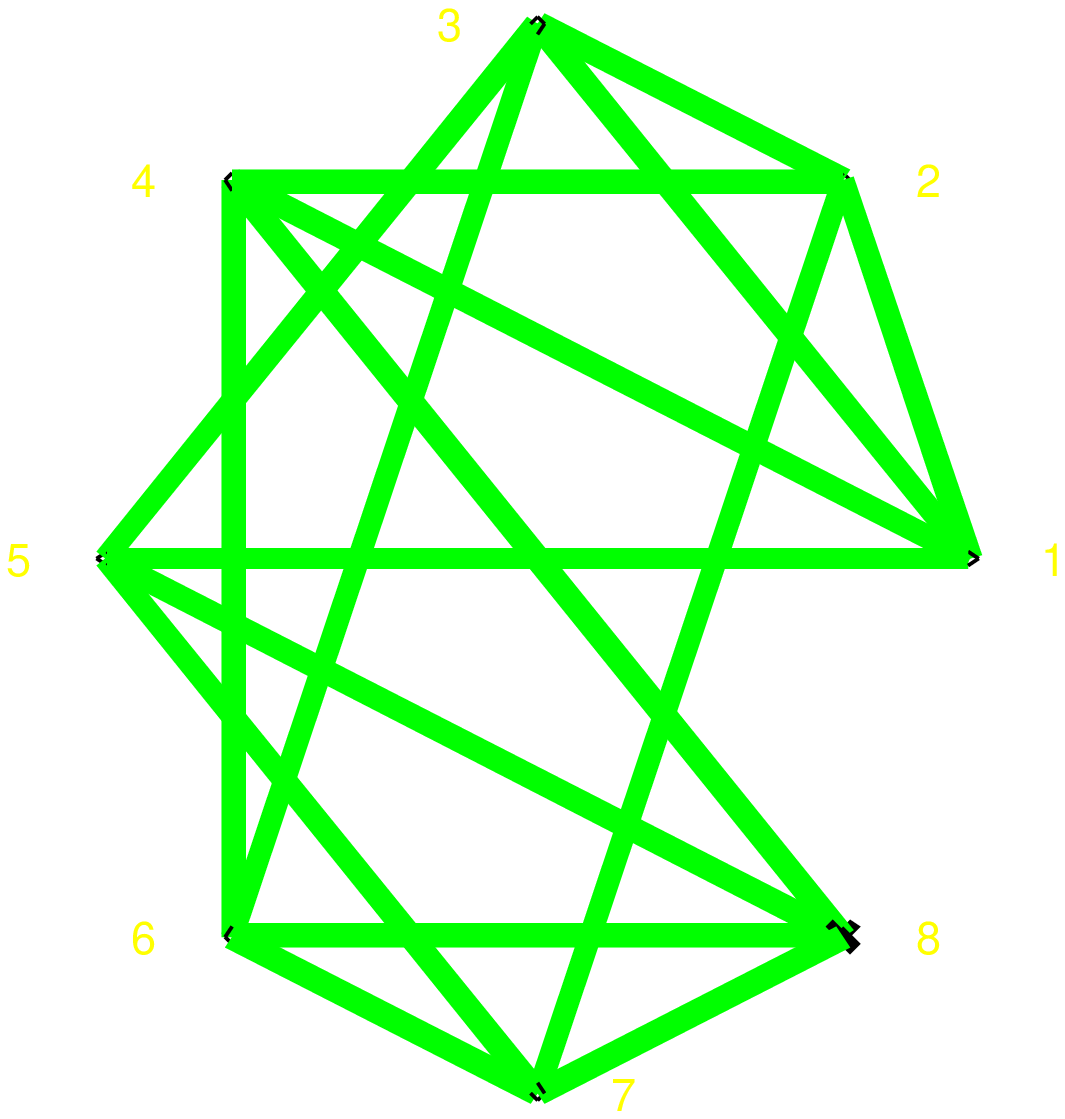}}&132.243~533~110&4&16&$P_3$&1&\cite{BK}\\[-6mm]
9&&\multicolumn{6}{l}{$\frac{1063}{9}Q_9+8Q_3^3$}\\[1ex]\hline
$P_{6,3}$&\hspace*{-2mm}\raisebox{-9mm}{\includegraphics[width=12mm]{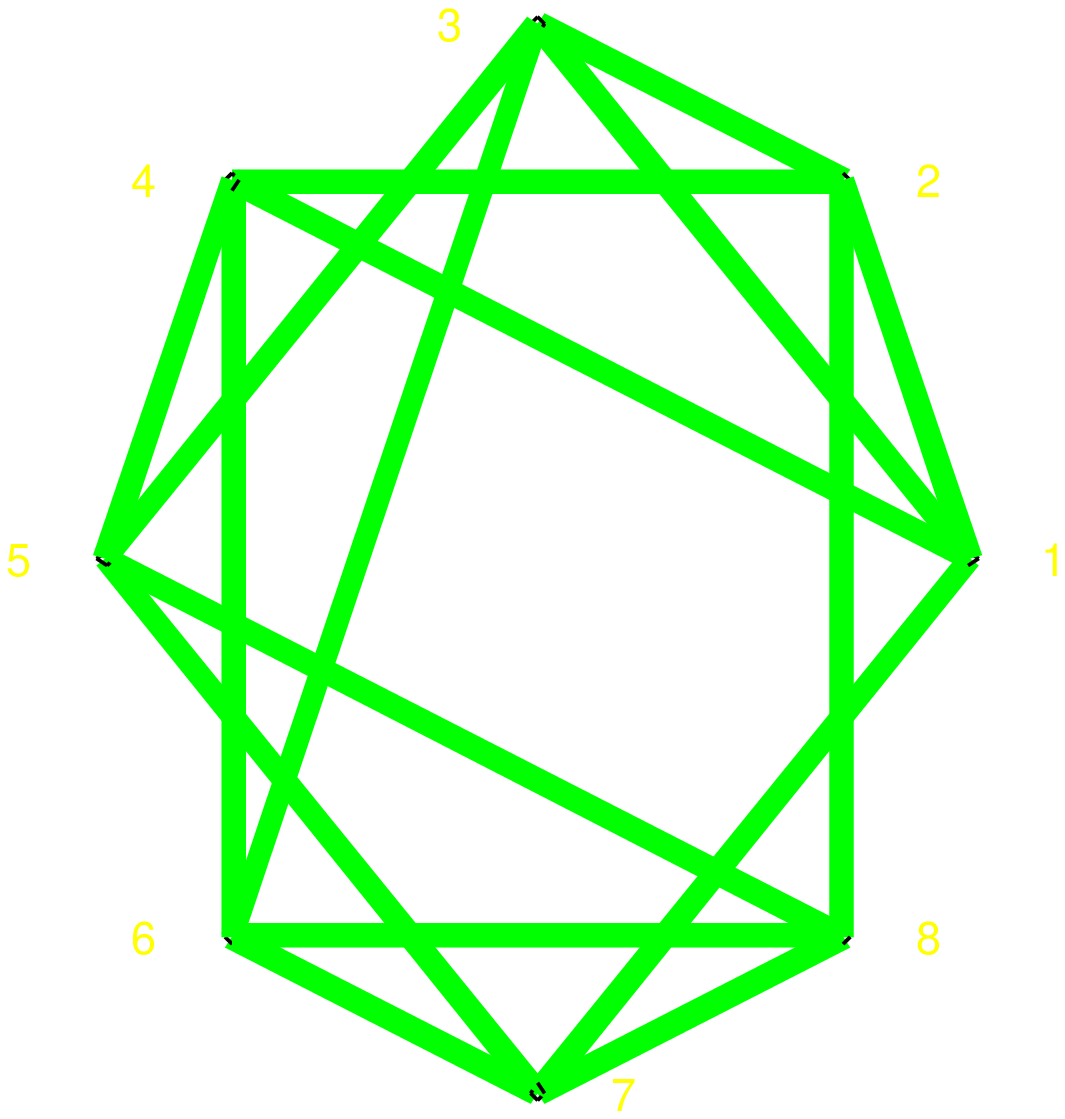}}&107.711~024~841&16&72&$P_3^2$&0&\cite{BK}\\[-6mm]
8&&\multicolumn{6}{l}{$256Q_8+72Q_3Q_5$}\\[1ex]\hline
$P_{6,4}$&\hspace*{-2mm}\raisebox{-9mm}{\includegraphics[width=12mm]{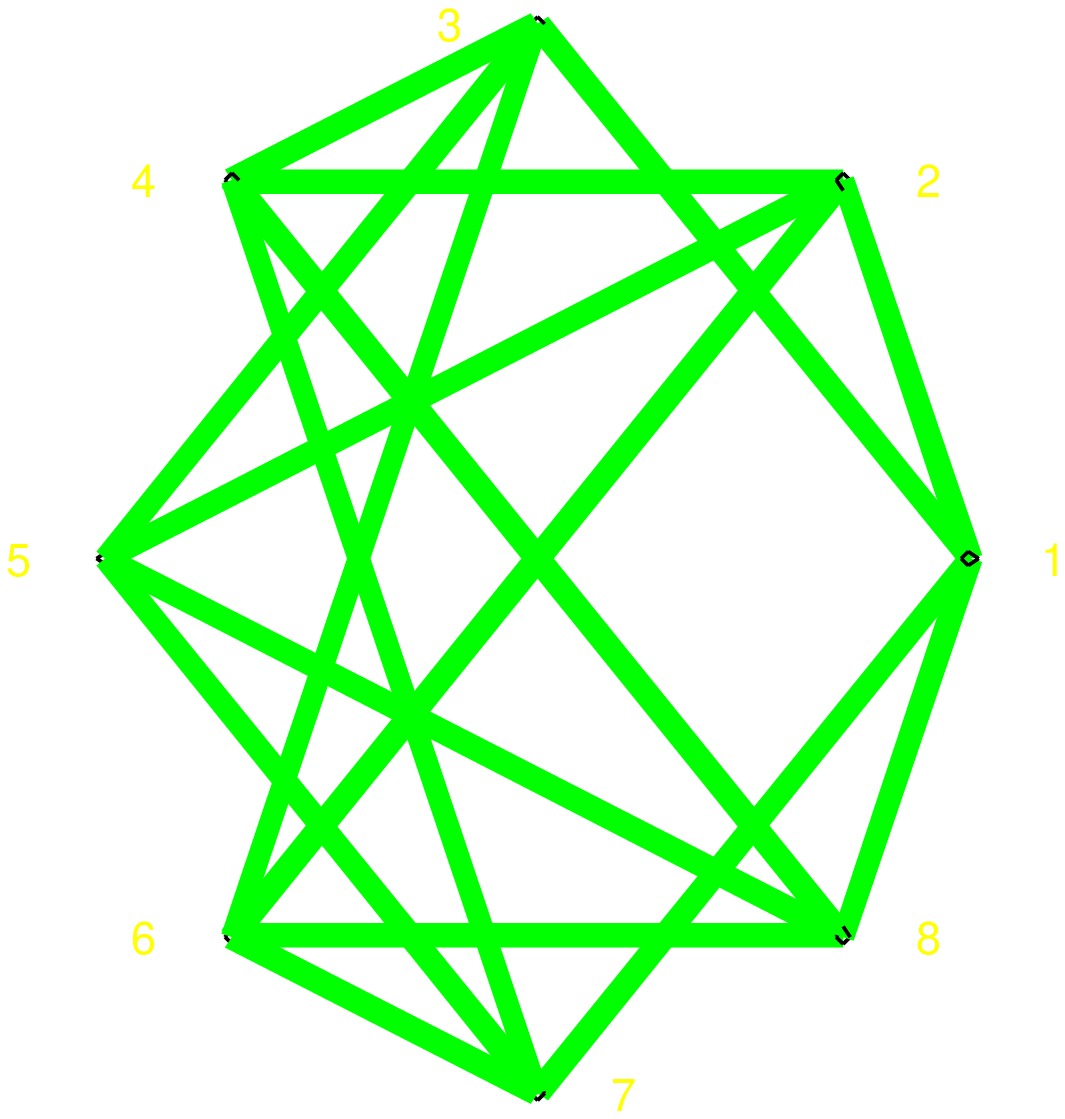}}&71.506~081~796&1152&1728&$P_{6,4}$&0&$C^8_{1,3}$ \cite{BK}, \cite{S3}\\[-6mm]
8&&\multicolumn{6}{l}{$-4096Q_8+288Q_3Q_5$}\\[1ex]\hline\hline
$P_{7,1}$&\hspace*{-2mm}\raisebox{-9mm}{\includegraphics[width=12mm]{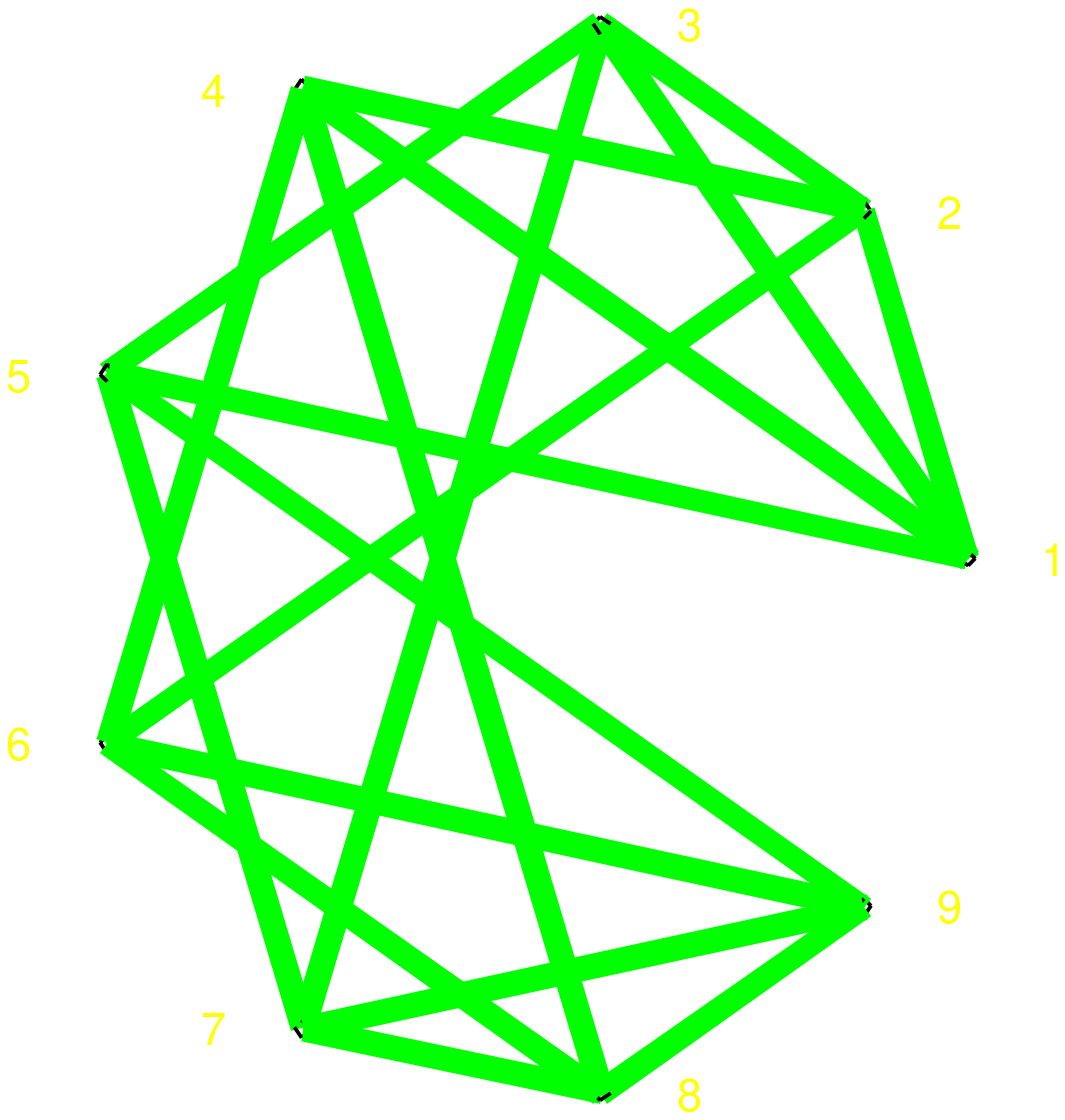}}&527.745~051~766&18&405108&$P_3$&1&$C^9_{1,2}$ \cite{BK}\\[-6mm]
11&&\multicolumn{6}{l}{$\frac{33759}{64}Q_{11,1}$}\\[1ex]\hline
$P_{7,2}$&\hspace*{-2mm}\raisebox{-9mm}{\includegraphics[width=12mm]{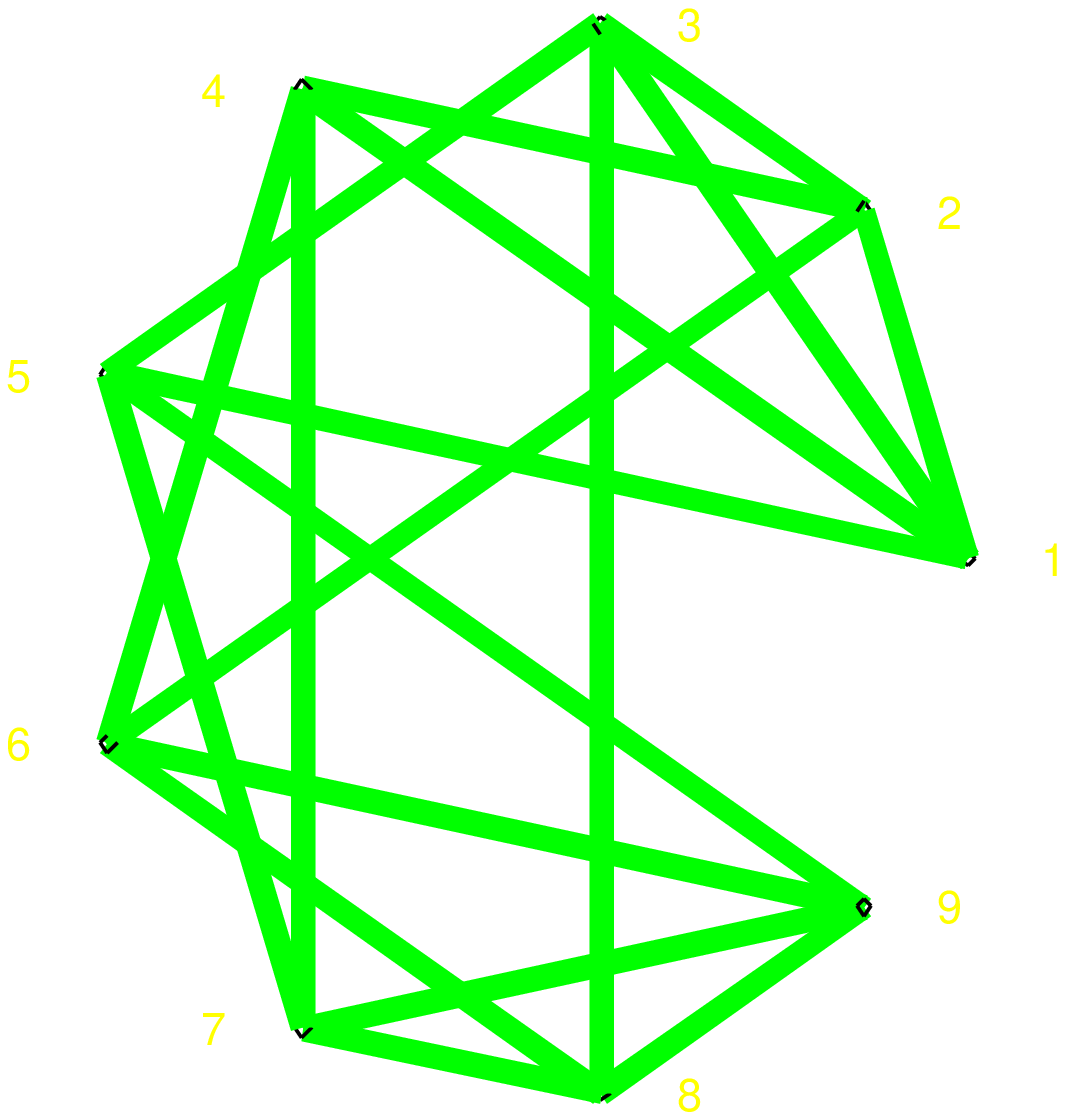}}&380.887~829~534&2&20&$P_3$&1&\cite{BK}\\[-6mm]
11&&\multicolumn{6}{l}{$\frac{62957}{192}Q_{11,1}+9Q_{11,2}+35Q_3^2Q_5$}\\[1ex]\hline
$P_{7,3}$&\hspace*{-2mm}\raisebox{-9mm}{\includegraphics[width=12mm]{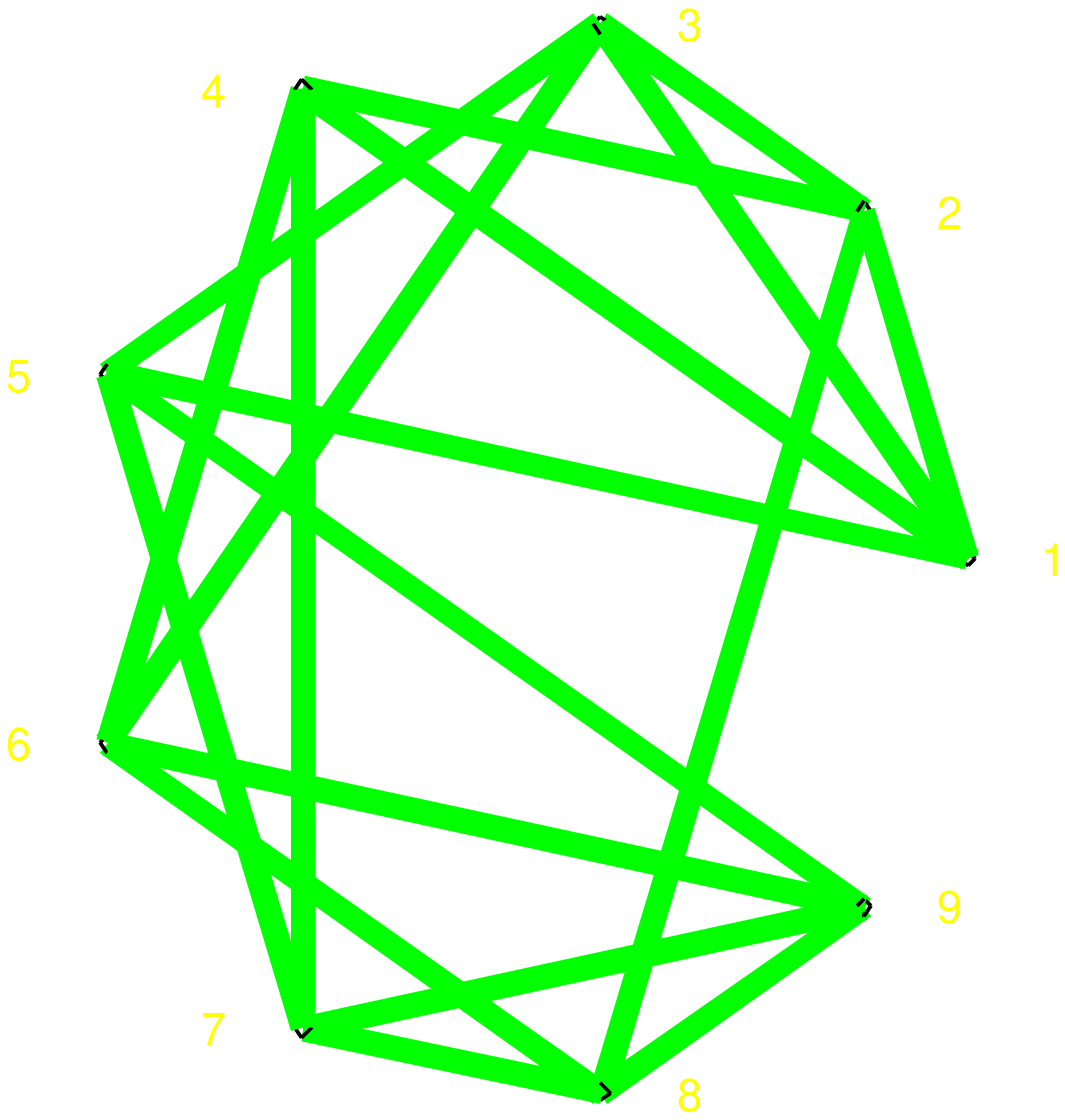}}&336.067~072~110&2&16&$P_3$&1&\cite{BK}\\[-6mm]
11&&\multicolumn{6}{l}{$\frac{73133}{240}Q_{11,1}+\frac{144}{5}Q_{11,2}+20Q_3^2Q_5$}\\[1ex]\hline
$P_{7,4}$&\hspace*{-2mm}\raisebox{-9mm}{\includegraphics[width=12mm]{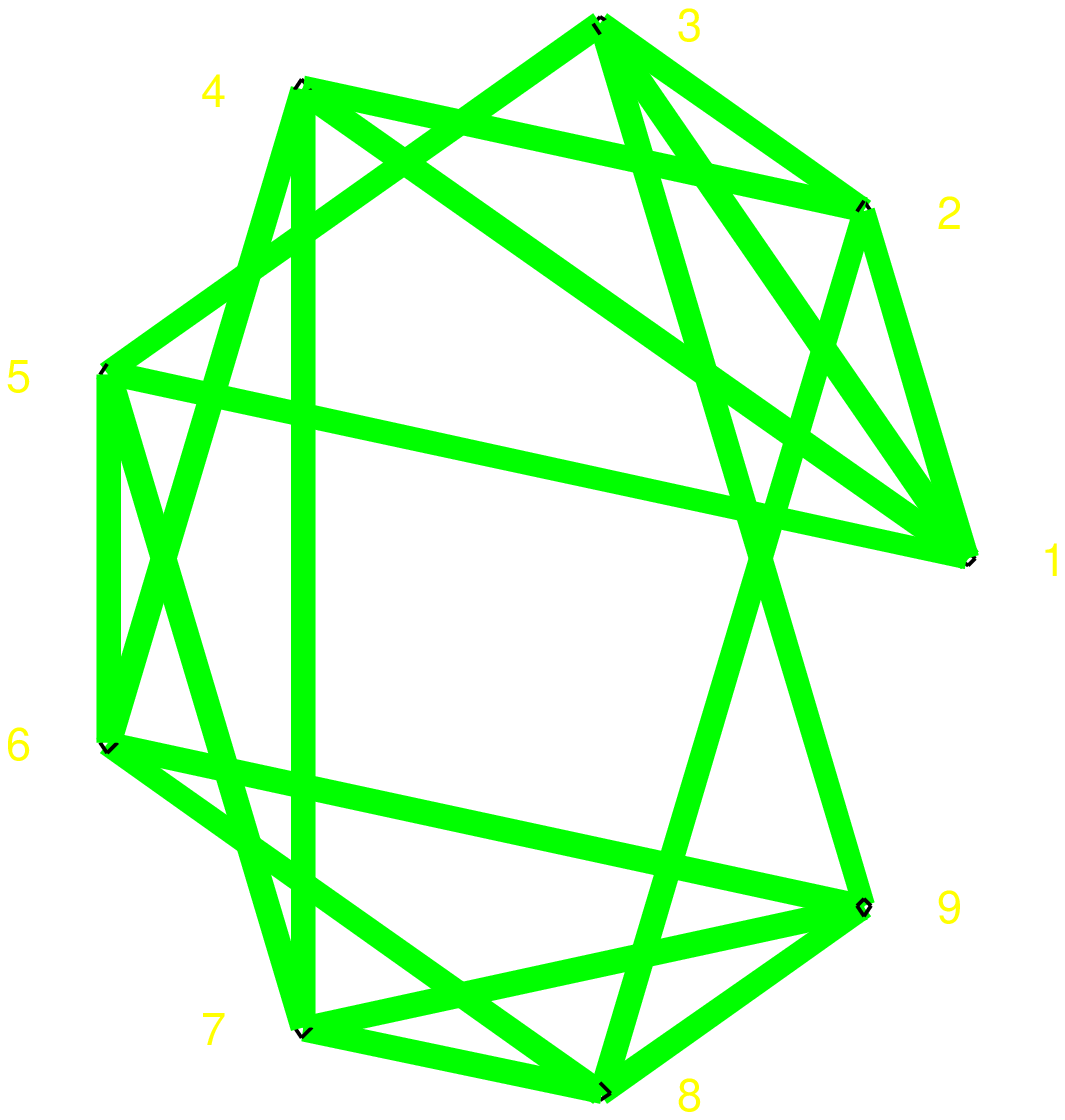}}&294.035~314~185&4&320&$P_3^2$&0&\cite{BK}\\[-6mm]
10&&\multicolumn{6}{l}{$420Q_3Q_7-200Q_5^2$}\\[1ex]\hline
$P_{7,5}$&\hspace*{-2mm}\raisebox{-9mm}{\includegraphics[width=12mm]{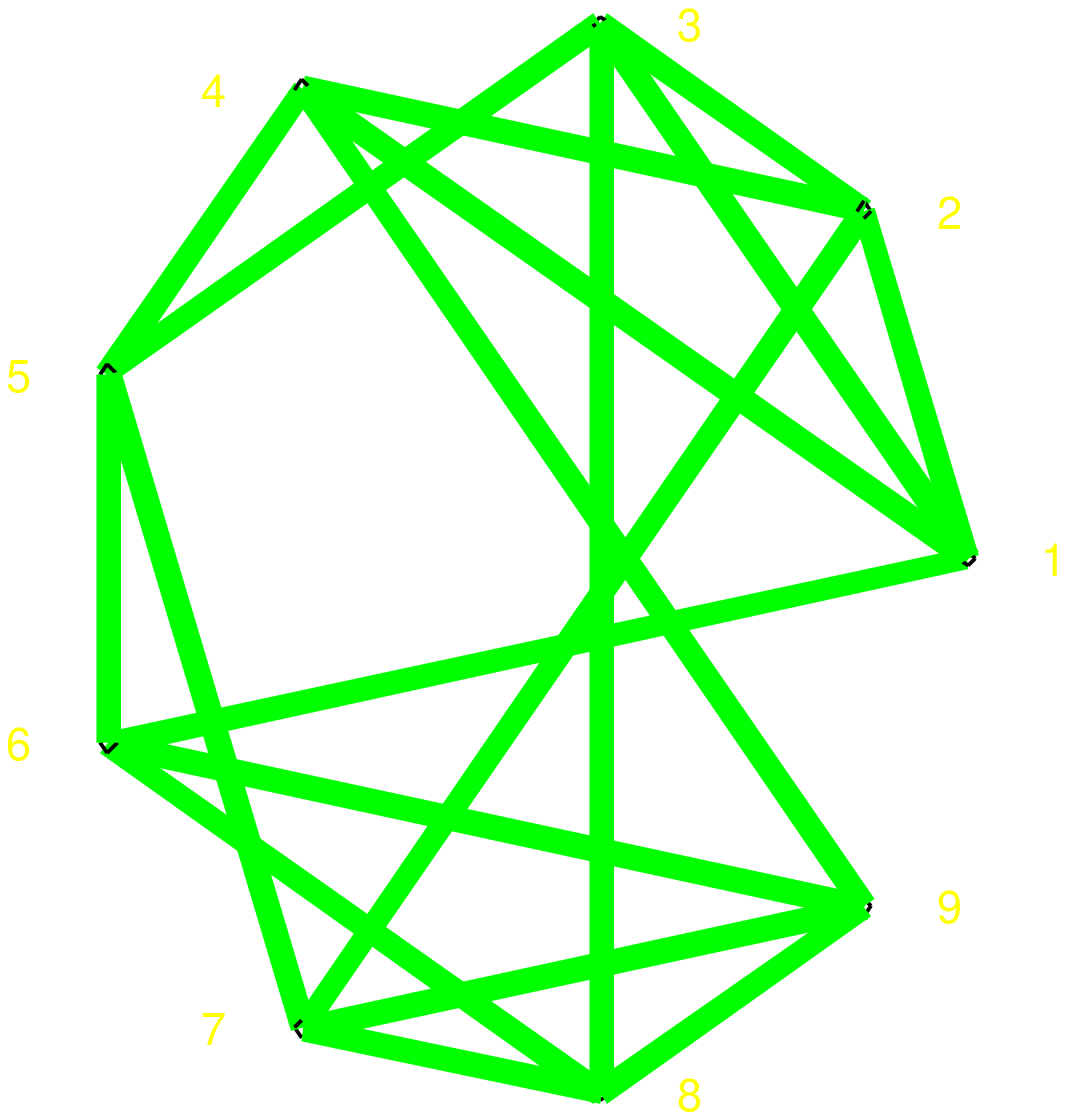}}&254.763~009~595&8&144&$P_3^2$&0&\cite{BK}\\[-6mm]
10&&\multicolumn{6}{l}{$-189Q_3Q_7+450Q_5^2$}\\[1ex]\hline
$P_{7,6}$&\hspace*{-2mm}\raisebox{-9mm}{\includegraphics[width=12mm]{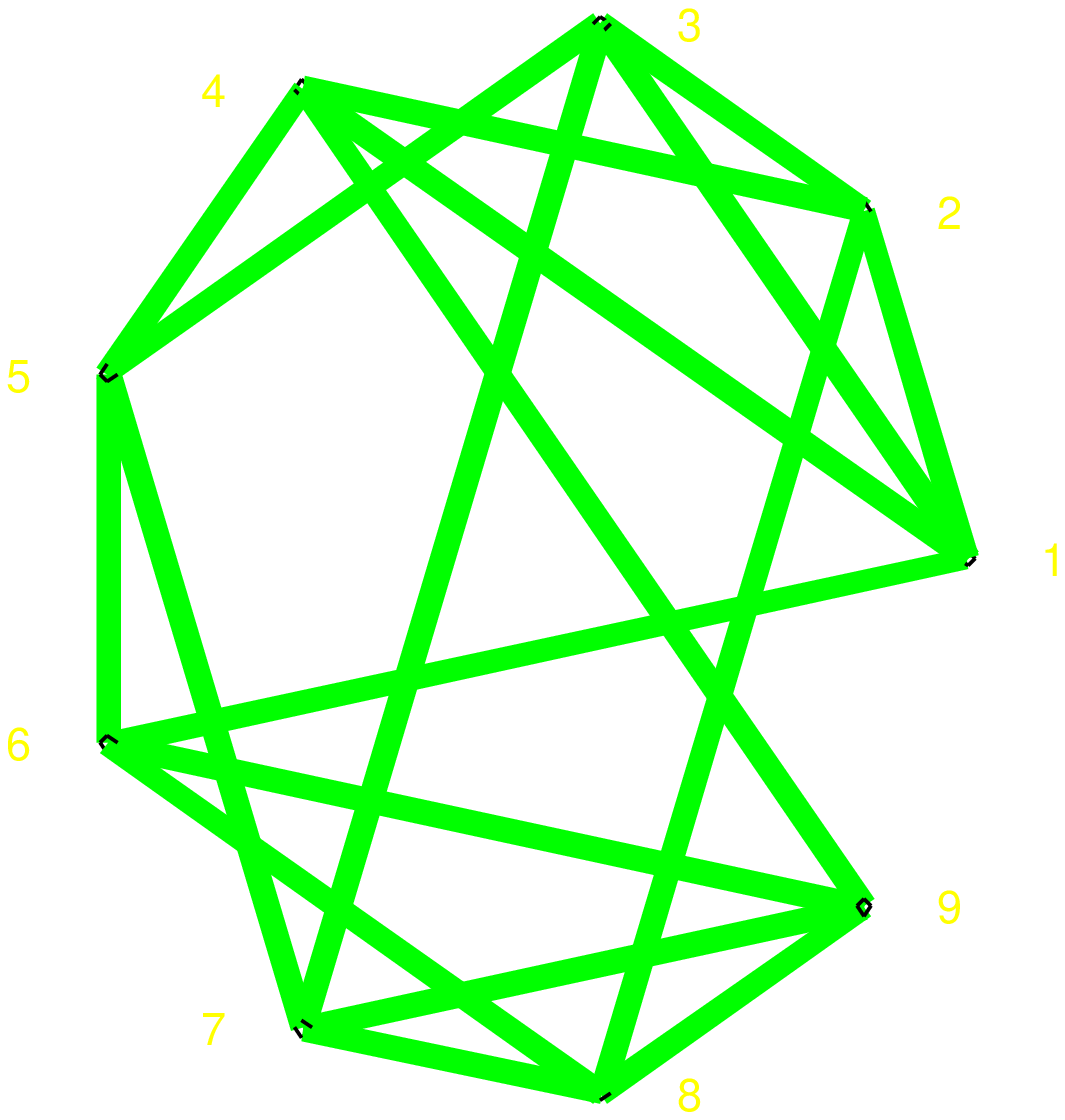}}&273.482~574~258&2&20&$P_3$&1&\cite{BK}\\[-6mm]
11&&\multicolumn{6}{l}{$\frac{14279}{64}Q_{11,1}-51Q_{11,2}+35Q_3^2Q_5$}\\[1ex]\hline
$P_{7,7}$&\hspace*{-2mm}\raisebox{-9mm}{\includegraphics[width=12mm]{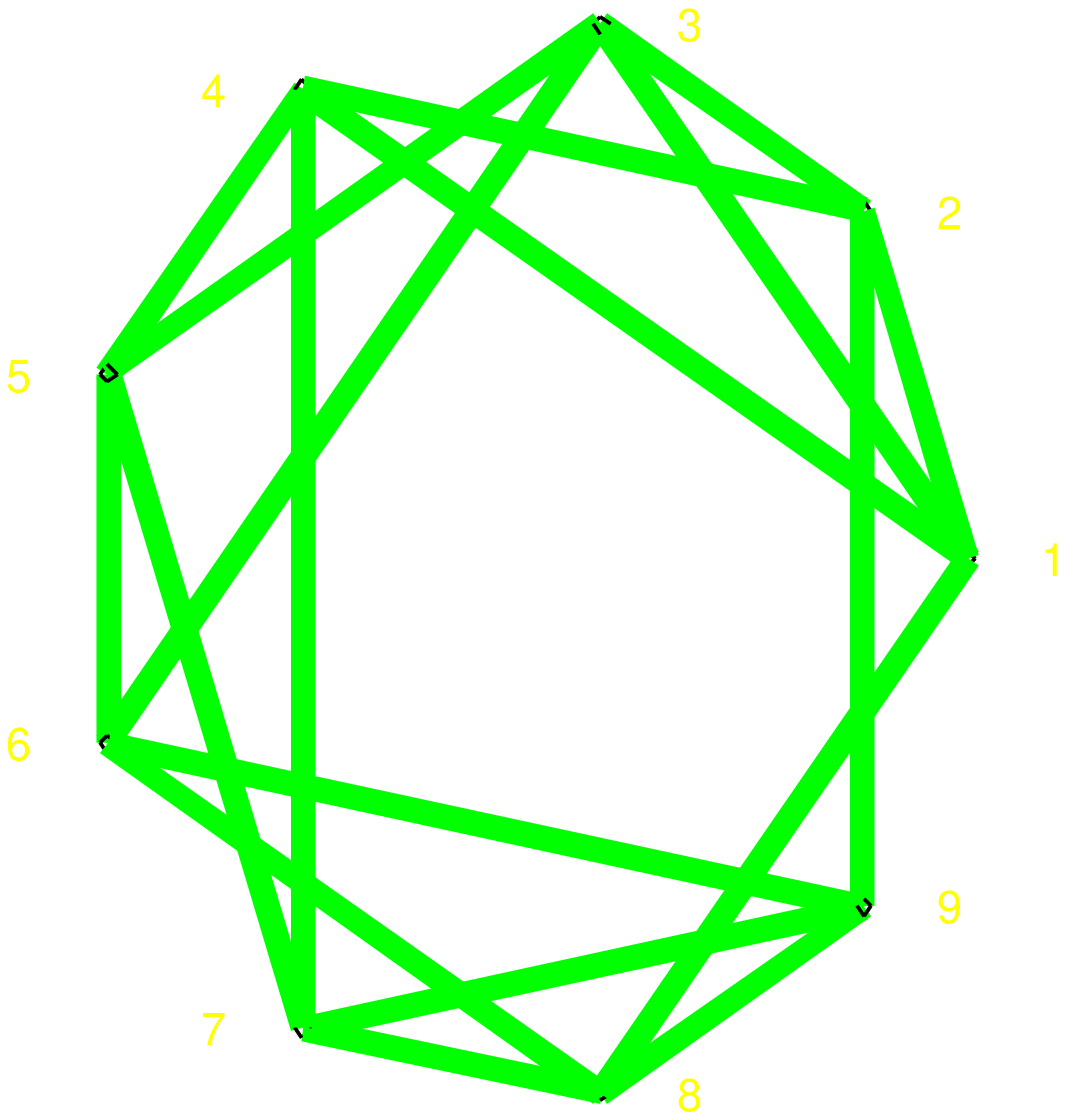}}&294.035~314~185&8&320&$P_3^2$&0&Fourier, twist\\[-6mm]
10&&\multicolumn{6}{l}{$P_{7,4}$\hspace*{91mm}}
\end{tabular}

\begin{tabular}{llllllll}
name&graph&numerical value&$|$Aut$|$&index&anc.&$-c_2$&remarks, [Lit]\\[1ex]
\multicolumn{2}{l}{weight}&\multicolumn{6}{l}{exact value}\\[1ex]\hline\hline
$P_{7,8}$&\hspace*{-2mm}\raisebox{-9mm}{\includegraphics[width=12mm]{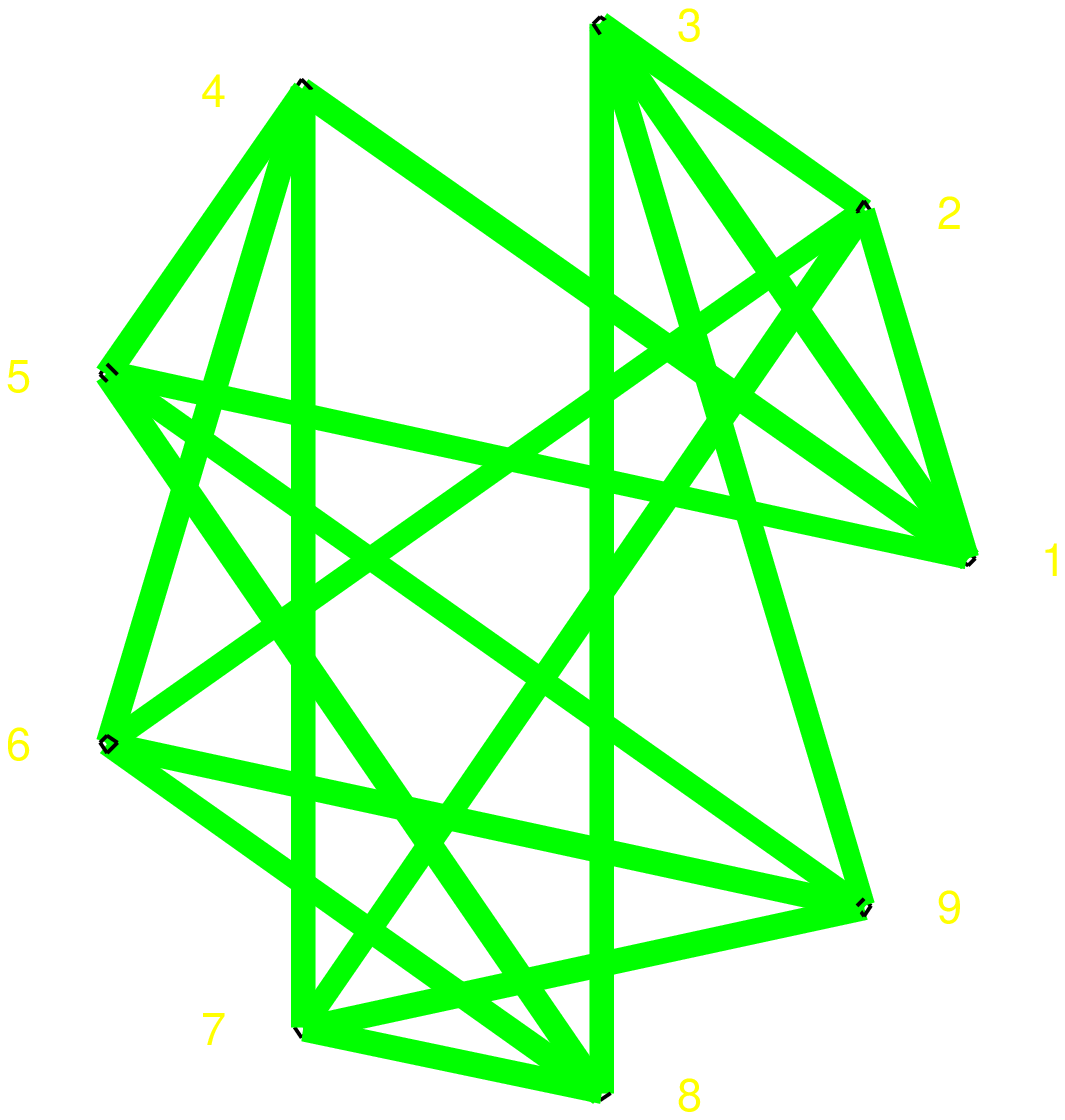}}&183.032~420~030&16&16&$P_{7,8}$&$z_2$&\\[-6mm]
11&&\multicolumn{6}{l}{$\frac{22383}{20}Q_{11,1}-\frac{4572}{5}Q_{11,2}+1792Q_3Q_8-700Q_3^2Q_5$}\\[1ex]\hline
$P_{7,9}$&\hspace*{-2mm}\raisebox{-9mm}{\includegraphics[width=12mm]{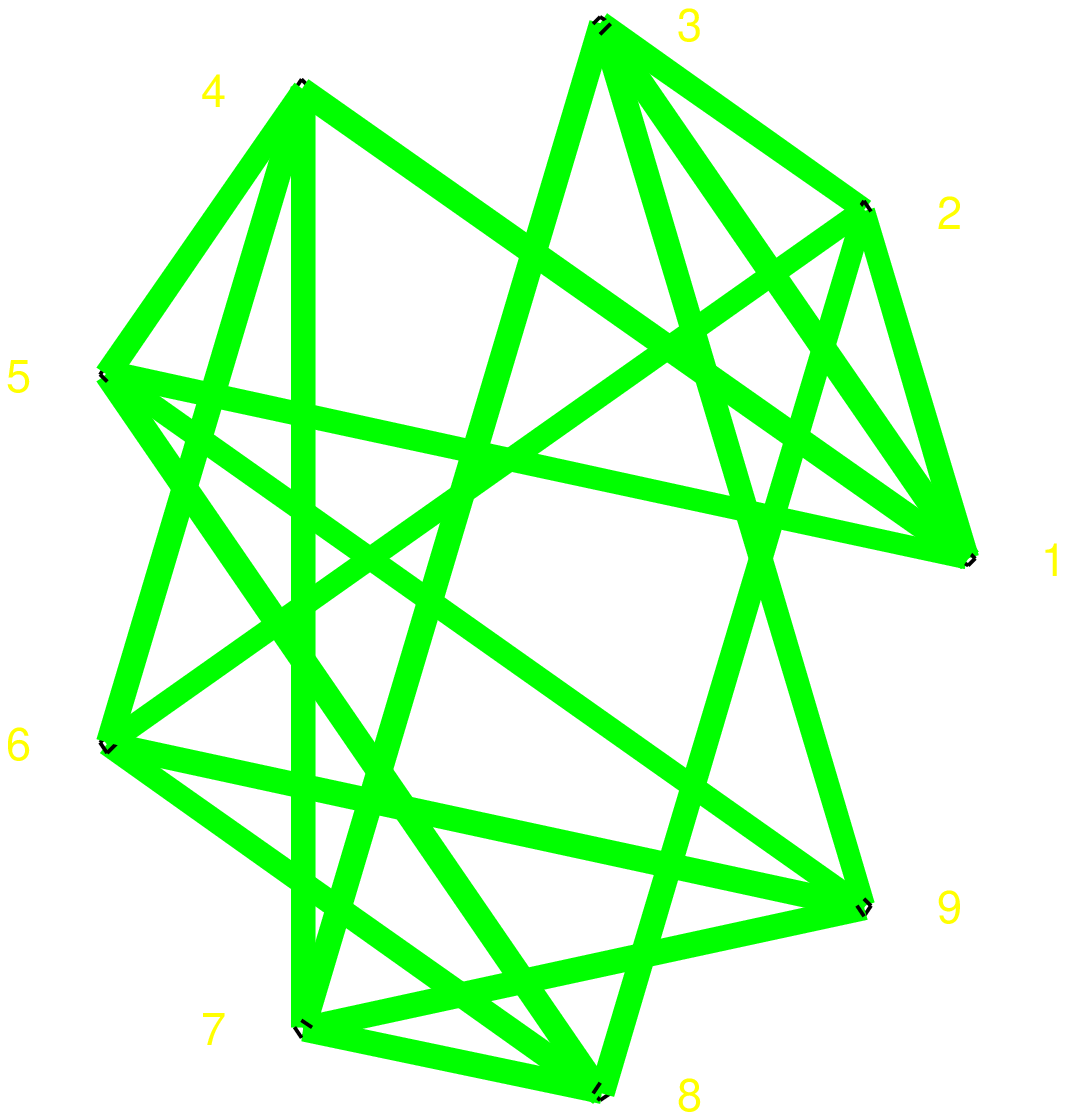}}&216.919~375~587&12&3&$P_{7,9}$&$z_2$&\cite{B31}\\[-6mm]
11&&\multicolumn{6}{l}{$\frac{92943}{160}Q_{11,1}-\frac{3381}{20}Q_{11,2}+896Q_3Q_8-\frac{1155}{4}Q_3^2Q_5$}\\[1ex]\hline
$P_{7,10}$&\hspace*{-2mm}\raisebox{-9mm}{\includegraphics[width=12mm]{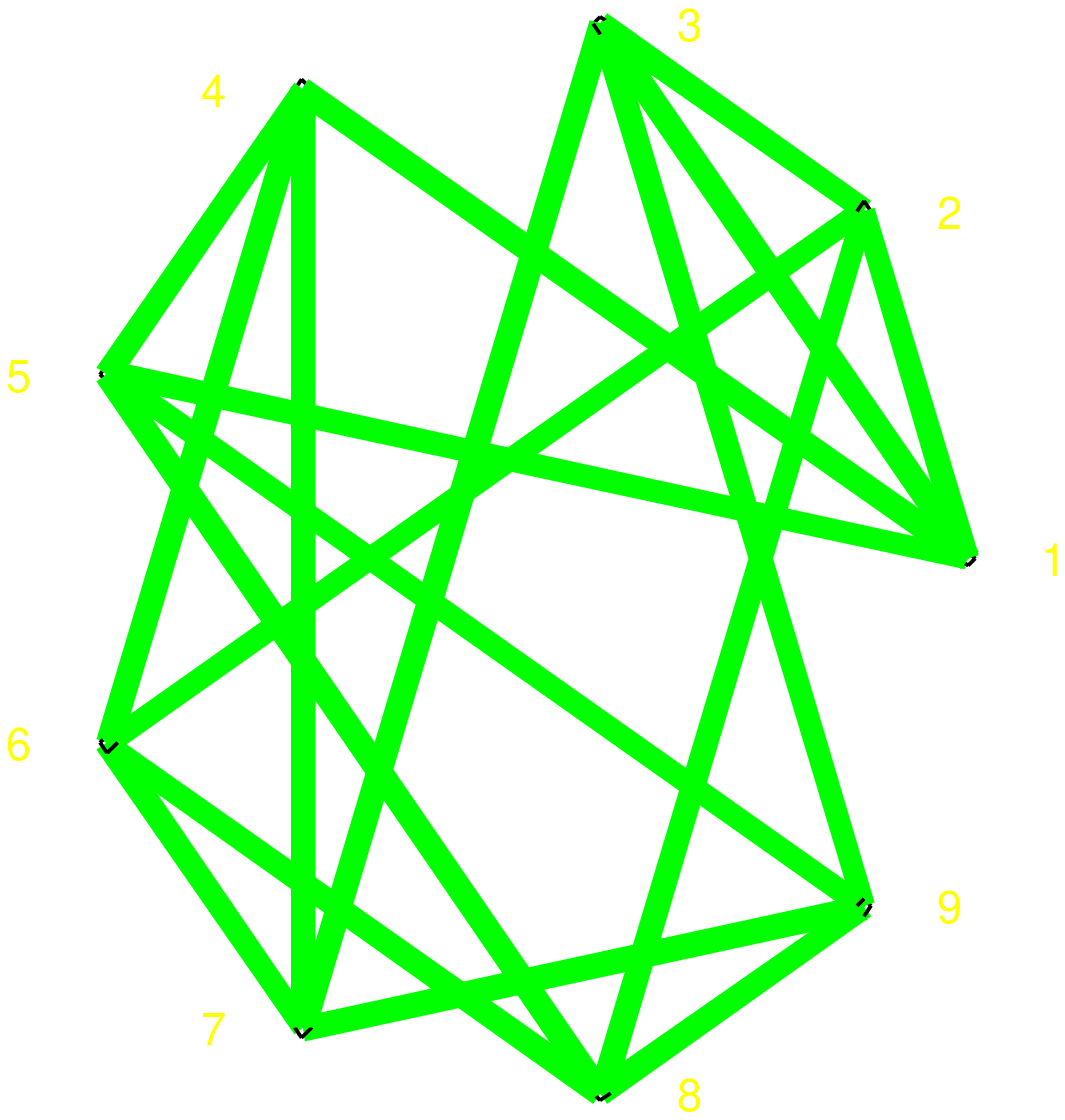}}&254.763~009~595&72&144&$P_{7,10}$&0&$K_3$\raisebox{.7ex}{\fbox{}}$K_3$, Fourier\\[-6mm]
10&&\multicolumn{6}{l}{$P_{7,5}$}\\[1ex]\hline
$P_{7,11}$&\hspace*{-2mm}\raisebox{-9mm}{\includegraphics[width=12mm]{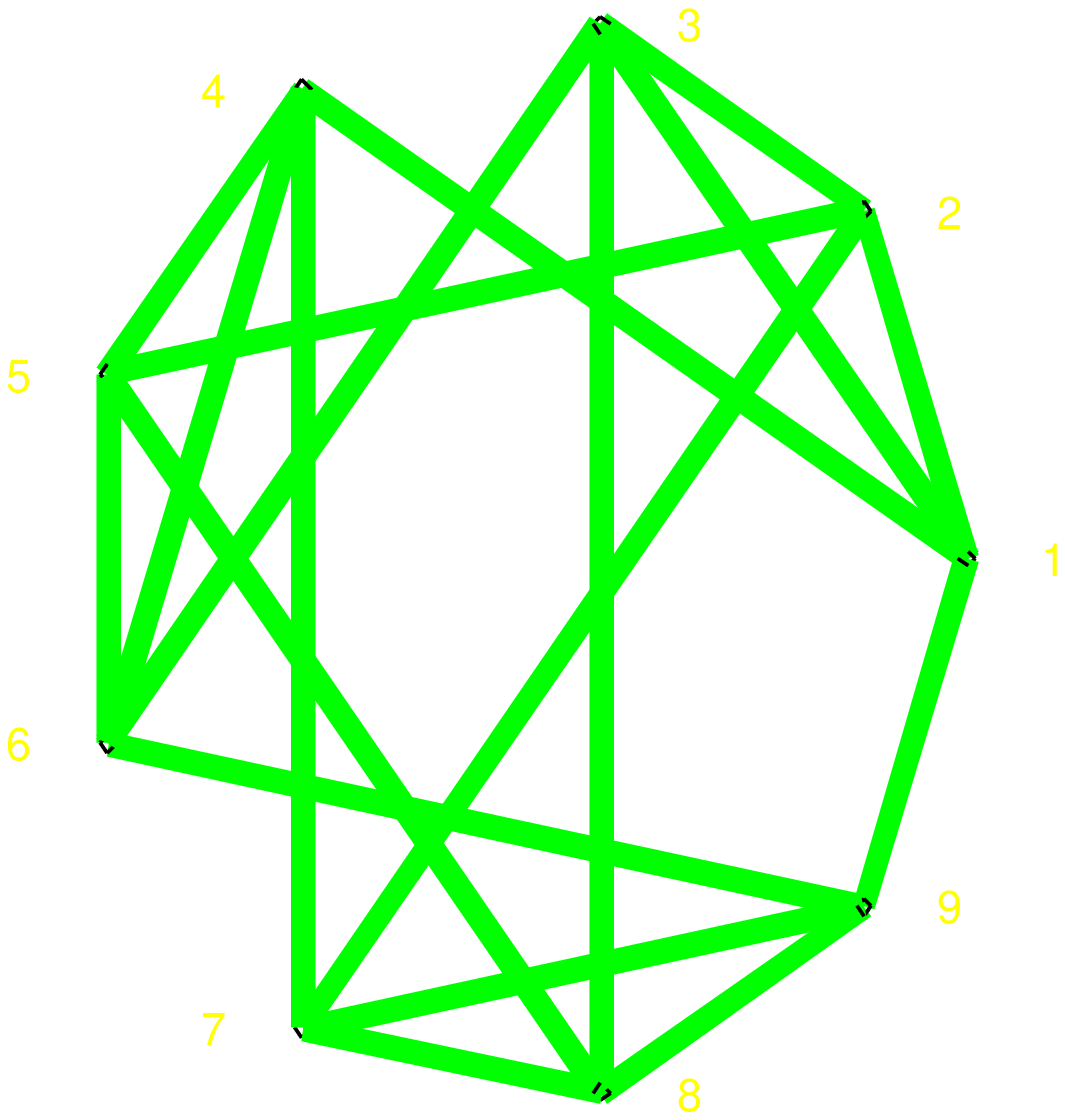}}&200.357~566~429&18&?&$P_{7,11}$&$z_3$&$C^9_{1,3}$, \cite{Panzer:PhD}\\[-6mm]
11?&&\multicolumn{6}{l}{$Q_{11,3}$}\\[1ex]\hline\hline
$P_{8,1}$&\hspace*{-2mm}\raisebox{-9mm}{\includegraphics[width=12mm]{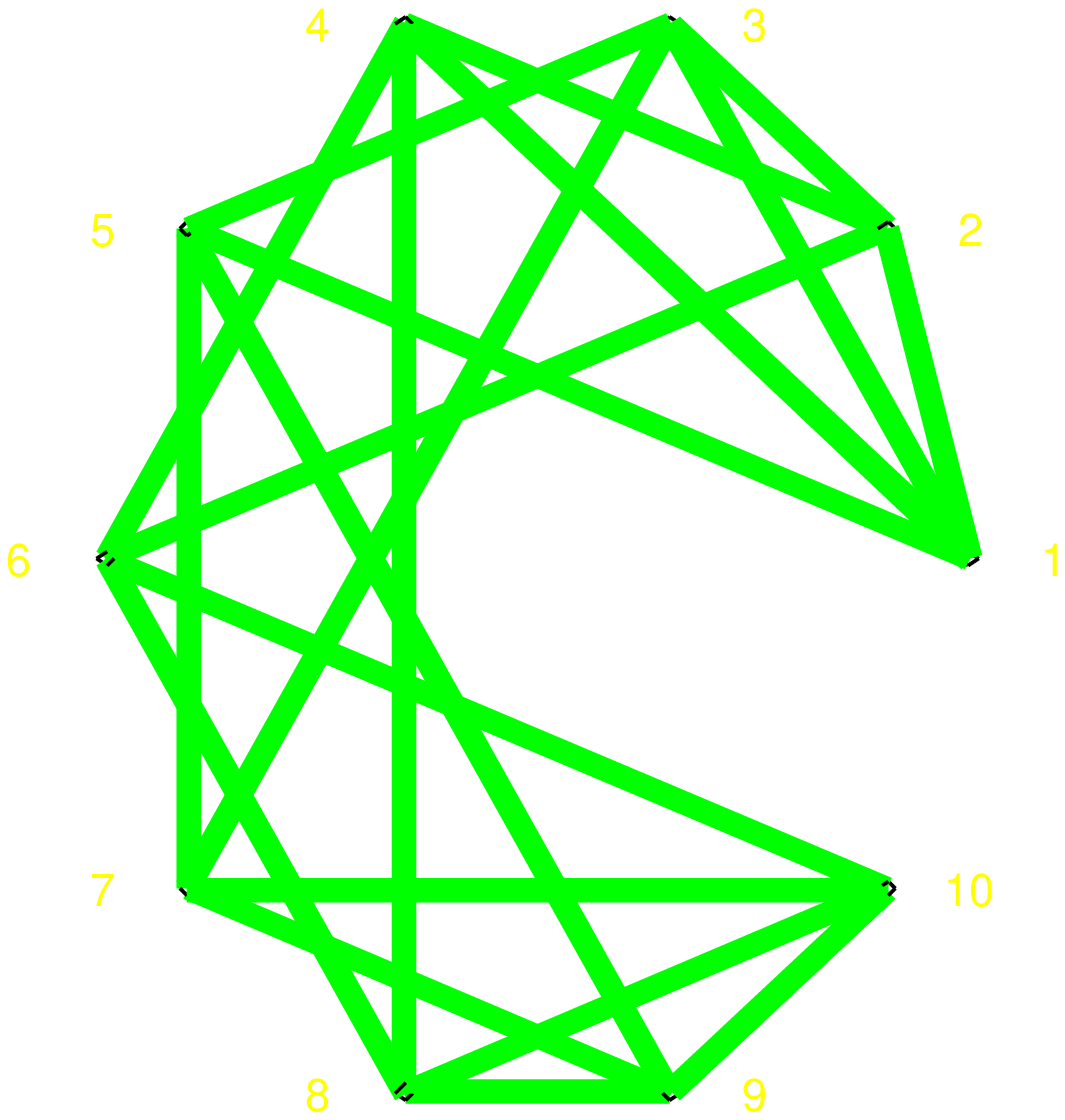}}&1~716.210~576~104&20&2635776&$P_3$&1&$C^{10}_{1,2}$\\[-6mm]
13&&\multicolumn{6}{l}{$1716Q_{13,1}$}\\[1ex]\hline
$P_{8,2}$&\hspace*{-2mm}\raisebox{-9mm}{\includegraphics[width=12mm]{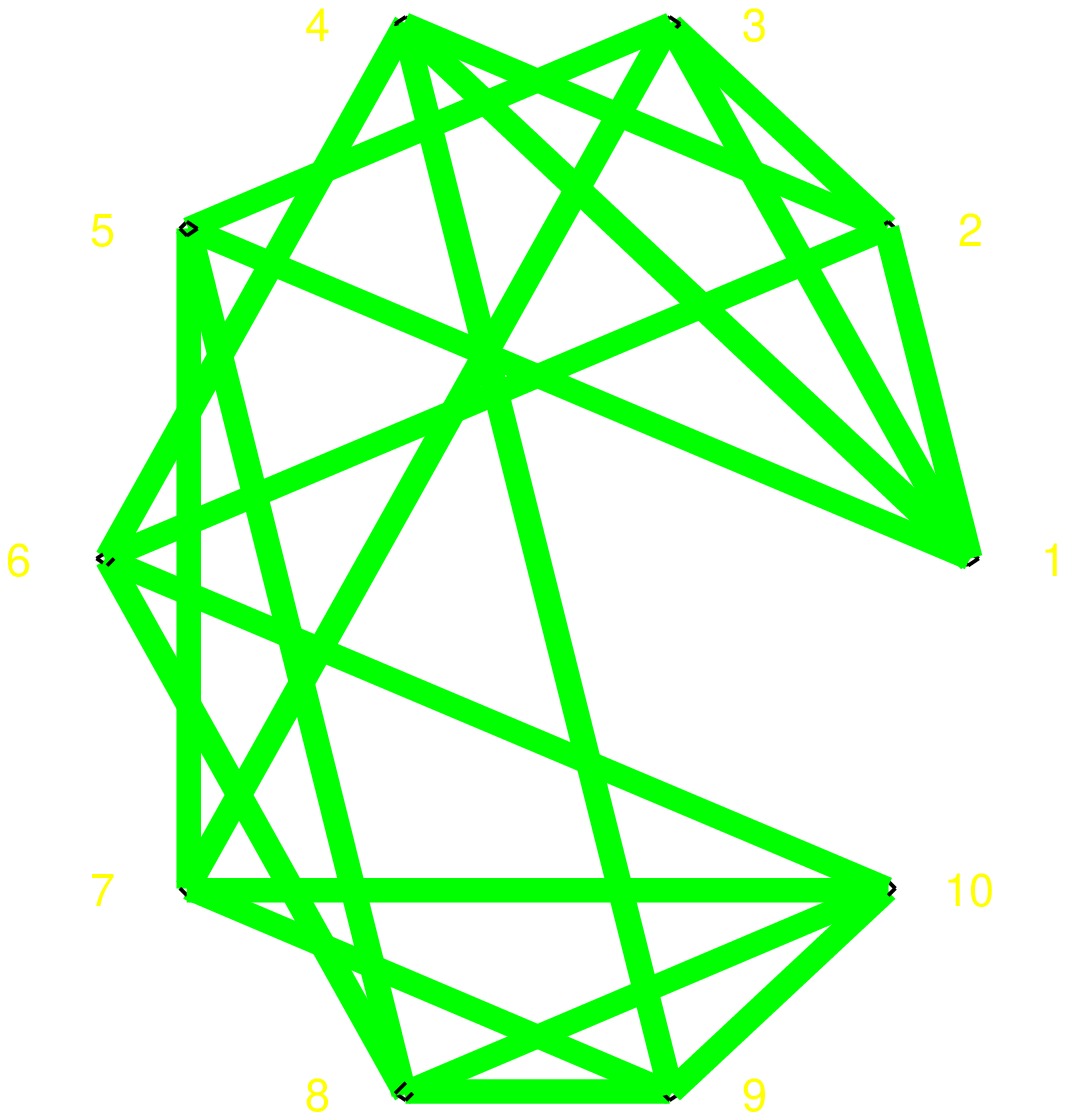}}&1~145.592~929~599&2&12&$P_3$&1&\\[-6mm]
13&&\multicolumn{6}{l}{$\frac{25147347}{22400}Q_{13,1}-\frac{16881}{1400}Q_{13,2}+\frac{459}{112}Q_{13,3}+\frac{1305}{8}Q_3^2Q_7-135Q_3Q_5^2$}\\[1ex]\hline
$P_{8,3}$&\hspace*{-2mm}\raisebox{-9mm}{\includegraphics[width=12mm]{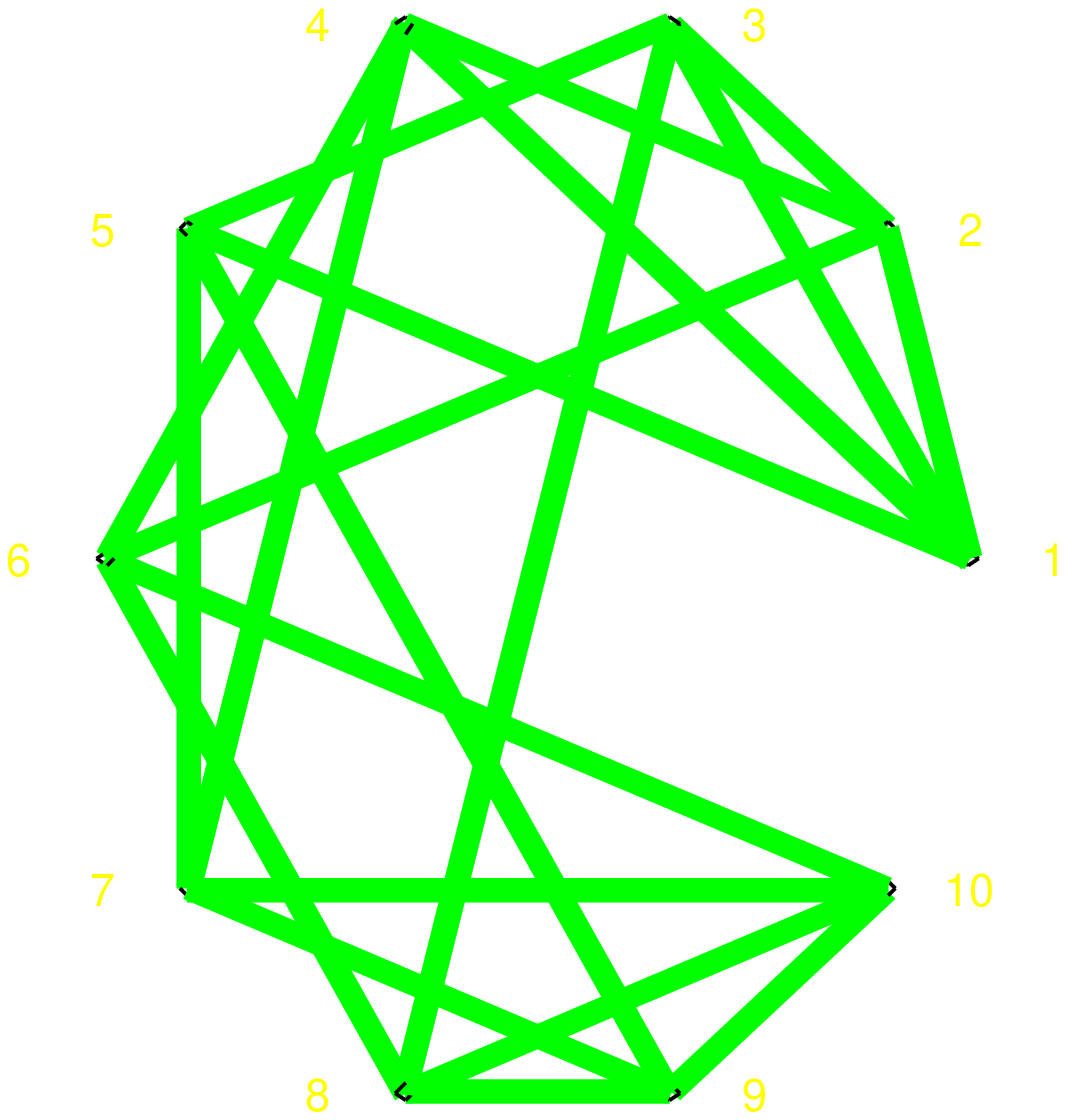}}&1~105.107~697~390&4&1280&$P_3$&1&\\[-6mm]
13&&\multicolumn{6}{l}{$298Q_{13,1}+56Q_{13,2}-20Q_{13,3}-280Q_3^2Q_7+800Q_3Q_5^2$}\\[1ex]\hline
$P_{8,4}$&\hspace*{-2mm}\raisebox{-9mm}{\includegraphics[width=12mm]{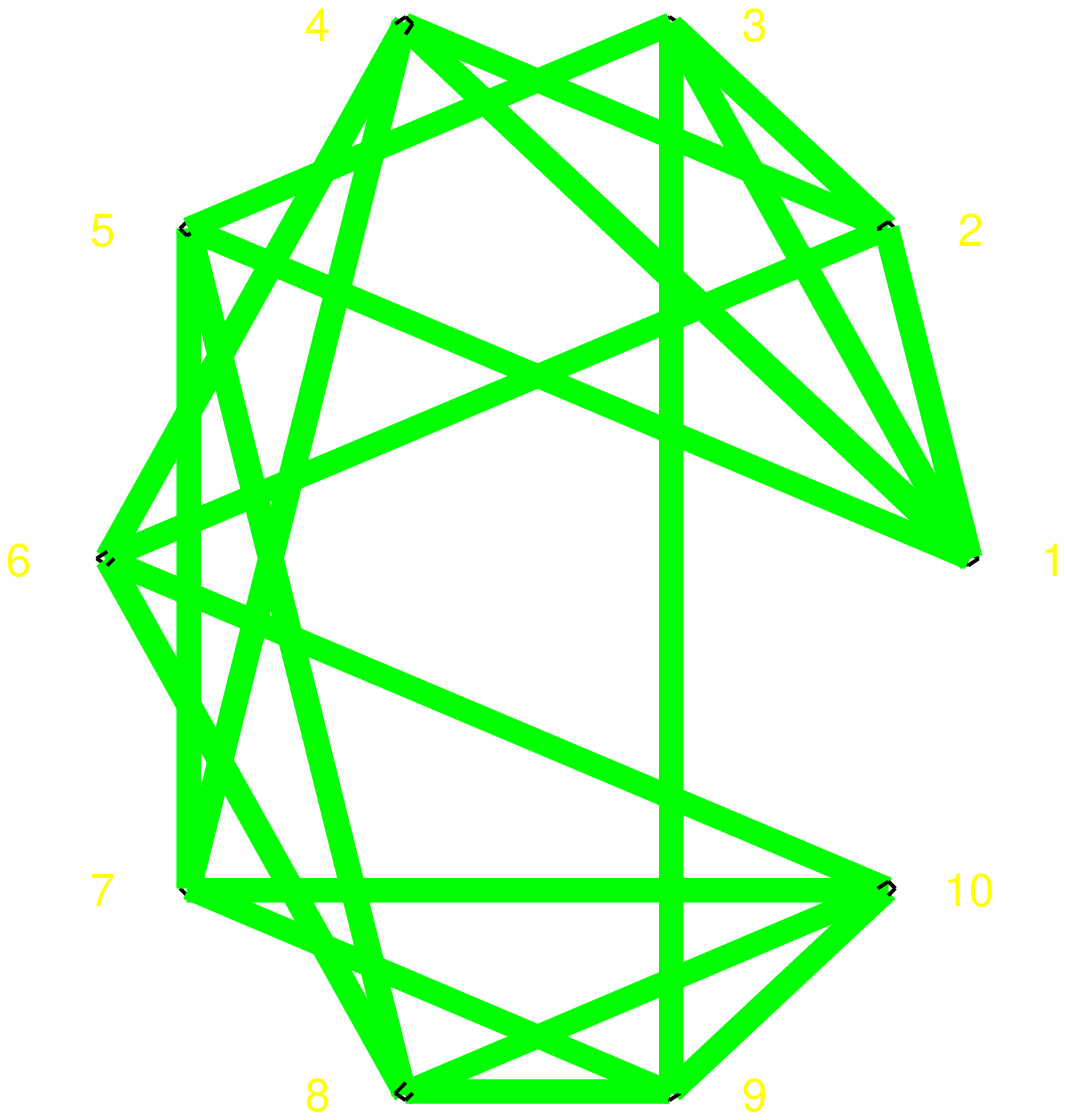}}&966.830~801~986&1&12&$P_3$&1&\\[-6mm]
13&&\multicolumn{6}{l}{$\frac{17124243}{22400}Q_{13,1}-\frac{19689}{1400}Q_{13,2}+\frac{1755}{112}Q_{13,3}+\frac{9}{8}Q_3^2Q_7+135Q_3Q_5^2$}\\[1ex]\hline
$P_{8,5}$&\hspace*{-2mm}\raisebox{-9mm}{\includegraphics[width=12mm]{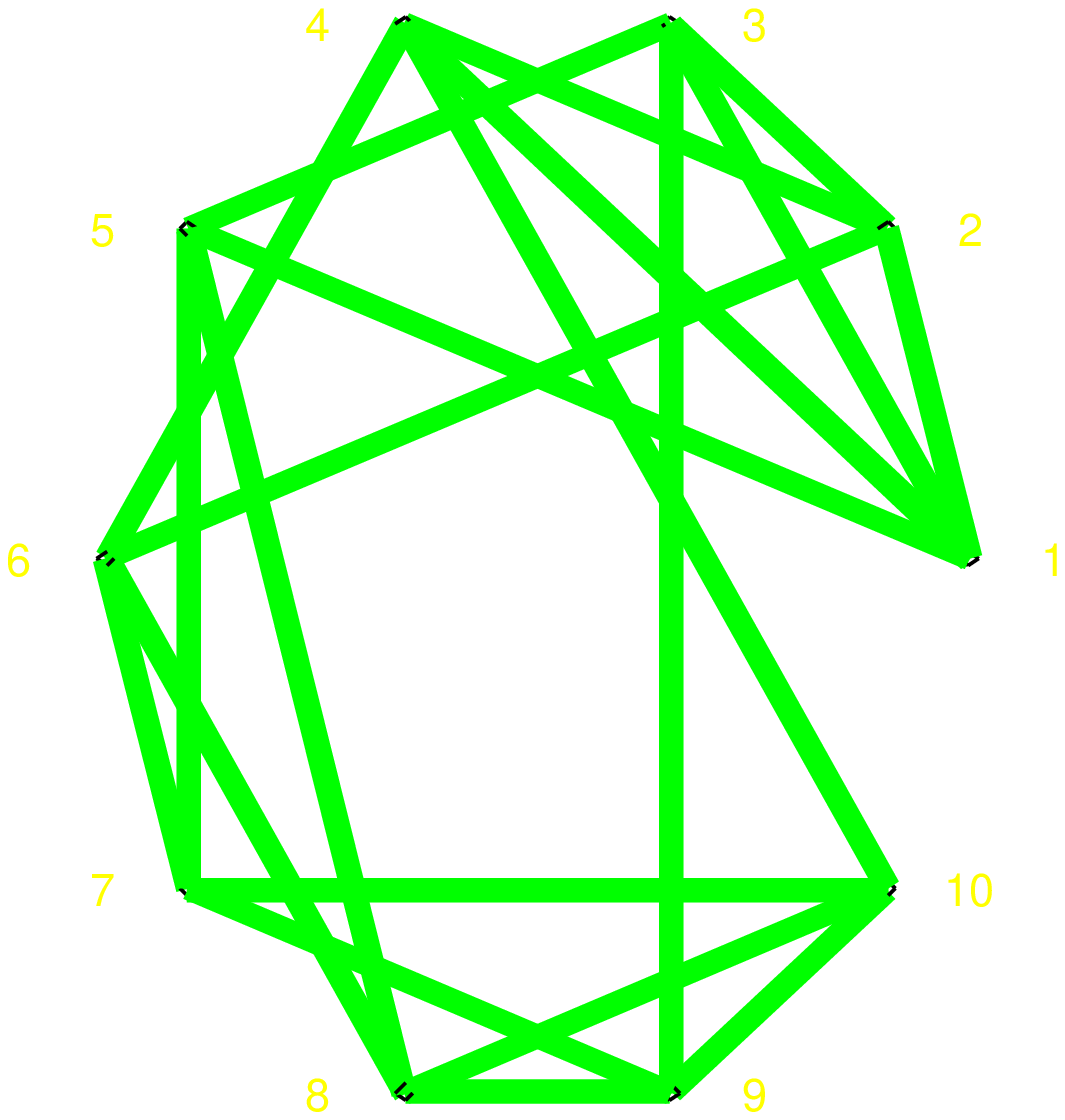}}&844.512~518~603&4&24&$P_3^2$&0&\\[-6mm]
12&&\multicolumn{6}{l}{$1536Q_{12,1}-1280Q_{12,2}+36Q_3Q_9+\frac{1299}{2}Q_5Q_7$}\\[1ex]\hline
$P_{8,6}$&\hspace*{-2mm}\raisebox{-9mm}{\includegraphics[width=12mm]{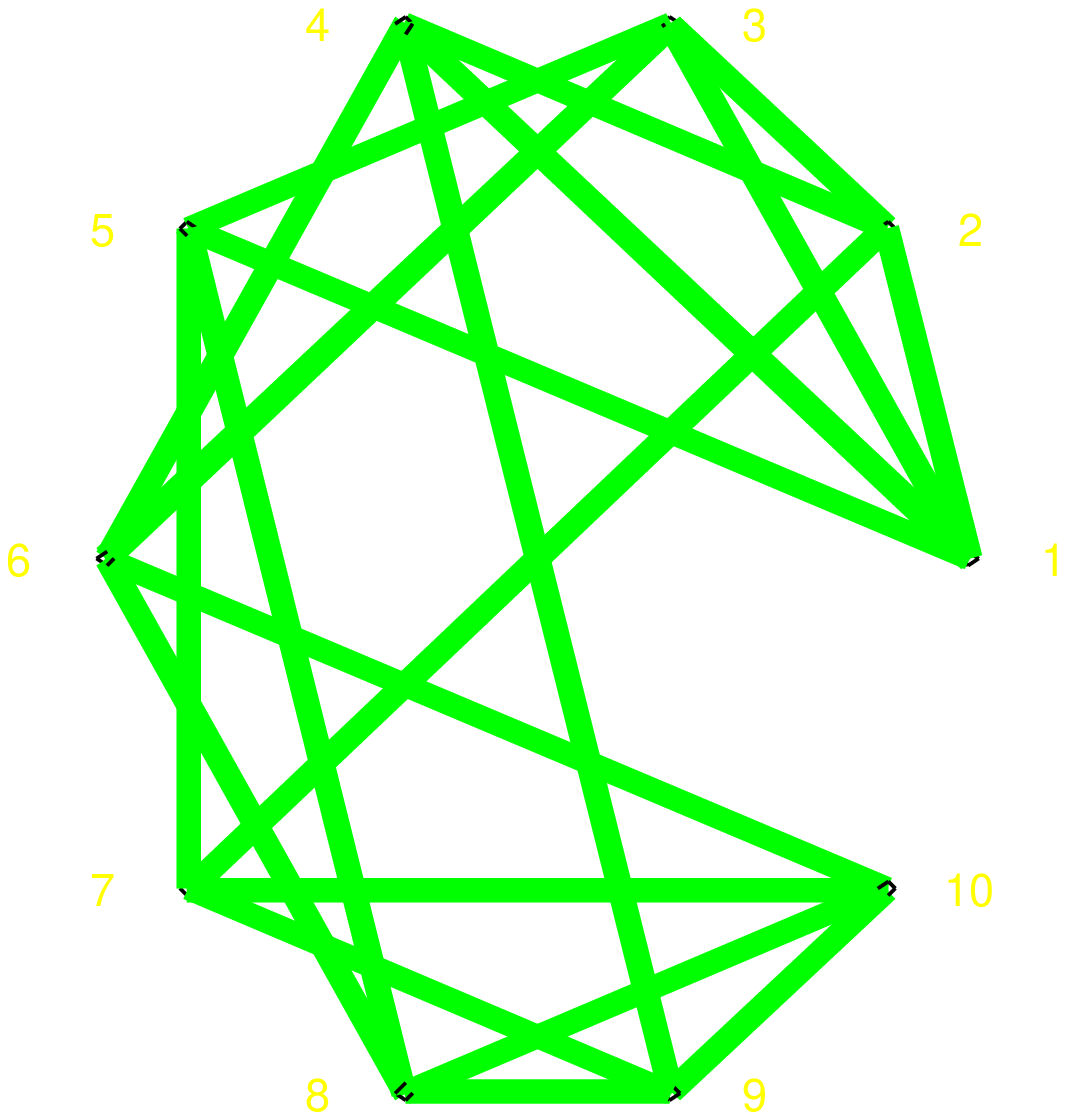}}&904.280~824~357&4&32&$P_3$&1&\\[-6mm]
13&&\multicolumn{6}{l}{$\frac{214841}{336}Q_{13,1}-\frac{423}{7}Q_{13,2}+\frac{705}{14}Q_{13,3}+183Q_3^2Q_7$}\\[1ex]\hline
$P_{8,7}$&\hspace*{-2mm}\raisebox{-9mm}{\includegraphics[width=12mm]{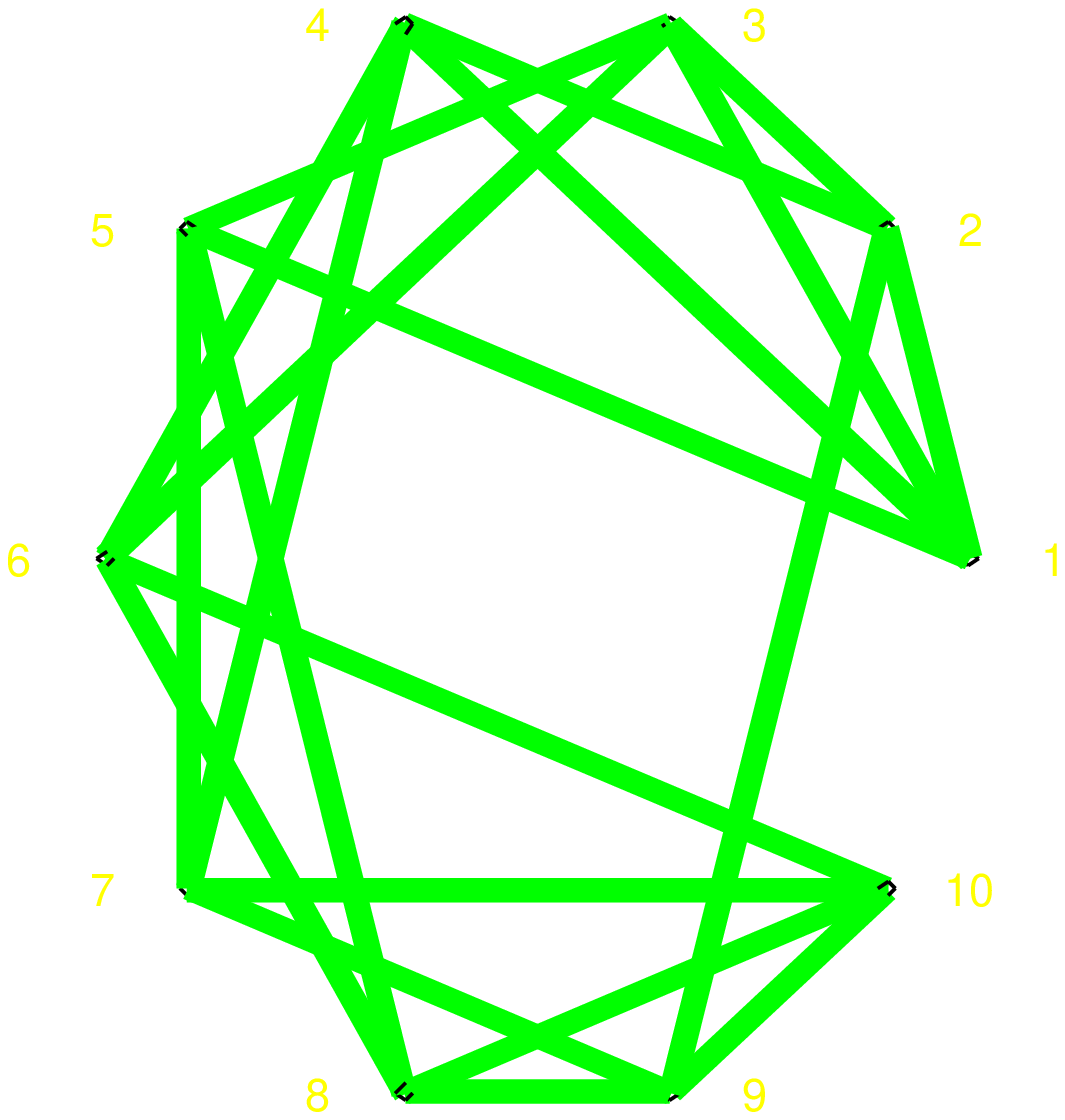}}&847.646~115~639&2&144&$P_3$&1&\\[-6mm]
13&&\multicolumn{6}{l}{$\frac{2061501}{2800}Q_{13,1}+\frac{13527}{175}Q_{13,2}-\frac{675}{14}Q_{13,3}$}\\[1ex]\hline
$P_{8,8}$&\hspace*{-2mm}\raisebox{-9mm}{\includegraphics[width=12mm]{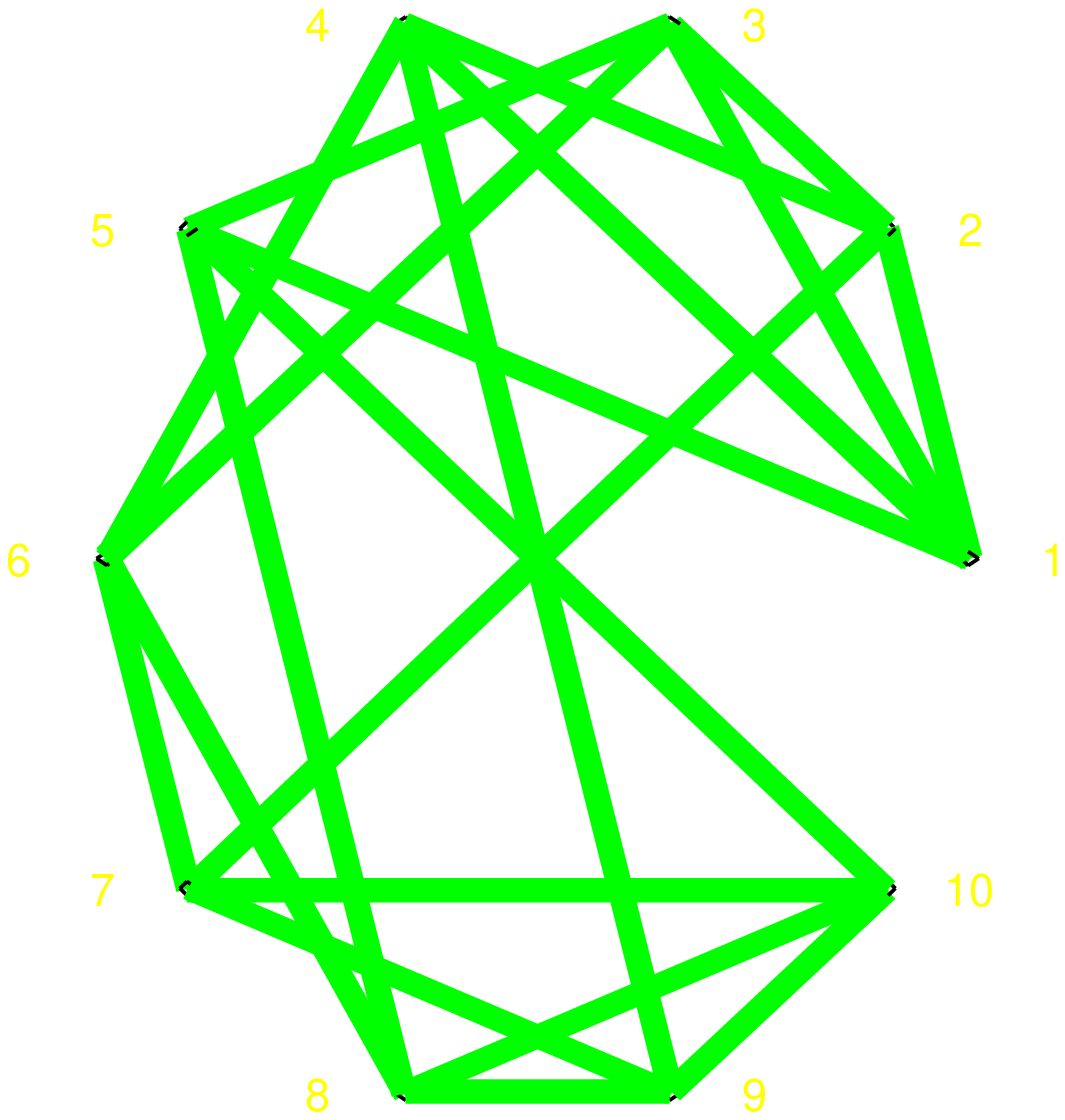}}&847.646~115~639&2&144&$P_3$&1&twist\\[-6mm]
13&&\multicolumn{6}{l}{$P_{8,7}$}\\[1ex]\hline
$P_{8,9}$&\hspace*{-2mm}\raisebox{-9mm}{\includegraphics[width=12mm]{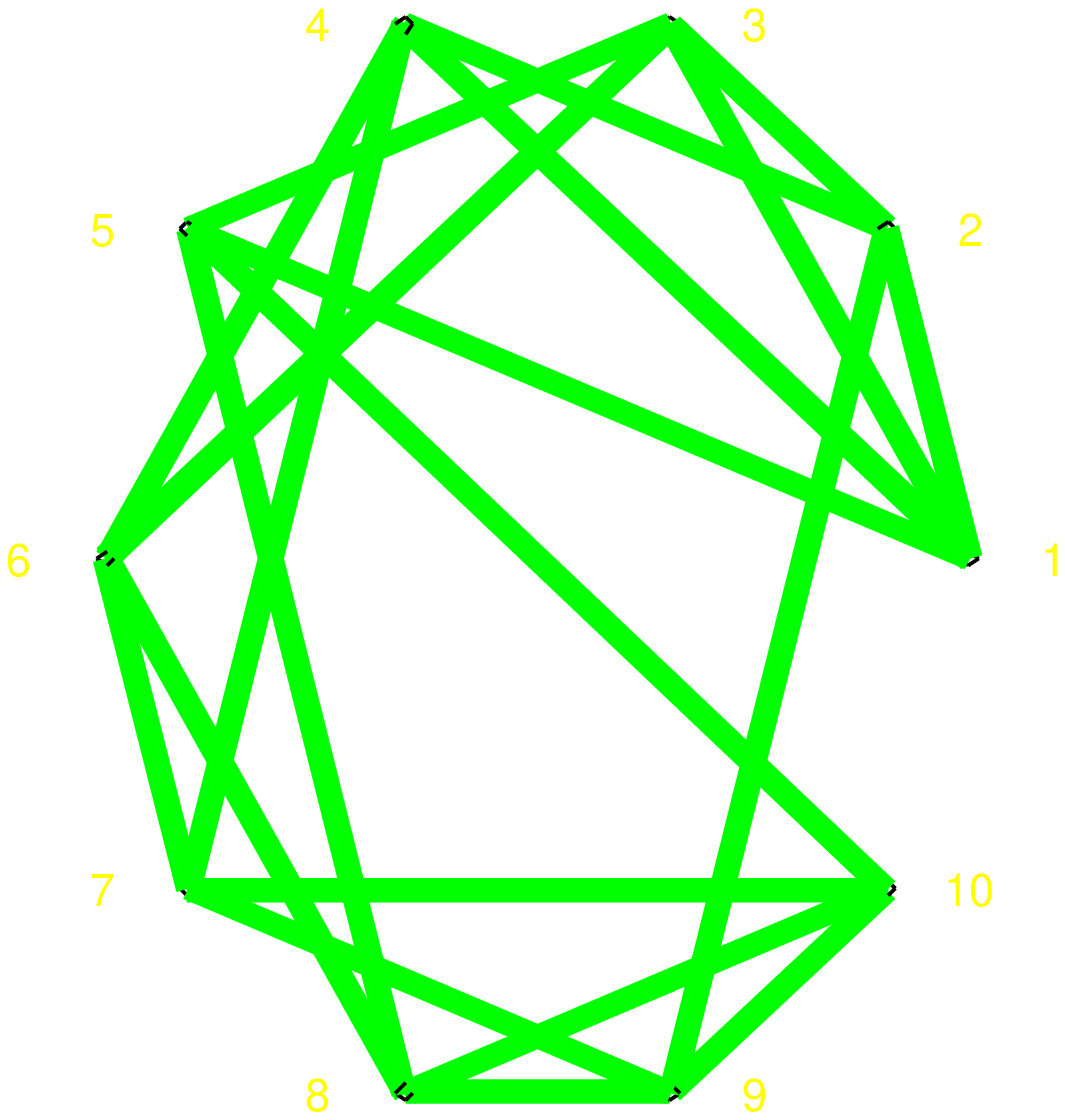}}&904.280~824~357&2&32&$P_3$&1&twist\\[-6mm]
13&&\multicolumn{6}{l}{$P_{8,6}$}\\[1ex]\hline
$P_{8,10}$&\hspace*{-2mm}\raisebox{-9mm}{\includegraphics[width=12mm]{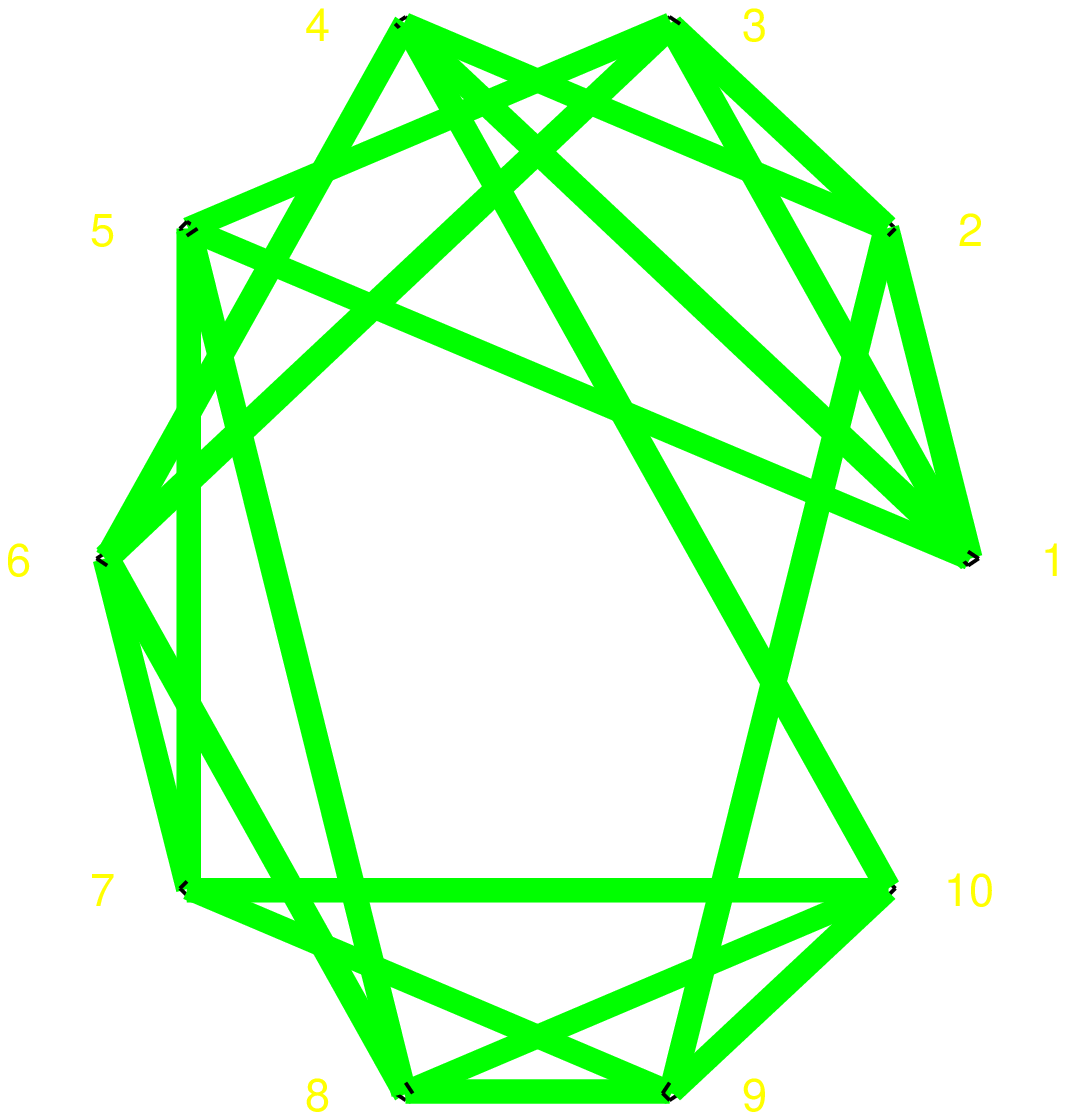}}&735.764~103~468&2&72&$P_3^2$&0&\\[-6mm]
12&&\multicolumn{6}{l}{$1536Q_{12,1}-1280Q_{12,2}-\frac{63}{2}Q_3Q_9+\frac{2493}{4}Q_5Q_7$}\\[1ex]\hline
$P_{8,11}$&\hspace*{-2mm}\raisebox{-9mm}{\includegraphics[width=12mm]{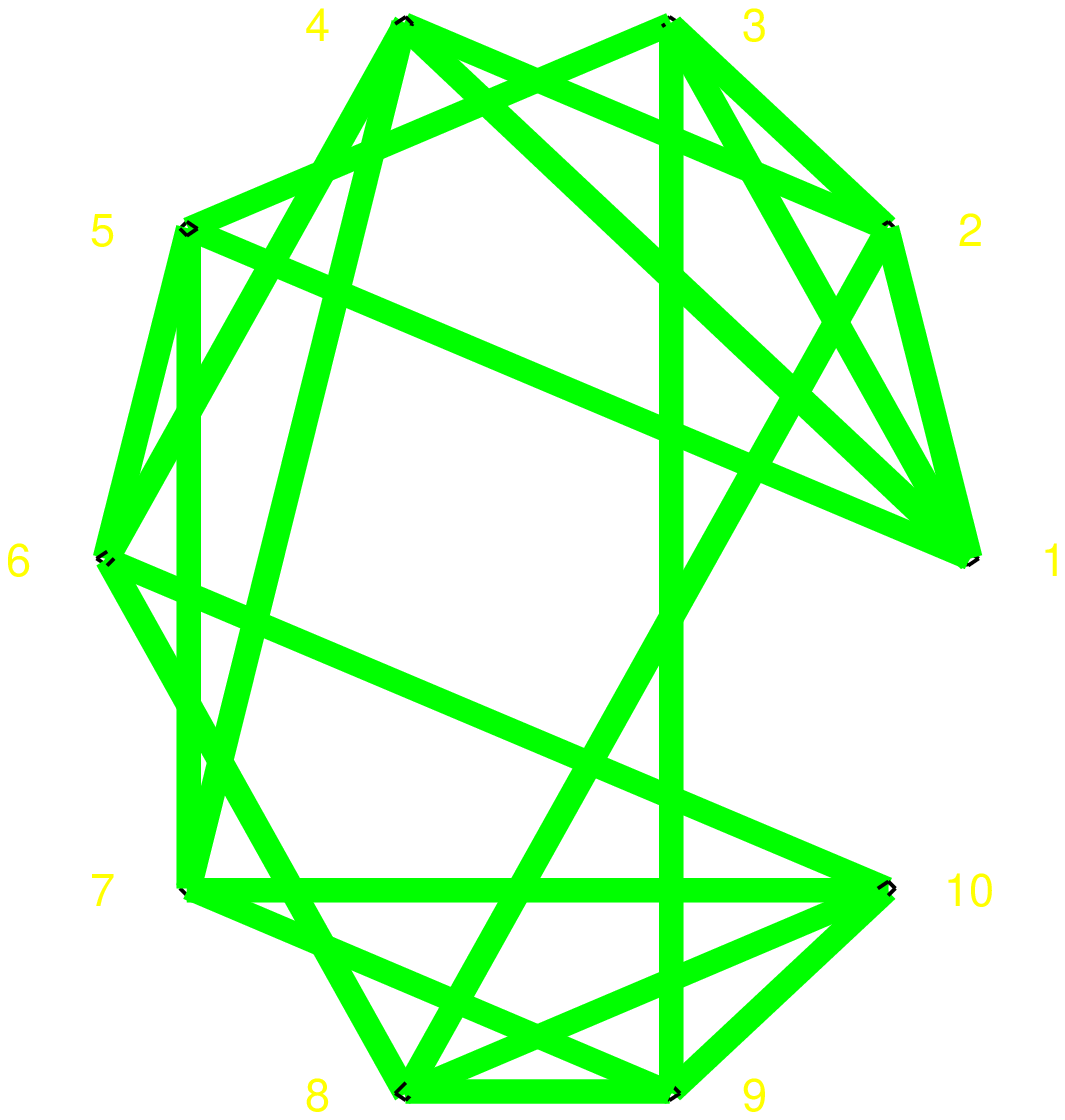}}&805.347~388~507&4&16&$P_3^2$&0&\\[-6mm]
12&&\multicolumn{6}{l}{$\frac{10240}{69}Q_{12,1}+\frac{81920}{69}Q_{12,2}-\frac{2560}{69}Q_{12,3}+\frac{45503}{69}Q_3Q_9+\frac{305}{46}Q_5Q_7-12Q_3^4$}
\end{tabular}

\begin{tabular}{llllllll}
name&graph&numerical value&$|$Aut$|$&index&anc.&$-c_2$&remarks, [Lit]\\[1ex]
\multicolumn{2}{l}{weight}&\multicolumn{6}{l}{exact value}\\[1ex]\hline\hline
$P_{8,12}$&\hspace*{-2mm}\raisebox{-9mm}{\includegraphics[width=12mm]{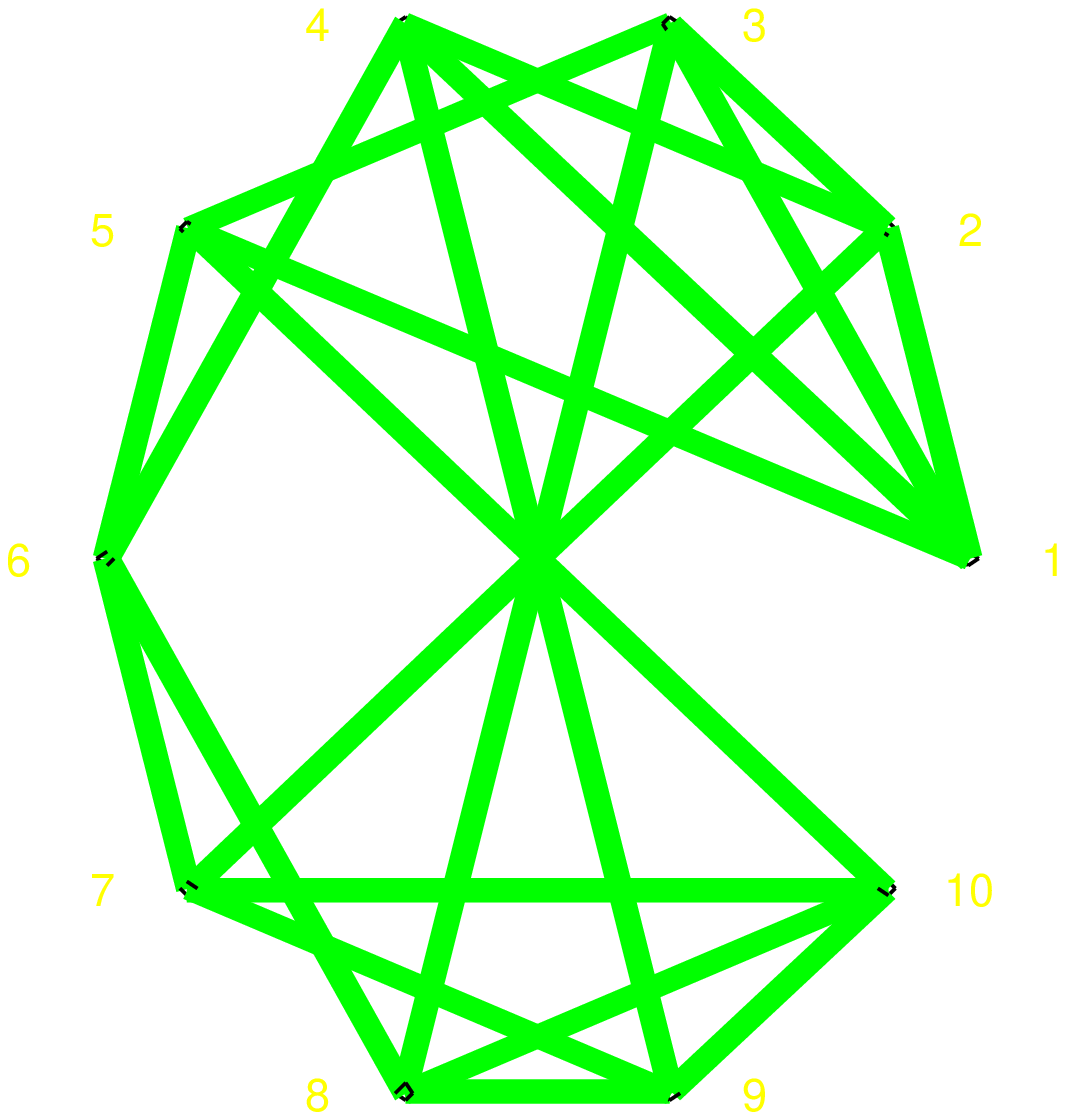}}&688.898~361~296&2&288&$P_3^2$&0&\\[-6mm]
12&&\multicolumn{6}{l}{$1024Q_{12,2}-1008Q_3Q_9+1800Q_5Q_7$}\\[1ex]\hline
$P_{8,13}$&\hspace*{-2mm}\raisebox{-9mm}{\includegraphics[width=12mm]{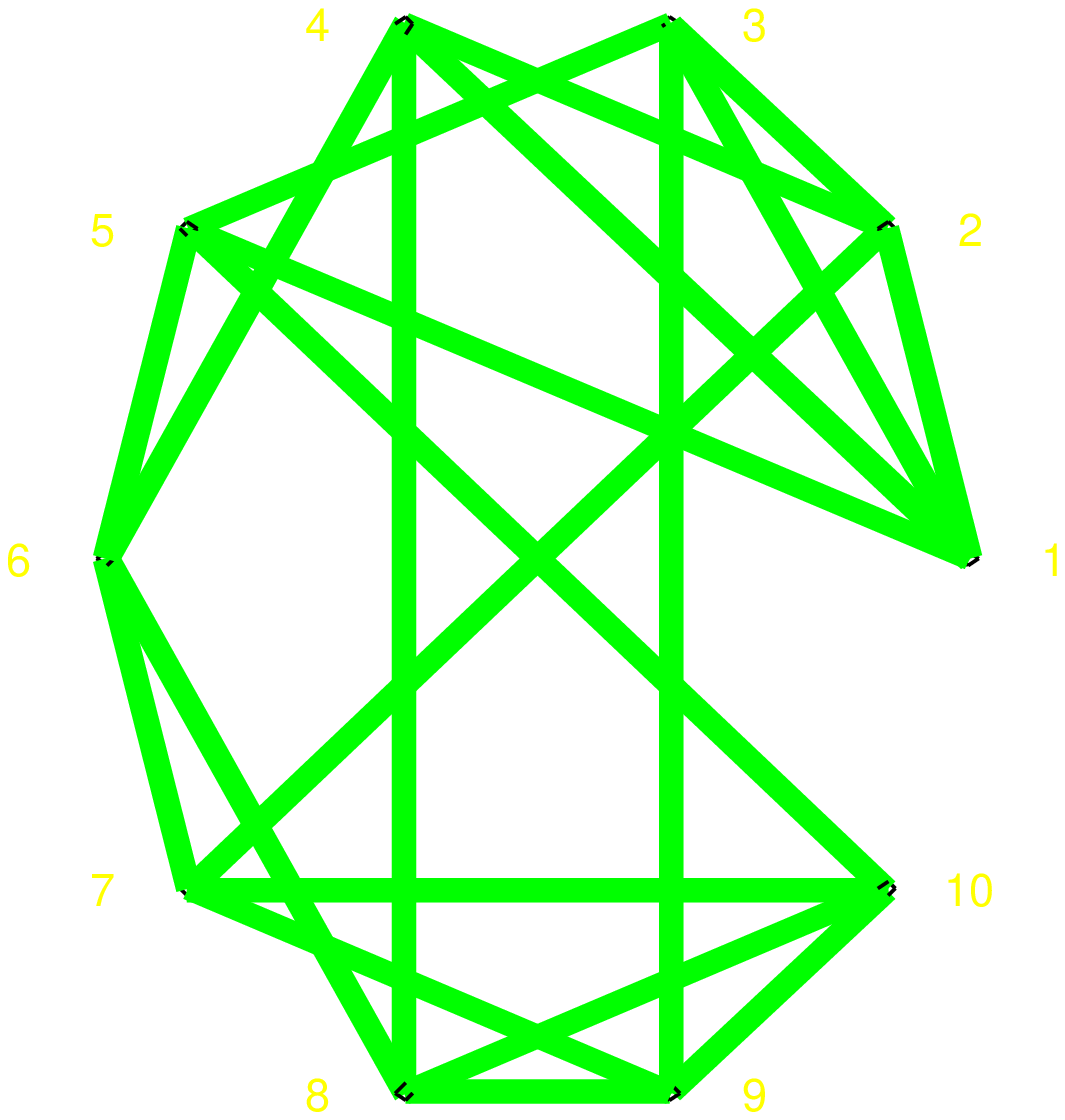}}&742.977~090~366&1&4&$P_3$&1&\\[-6mm]
13&&\multicolumn{6}{l}{$\frac{10087273}{9600}Q_{13,1}+\frac{8007}{200}Q_{13,2}-\frac{813}{16}Q_{13,3}+\frac{2247}{8}Q_3^2Q_7-465Q_3Q_5^2$}\\[1ex]\hline
$P_{8,14}$&\hspace*{-2mm}\raisebox{-9mm}{\includegraphics[width=12mm]{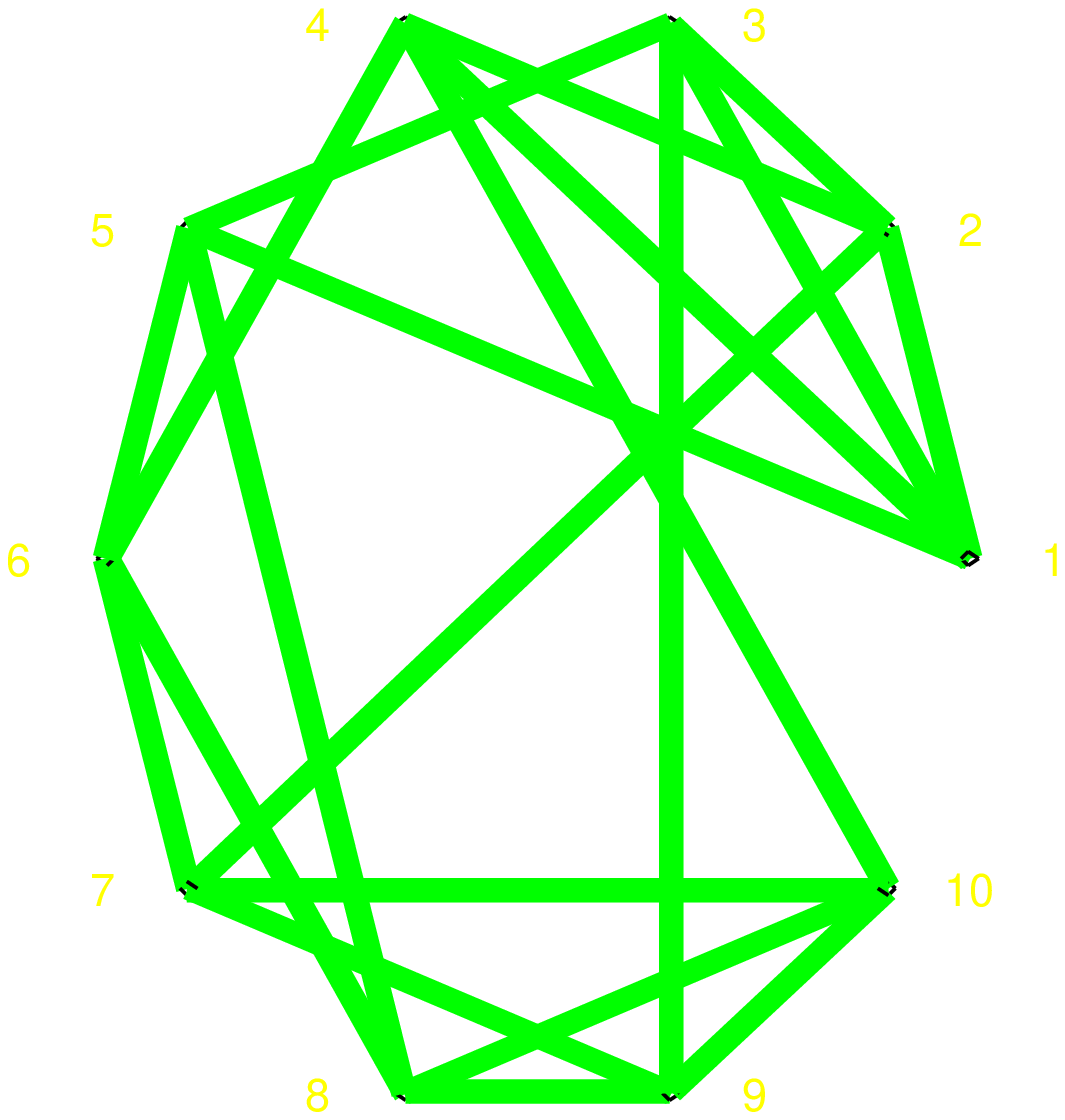}}&749.818~622~995&1&4&$P_3$&1&\\[-6mm]
13&&\multicolumn{6}{l}{$\frac{41038969}{67200}Q_{13,1}-\frac{30129}{1400}Q_{13,2}+\frac{1611}{112}Q_{13,3}+\frac{153}{8}Q_3^2Q_7+105Q_3Q_5^2$}\\[1ex]\hline
$P_{8,15}$&\hspace*{-2mm}\raisebox{-9mm}{\includegraphics[width=12mm]{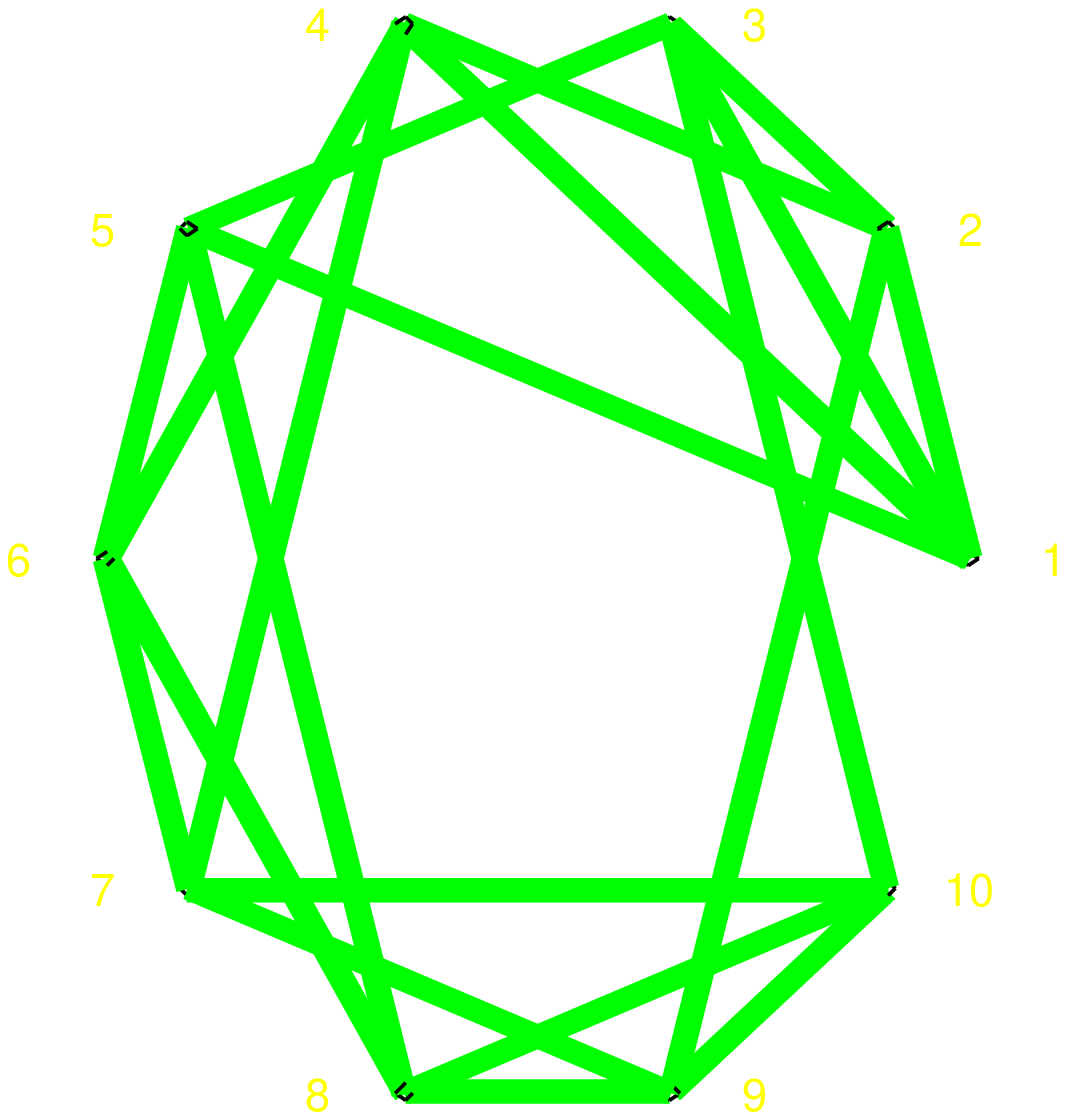}}&805.347~388~507&2&16&$P_3^2$&0&twist\\[-6mm]
12&&\multicolumn{6}{l}{$P_{8,11}$}\\[1ex]\hline
$P_{8,16}$&\hspace*{-2mm}\raisebox{-9mm}{\includegraphics[width=12mm]{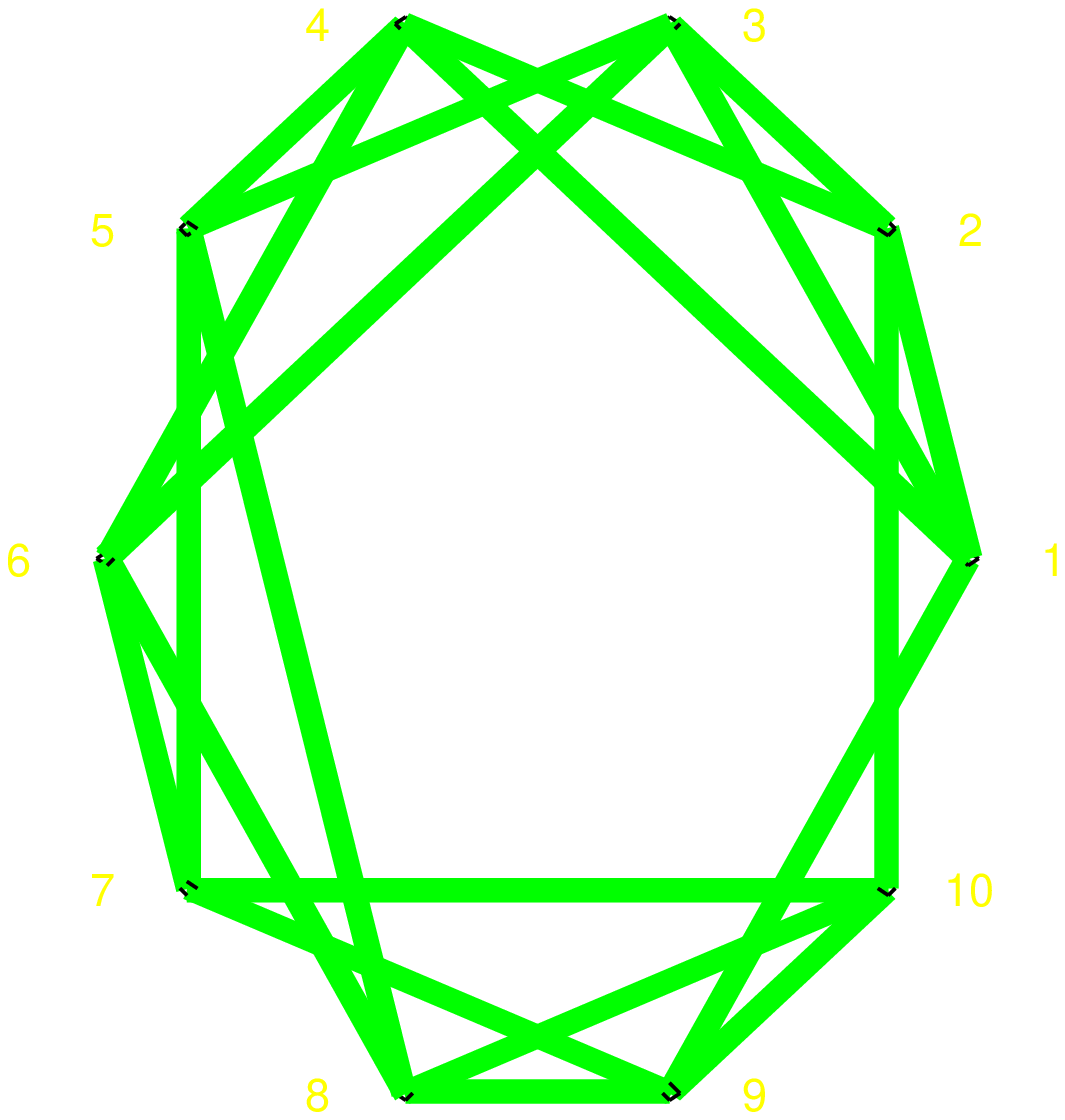}}&633.438~914~549&32&576&$P_3^3$&0&\hfill$[-10080Q_5^2$\\[-6mm]
11,10&&\multicolumn{6}{l}{$-\frac{31851}{5}Q_{11,1}\!+\!\frac{24336}{5}Q_{11,2}\!-\!10240Q_3Q_8\!+\!5040Q_3^2Q_5\!-\!8192Q_{10}\!+\!9648Q_3Q_7$}\\[1ex]\hline
$P_{8,17}$&\hspace*{-2mm}\raisebox{-9mm}{\includegraphics[width=12mm]{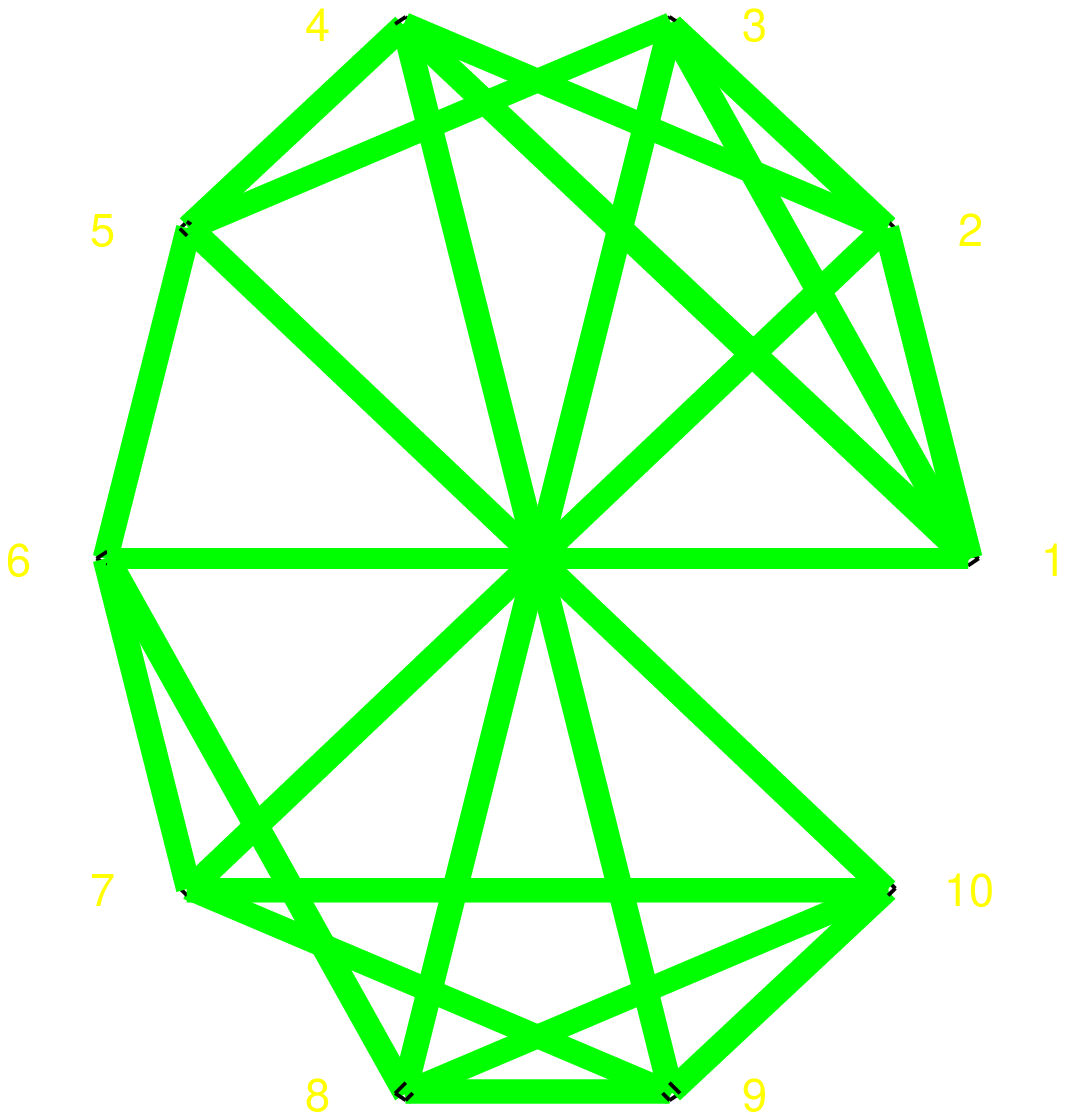}}&589.354~510~434&2&8&$P_3$&1&\hfill$[-1410Q_3Q_5^2$\\[-6mm]
13&&\multicolumn{6}{l}{$\frac{15548993}{4800}Q_{13,1}-\frac{17313}{100}Q_{13,2}+\frac{267}{8}Q_{13,3}+512Q_3Q_{10}+2304Q_5Q_8-\frac{825}{4}Q_3^2Q_7$}\\[1ex]\hline
$P_{8,18}$&\hspace*{-2mm}\raisebox{-9mm}{\includegraphics[width=12mm]{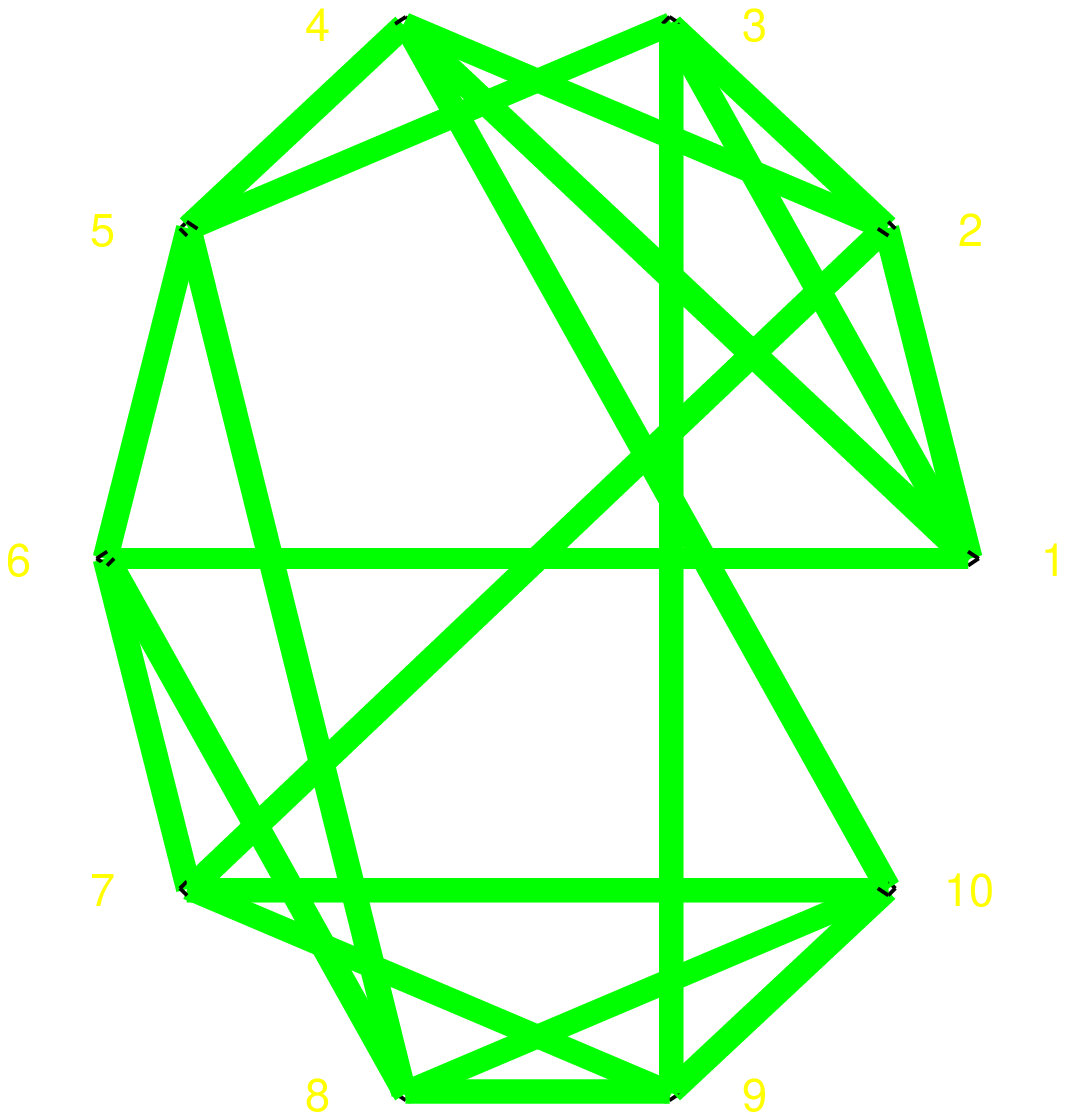}}&641.723~358~297&2&48&$P_3^2$&0&\\[-6mm]
12&&\multicolumn{6}{l}{$727Q_3Q_9-\frac{735}{2}Q_5Q_7+72Q_3^4$}\\[1ex]\hline
$P_{8,19}$&\hspace*{-2mm}\raisebox{-9mm}{\includegraphics[width=12mm]{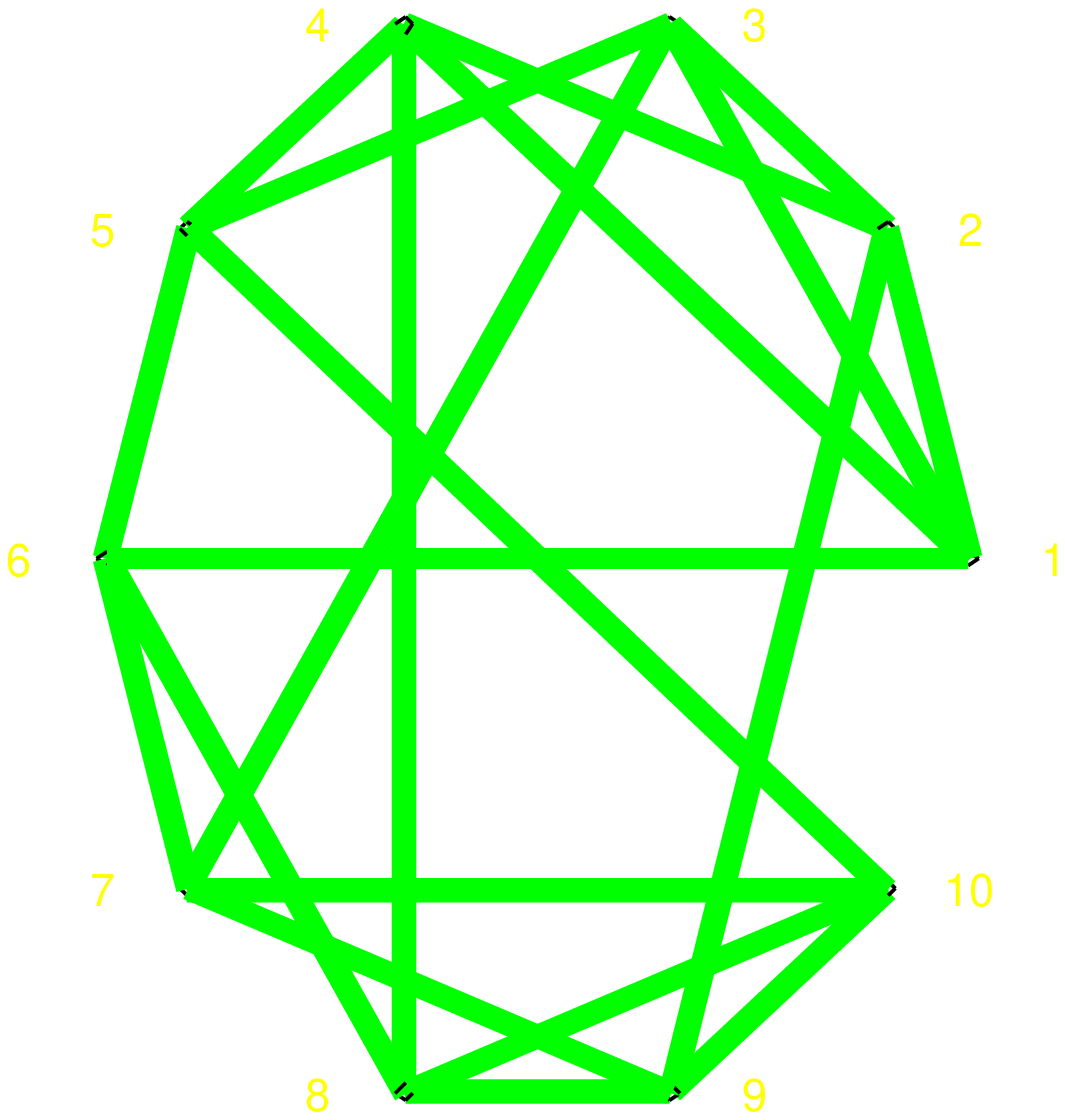}}&598.617~690~750&4&32&$P_3^2$&0&\\[-6mm]
12&&\multicolumn{6}{l}{$\frac{10240}{69}Q_{12,1}+\frac{81920}{69}Q_{12,2}-\frac{2560}{69}Q_{12,3}+\frac{13970}{69}Q_3Q_9+\frac{11020}{23}Q_5Q_7-84Q_3^4$}\\[1ex]\hline
$P_{8,20}$&\hspace*{-2mm}\raisebox{-9mm}{\includegraphics[width=12mm]{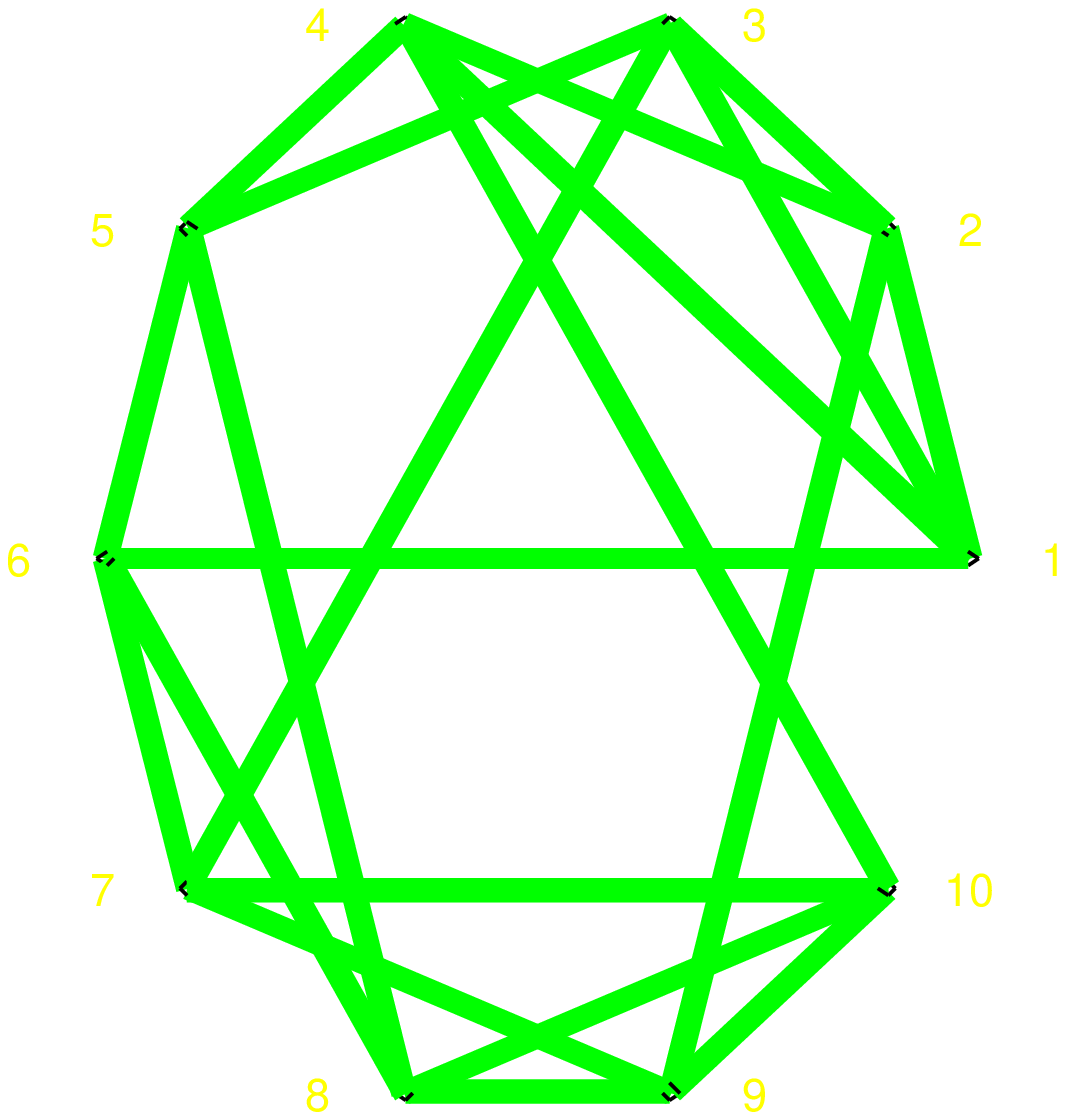}}&641.346~699~620&1&6&$P_3$&1&\hfill$[+\frac{1035}{2}Q_3Q_5^2$\\[-6mm]
13&&\multicolumn{6}{l}{$\frac{4375463}{44800}Q_{13,1}\!+\!\frac{383001}{2800}Q_{13,2}\!-\!\frac{23607}{224}Q_{13,3}\!-\!256Q_3Q_{10}\!+\!256Q_5Q_8\!-\!\frac{1953}{16}Q_3^2Q_7$}\\[1ex]\hline
$P_{8,21}$&\hspace*{-2mm}\raisebox{-9mm}{\includegraphics[width=12mm]{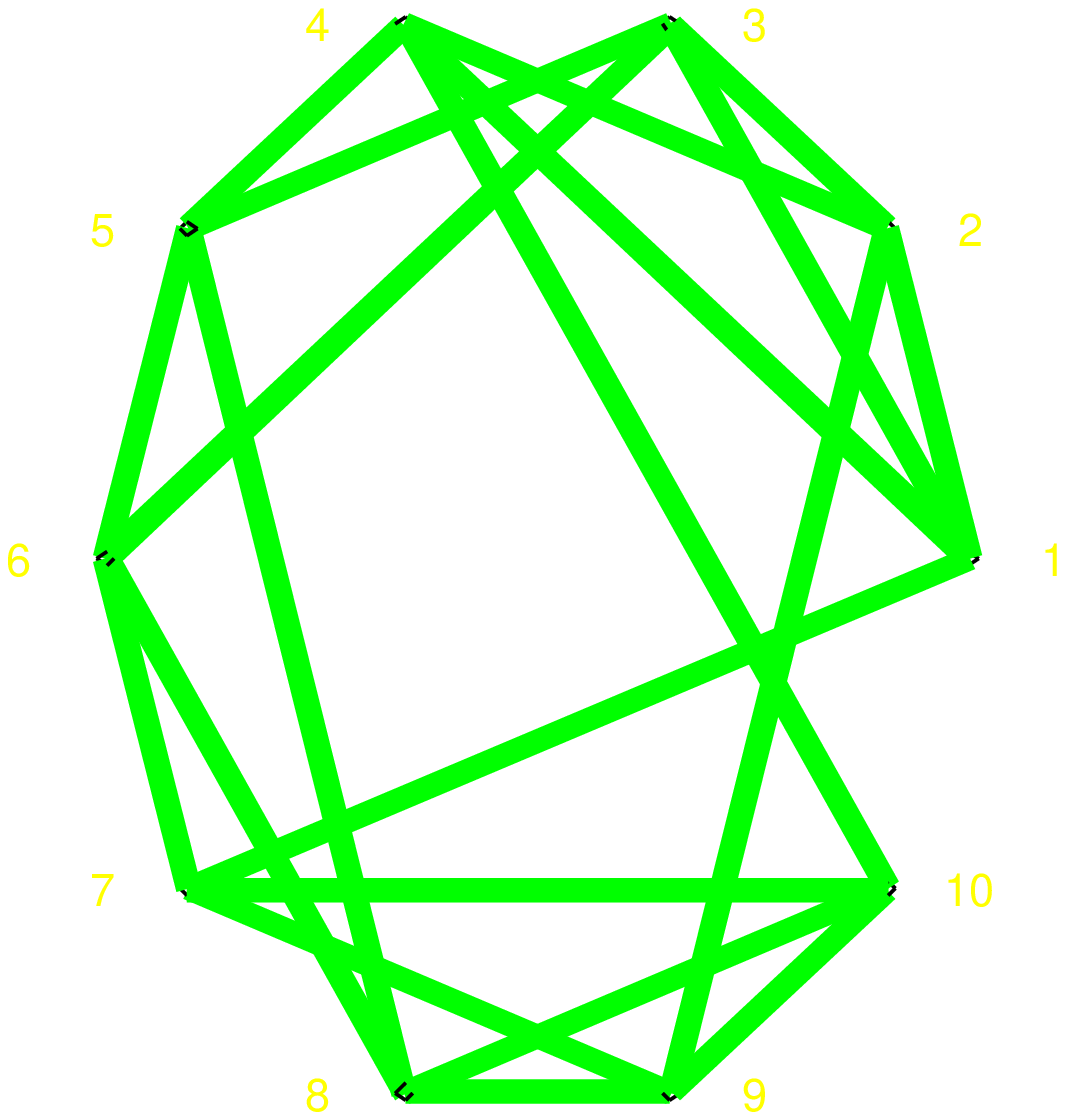}}&742.977~090~366&2&4&$P_3$&1&Fourier, twist\\[-6mm]
13&&\multicolumn{6}{l}{$P_{8,13}$}\\[1ex]\hline
$P_{8,22}$&\hspace*{-2mm}\raisebox{-9mm}{\includegraphics[width=12mm]{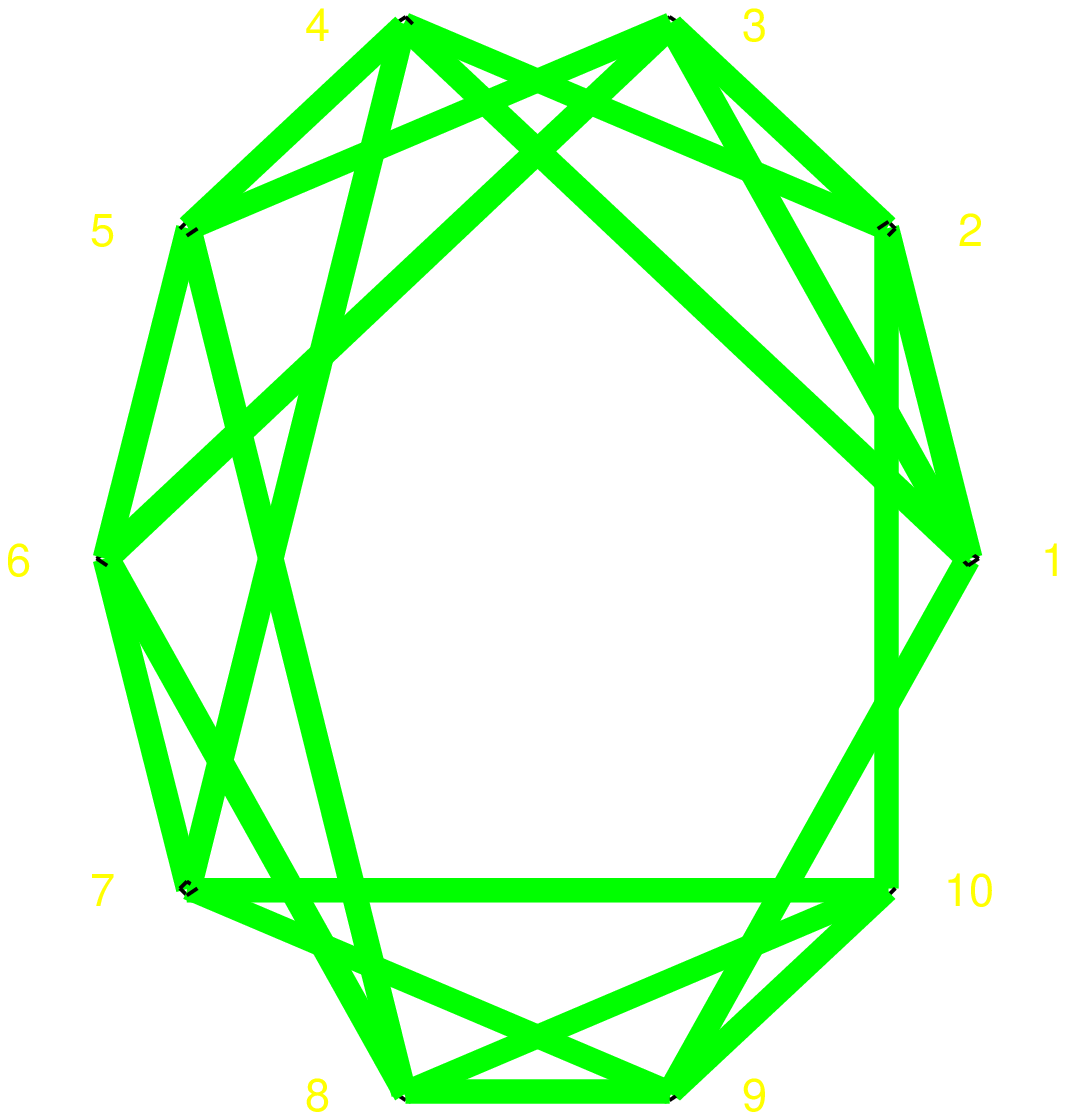}}&735.764~103~468&4&72&$P_3^2$&0&twist\\[-6mm]
12&&\multicolumn{6}{l}{$P_{8,10}$}\\[1ex]\hline
$P_{8,23}$&\hspace*{-2mm}\raisebox{-9mm}{\includegraphics[width=12mm]{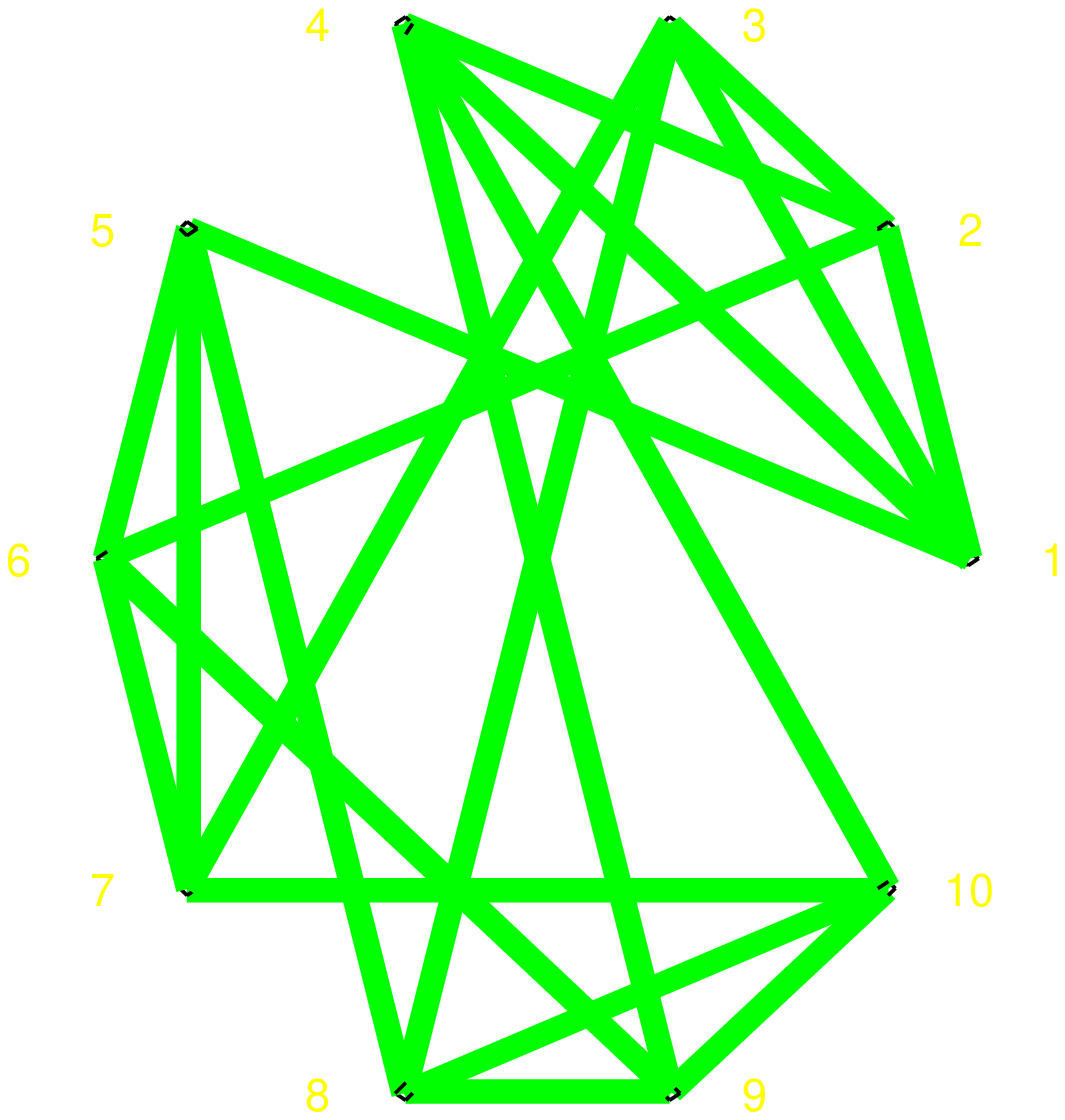}}&589.354~510~434&2&8&$P_3$&1&twist\\[-6mm]
13&&\multicolumn{6}{l}{$P_{8,17}$}\\[1ex]\hline
$P_{8,24}$&\hspace*{-2mm}\raisebox{-9mm}{\includegraphics[width=12mm]{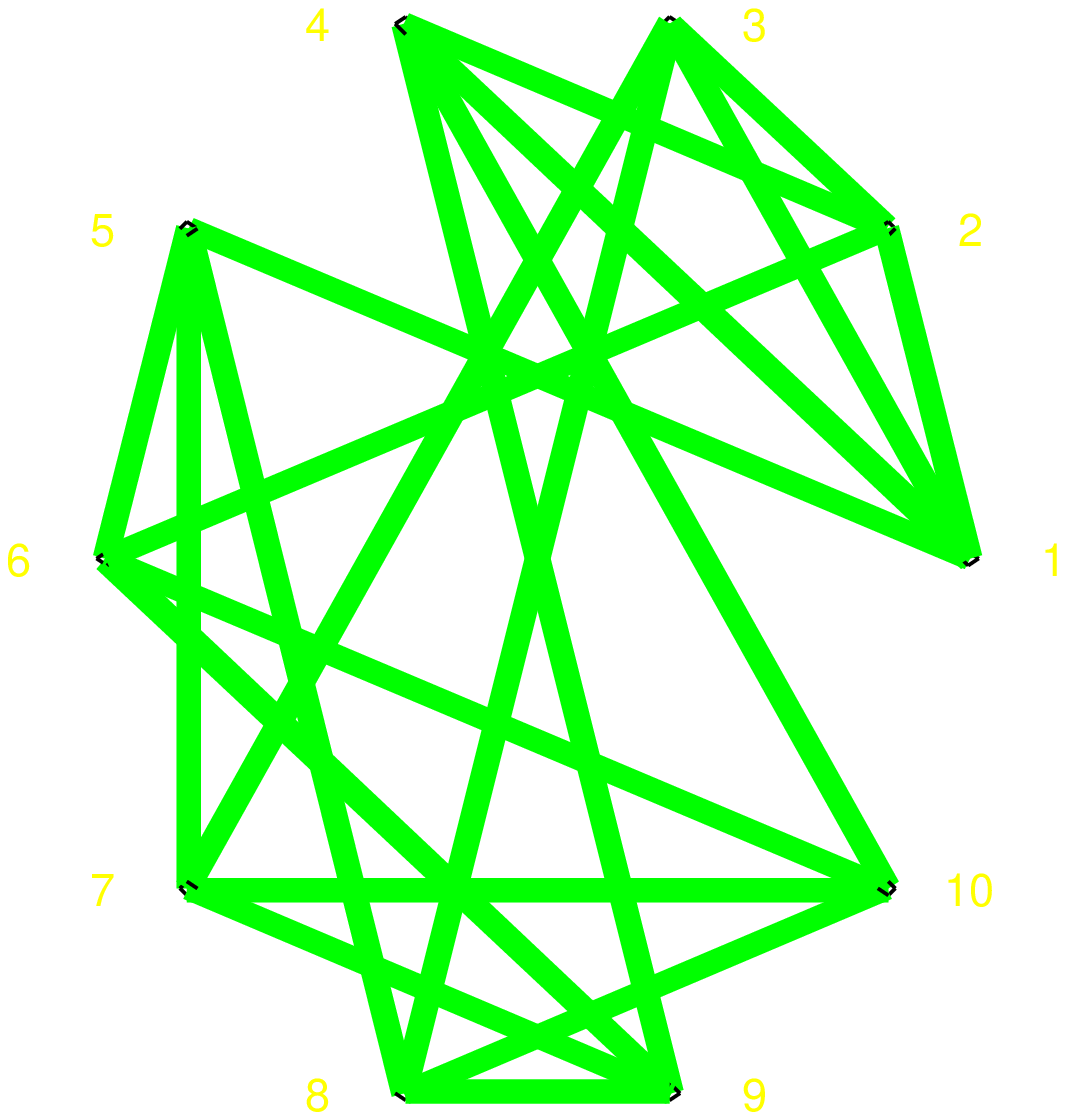}}&414.873~975~722&8&144&$P_{7,8}$&$z_2$&\hfill$[+\frac{17577}{2}Q_3^2Q_7\!+\!10800Q_3Q_5^2$\\[-6mm]
13&&\multicolumn{6}{l}{$-\,\frac{40309047}{1400}Q_{13,1}-\frac{353601}{350}Q_{13,2}+\frac{48051}{28}Q_{13,3}-17920Q_3Q_{10}-19840Q_5Q_8$}\\[1ex]\hline
$P_{8,25}$&\hspace*{-2mm}\raisebox{-9mm}{\includegraphics[width=12mm]{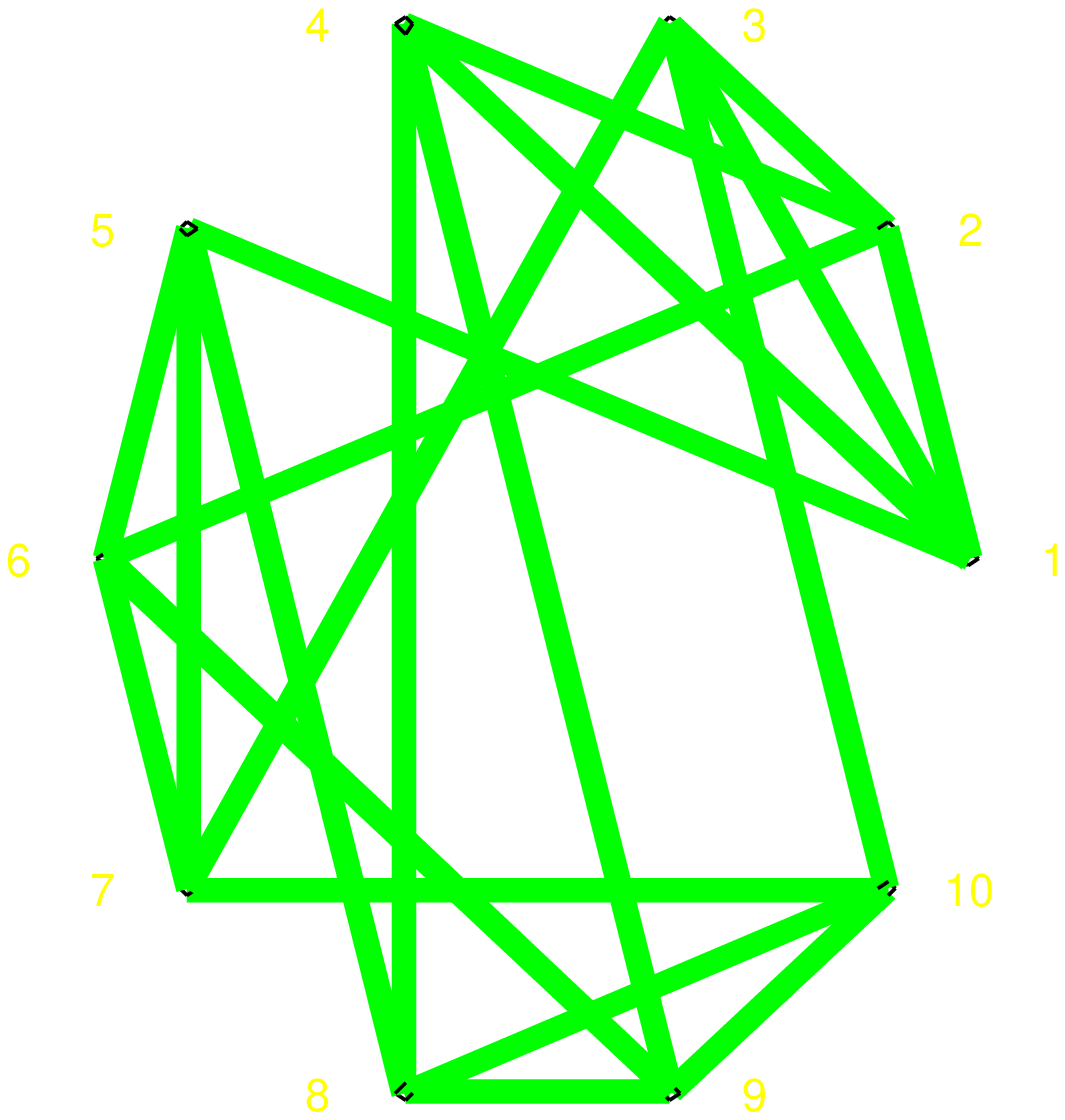}}&641.723~358~297&4&48&$P_3^2$&0&Fourier, twist\\[-6mm]
12&&\multicolumn{6}{l}{$P_{8,18}$}\\[1ex]\hline
$P_{8,26}$&\hspace*{-2mm}\raisebox{-9mm}{\includegraphics[width=12mm]{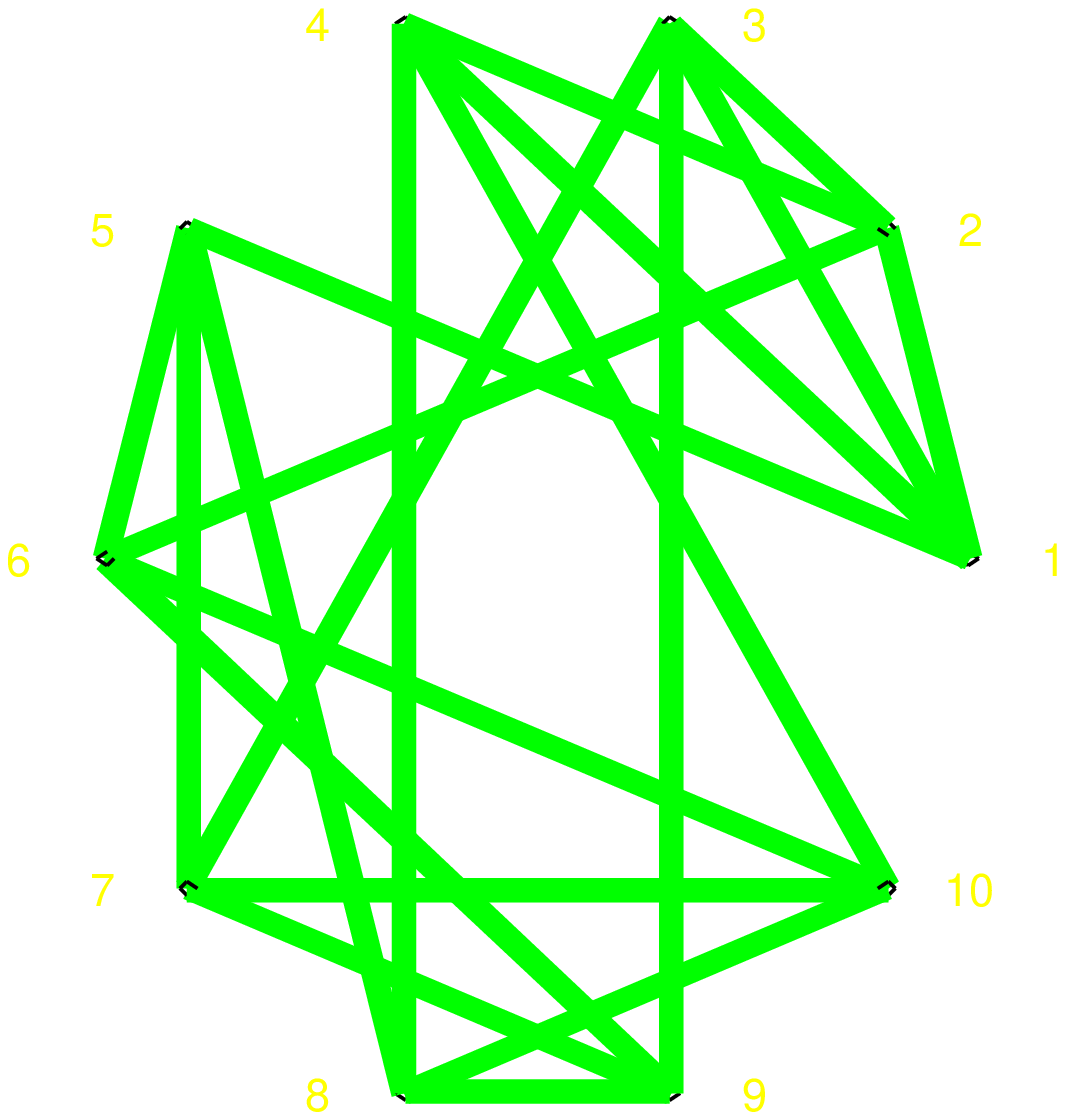}}&500.445~152~216&4&6&$P_{7,9}$&$z_2$&\hfill$[-\frac{1215}{2}Q_3Q_5^2$\\[-6mm]
13&&\multicolumn{6}{l}{$\frac{25114323}{22400}Q_{13,1}-\frac{113979}{1400}Q_{13,2}+\frac{4443}{112}Q_{13,3}-896Q_3Q_{10}+1984Q_5Q_8+\frac{1701}{8}Q_3^2Q_7$}\\[1ex]\hline
$P_{8,27}$&\hspace*{-2mm}\raisebox{-9mm}{\includegraphics[width=12mm]{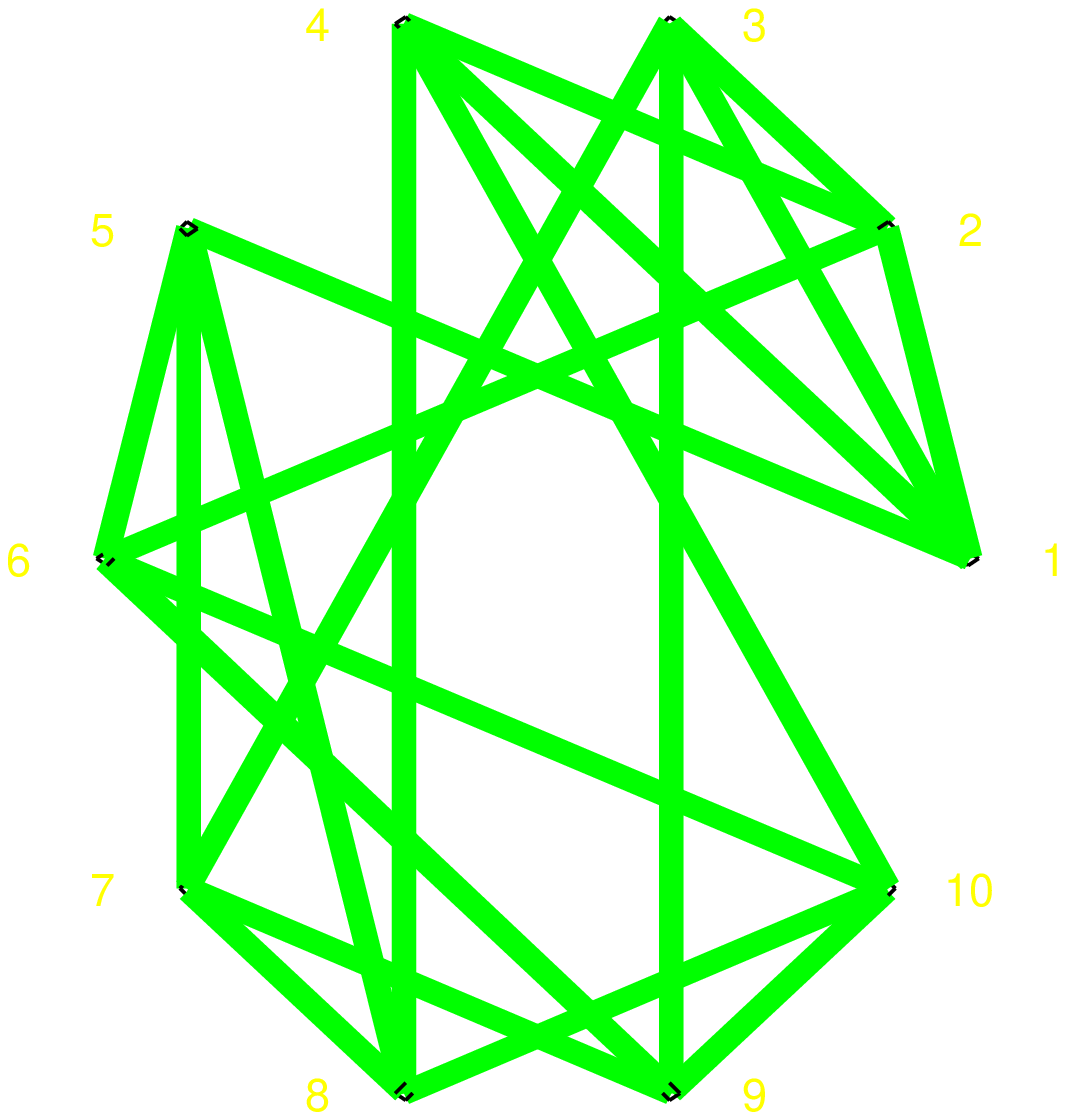}}&598.617~690~750&4&32&$P_{7,10}$&0&Fourier\\[-6mm]
12&&\multicolumn{6}{l}{$P_{8,19}$}
\end{tabular}

\begin{tabular}{llllllll}
name&graph&numerical value&$|$Aut$|$&index&anc.&$-c_2$&remarks, [Lit]\\[1ex]
\multicolumn{2}{l}{weight}&\multicolumn{6}{l}{exact value}\\[1ex]\hline\hline
$P_{8,28}$&\hspace*{-2mm}\raisebox{-9mm}{\includegraphics[width=12mm]{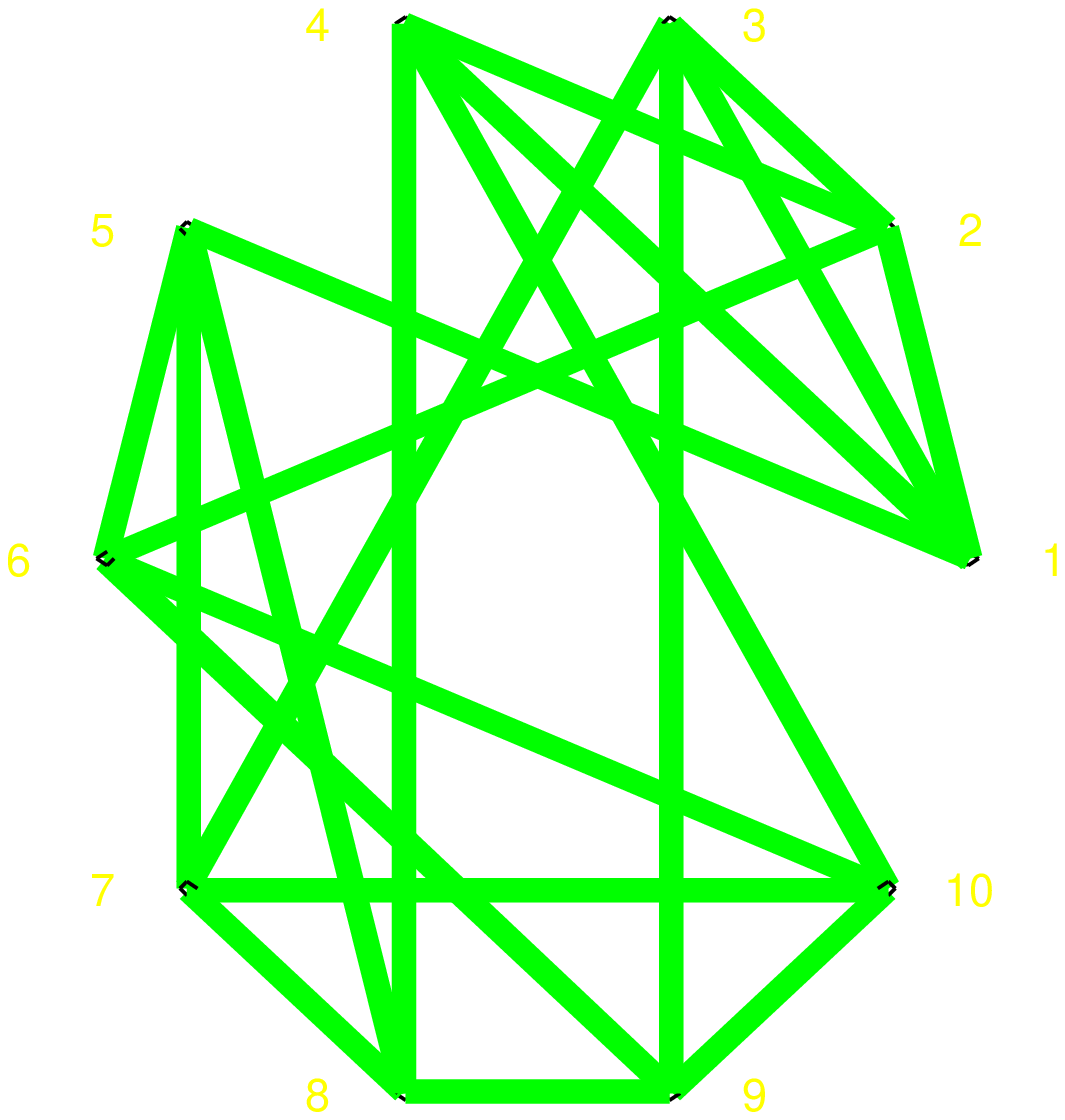}}&500.445~152~216&4&6&$P_{7,9}$&$z_2$&twist\\[-6mm]
13&&\multicolumn{6}{l}{$P_{8,26}$}\\[1ex]\hline
$P_{8,29}$&\hspace*{-2mm}\raisebox{-9mm}{\includegraphics[width=12mm]{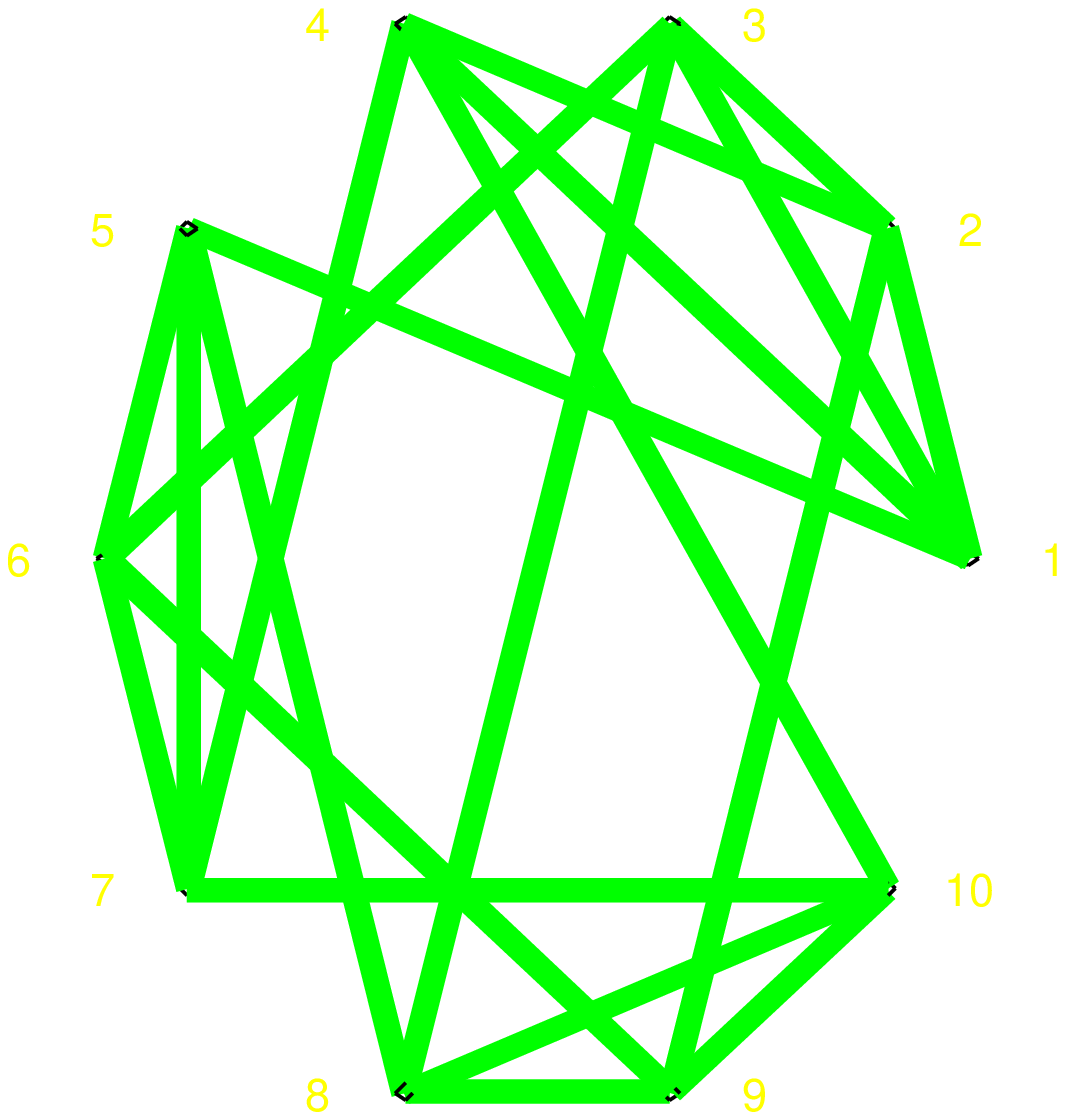}}&553.273~794~612&2&1&$P_{7,9}$&$z_2$&\\[-6mm]
13&&\multicolumn{6}{l}{$\frac{78907643}{89600}Q_{13,1}-\frac{306689}{5600}Q_{13,2}+\frac{16987}{448}Q_{13,3}+\frac{10129}{32}Q_3^2Q_7-\frac{2275}{4}Q_3Q_5^2$}\\[1ex]\hline
$P_{8,30}$&\hspace*{-2mm}\raisebox{-9mm}{\includegraphics[width=12mm]{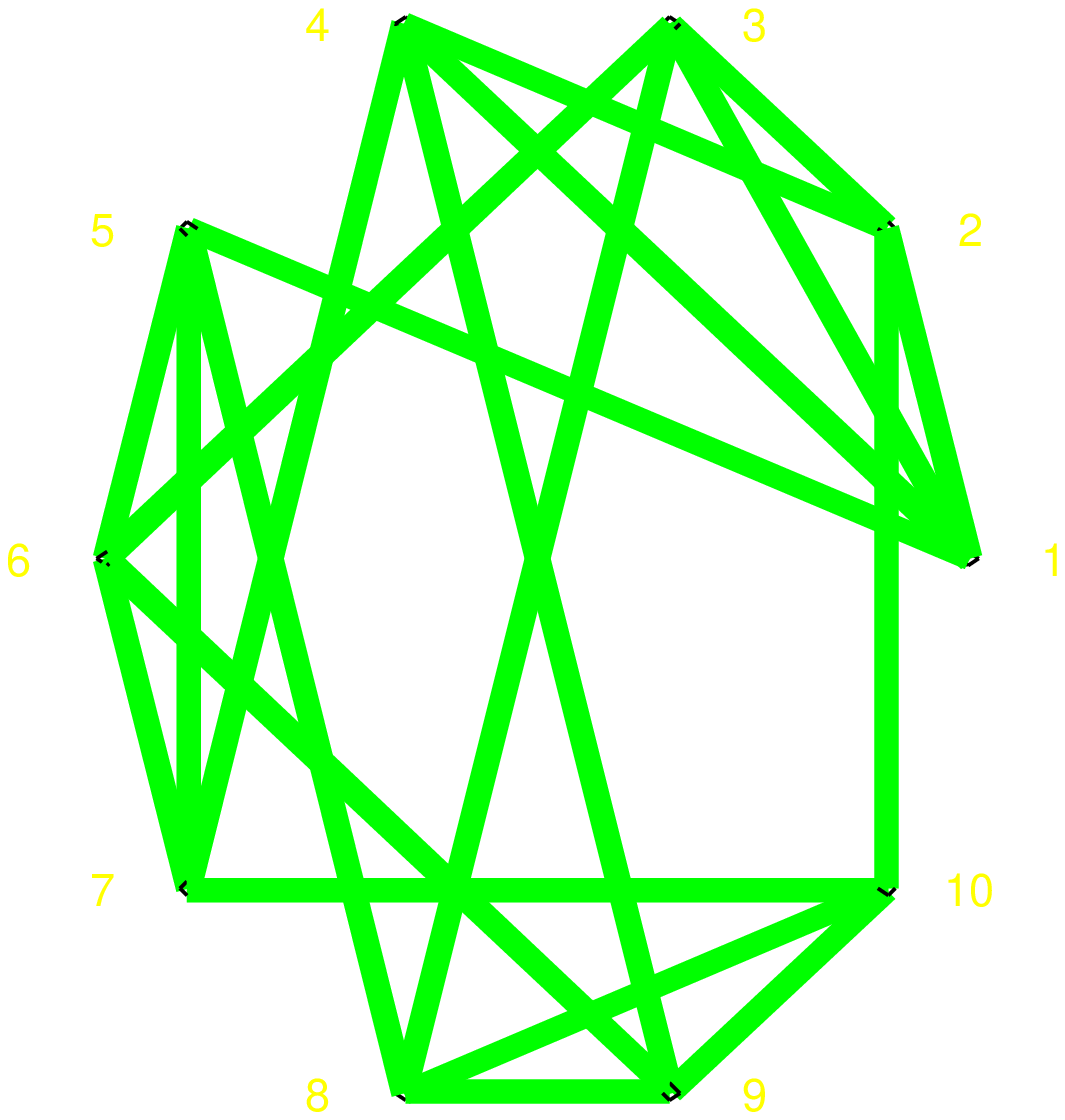}}&$\approx 505.5$&2&?&$P_{7,11}$&$z_3$&\cite{EPHepp}\\[-6mm]
13&&\multicolumn{6}{l}{?}\\[1ex]\hline
$P_{8,31}$&\hspace*{-2mm}\raisebox{-9mm}{\includegraphics[width=12mm]{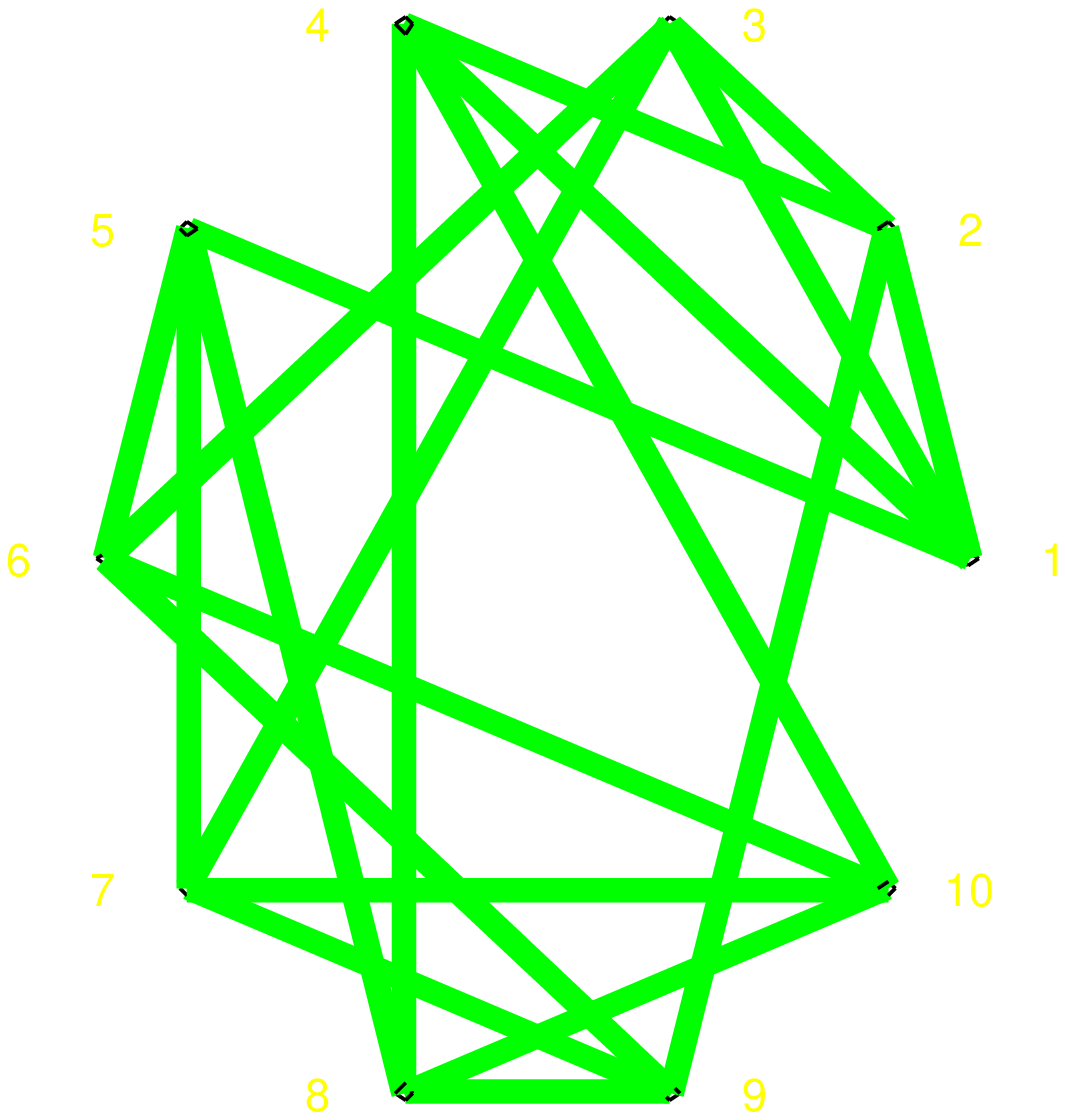}}&460.088~538~246&4&8&$P_{7,8}$&$z_2$&\hfill$[-7330Q_3Q_5^2$\\[-6mm]
13&&\multicolumn{6}{l}{$\frac{67363763}{5600}Q_{13,1}\!-\!\frac{36487}{175}Q_{13,2}\!-\!\frac{1913}{7}Q_{13,3}\!+\!1792Q_3Q_{10}\!+\!7936Q_5Q_8\!+\!98Q_3^2Q_7$}\\[1ex]\hline
$P_{8,32}$&\hspace*{-2mm}\raisebox{-9mm}{\includegraphics[width=12mm]{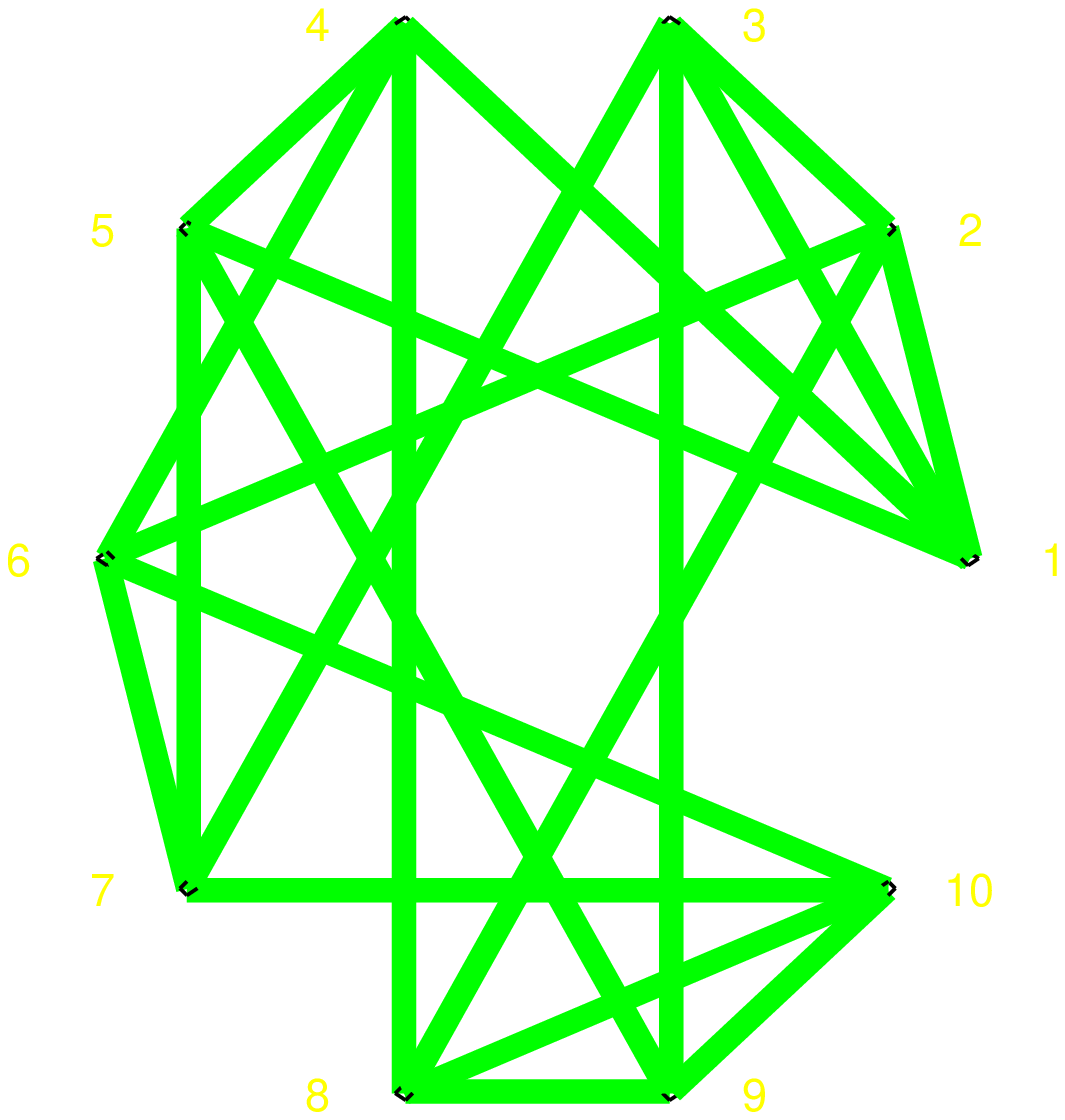}}&470.720~125~534&16&17280&$P_{8,32}$&0&\\[-6mm]
12&&\multicolumn{6}{l}{$-\frac{81920}{23}Q_{12,1}-\frac{655360}{23}Q_{12,2}+\frac{20480}{23}Q_{12,3}+\frac{8760}{23}Q_3Q_9+\frac{15660}{23}Q_5Q_7$}\\[1ex]\hline
$P_{8,33}$&\hspace*{-2mm}\raisebox{-9mm}{\includegraphics[width=12mm]{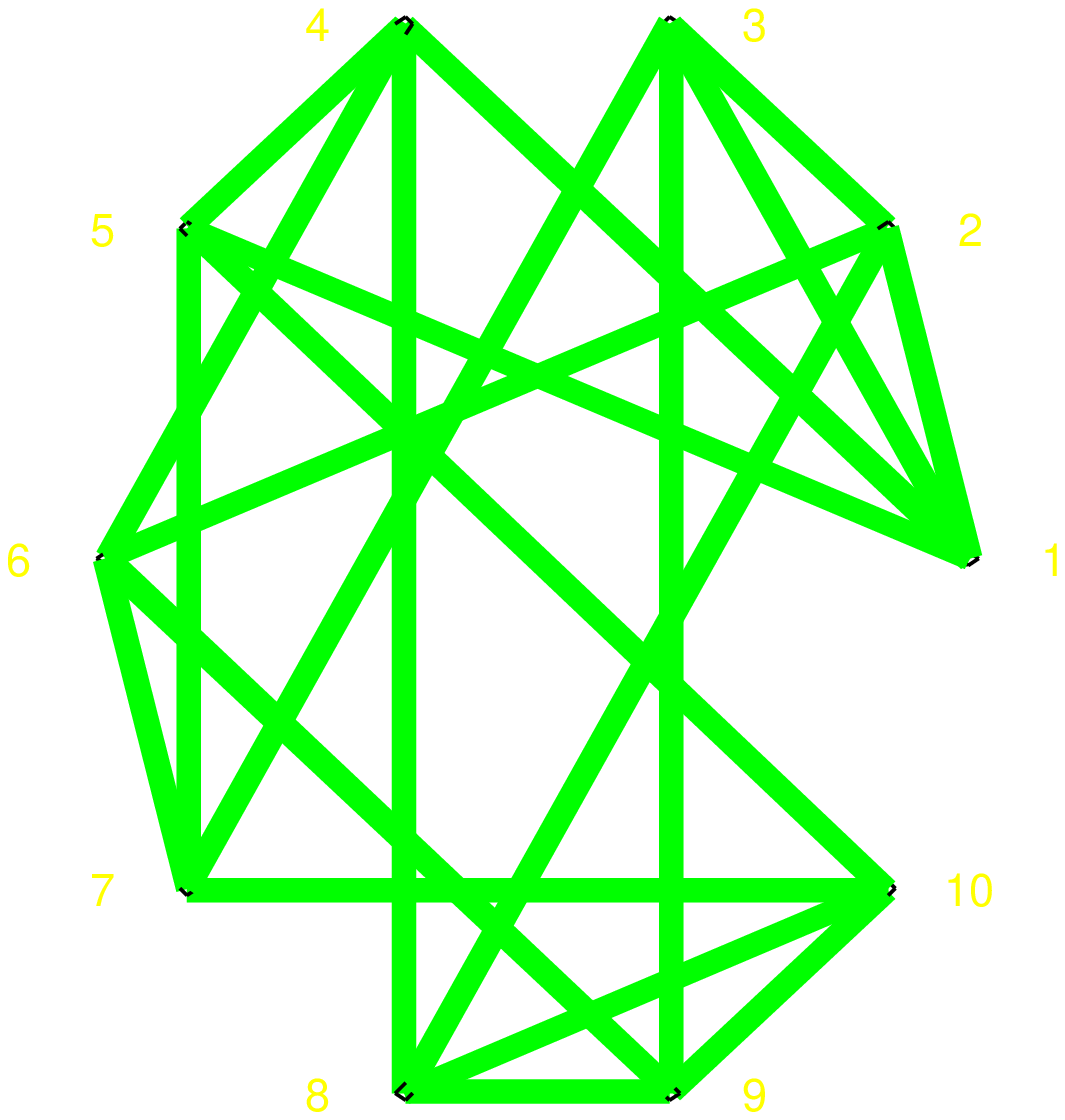}}&468.038~498~992&2&?&$P_{8,33}$&$z_3$&\\[-6mm]
13&&\multicolumn{6}{l}{$Q_{13,4}$}\\[1ex]\hline
$P_{8,34}$&\hspace*{-2mm}\raisebox{-9mm}{\includegraphics[width=12mm]{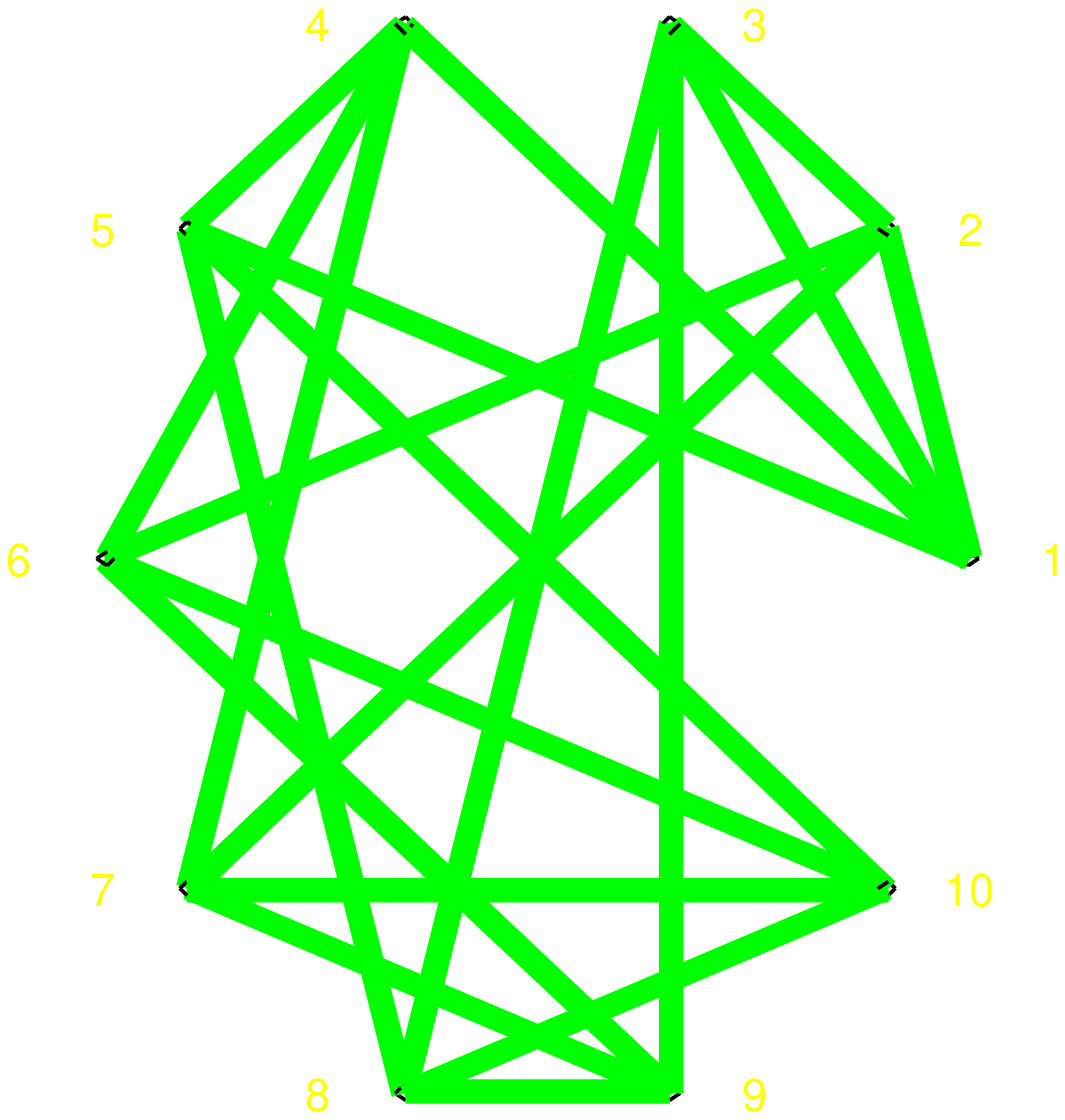}}&470.720~125~534&16&17280&$P_{8,34}$&0&twist\\[-6mm]
12&&\multicolumn{6}{l}{$P_{8,32}$}\\[1ex]\hline
$P_{8,35}$&\hspace*{-2mm}\raisebox{-9mm}{\includegraphics[width=12mm]{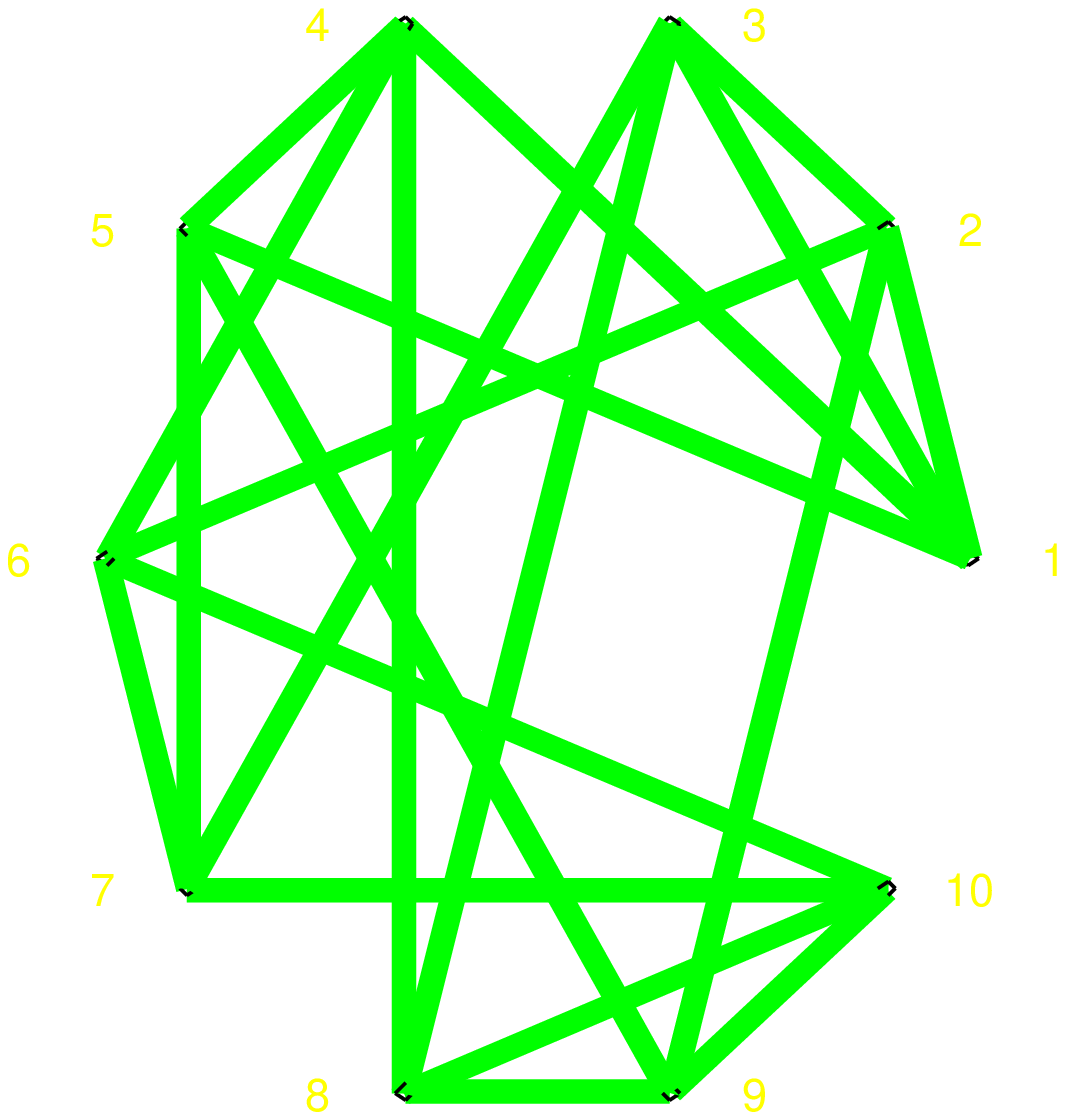}}&$\approx 460.2$&16&?&$P_{8,35}$&$z_2$&Hepp, \cite{EPHepp}\\[-6mm]
13&&\multicolumn{6}{l}{$P_{8,31}?$}\\[1ex]\hline
$P_{8,36}$&\hspace*{-2mm}\raisebox{-9mm}{\includegraphics[width=12mm]{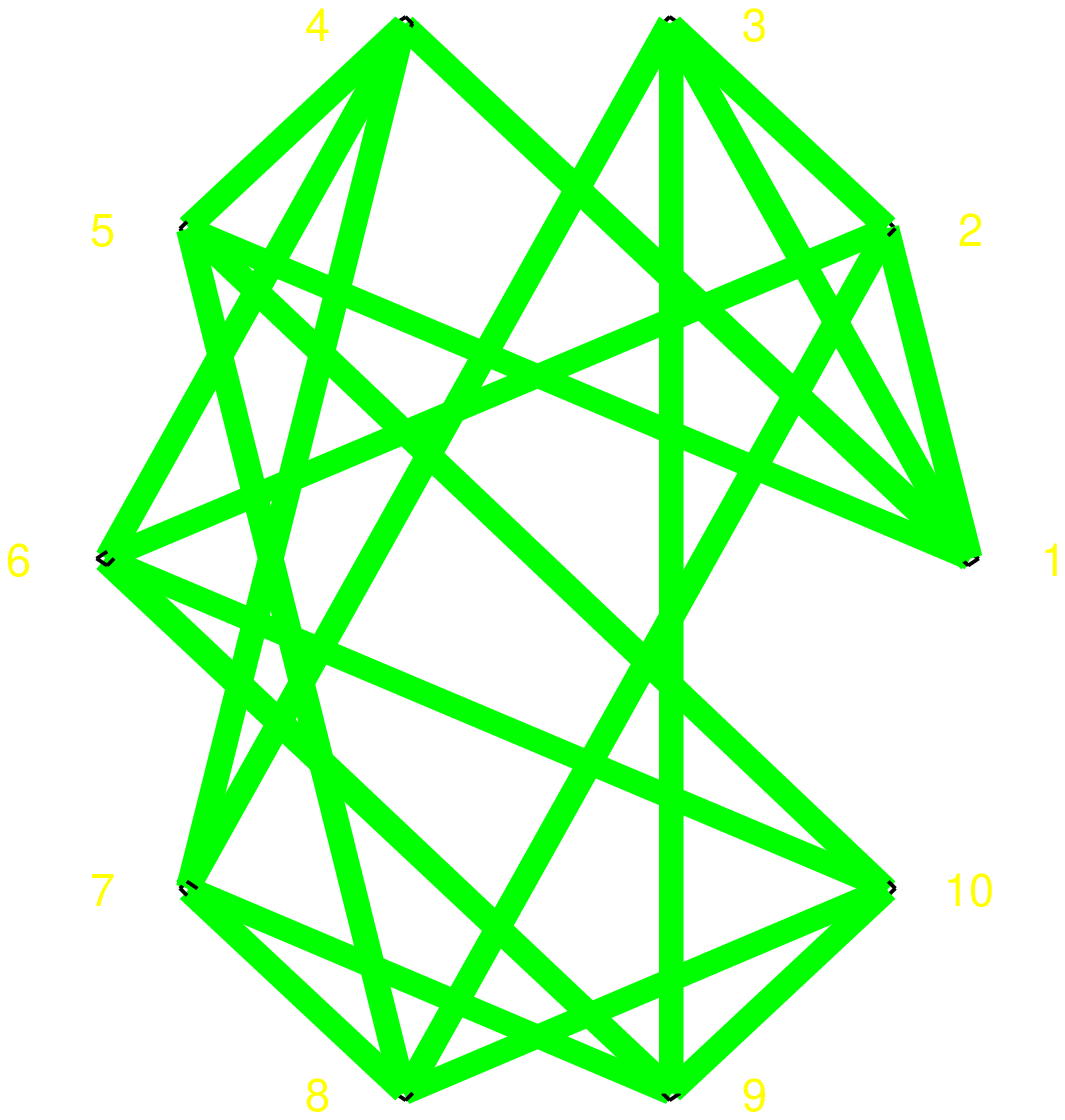}}&$\approx 505.5$&10&?&$P_{8,36}$&$z_3$&Hepp, \cite{EPHepp}\\[-6mm]
13&&\multicolumn{6}{l}{$P_{8,30}?$}\\[1ex]\hline
$P_{8,37}$&\hspace*{-2mm}\raisebox{-9mm}{\includegraphics[width=12mm]{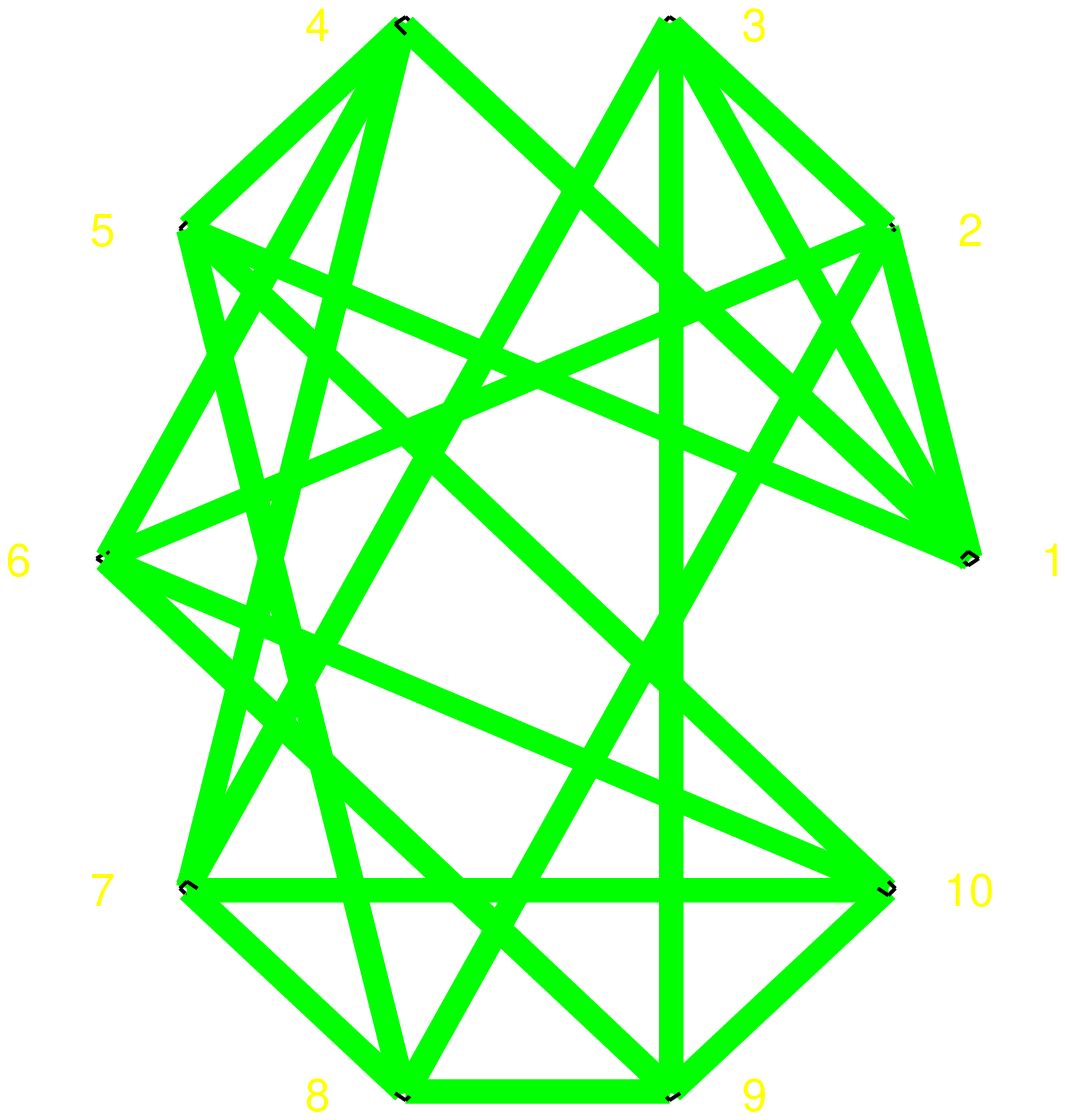}}&$\approx 422.9$&2&?&$P_{8,37}$&$(3,7)$&\cite{EPHepp}\\[-6mm]
?&&\multicolumn{6}{l}{?}\\[1ex]\hline
$P_{8,38}$&\hspace*{-2mm}\raisebox{-9mm}{\includegraphics[width=12mm]{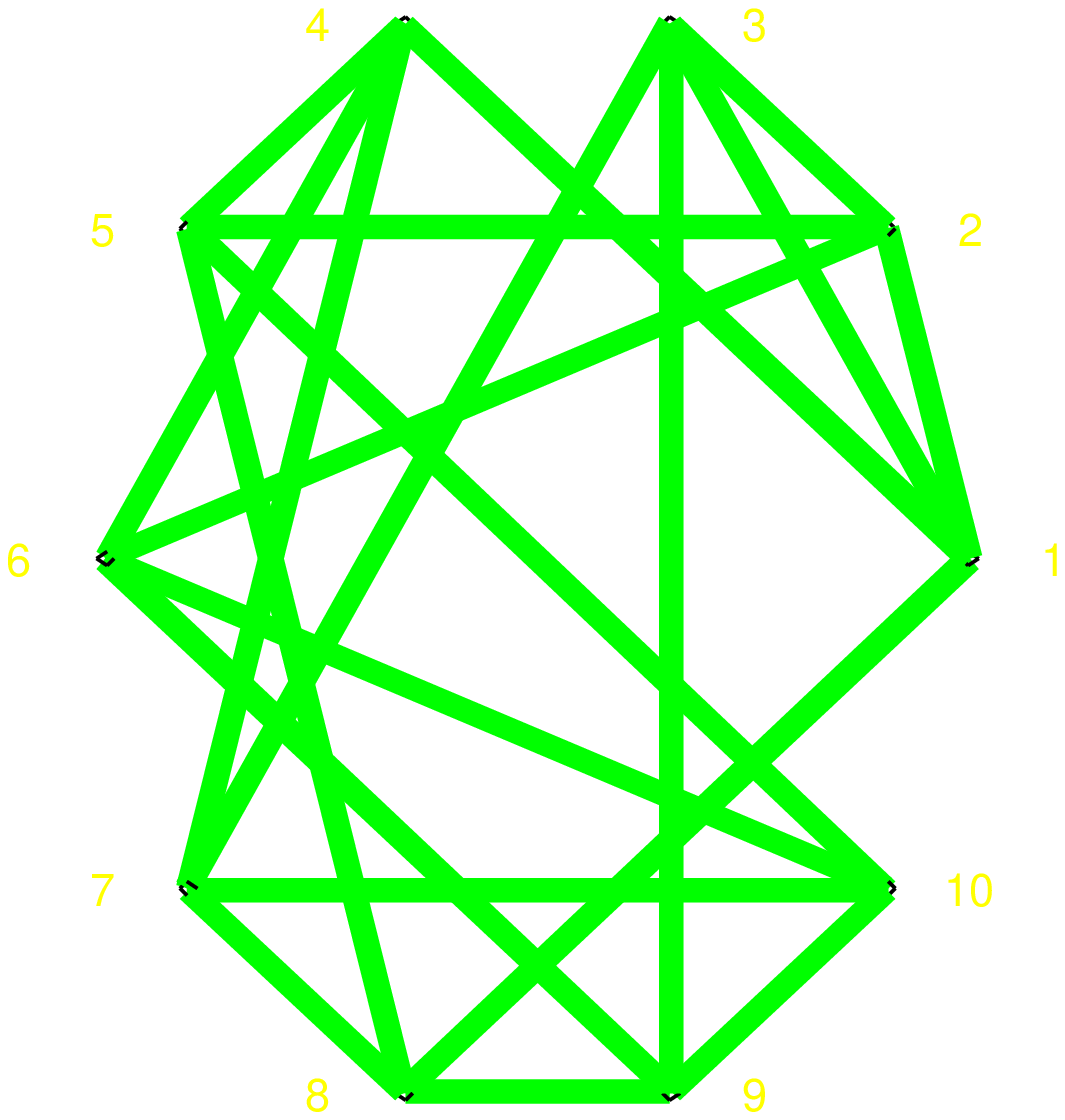}}&$\approx 386.6$&4&?&$P_{8,38}$&$(4,5)$&\cite{EPHepp}\\[-6mm]
?&&\multicolumn{6}{l}{?}\\[1ex]\hline
$P_{8,39}$&\hspace*{-2mm}\raisebox{-9mm}{\includegraphics[width=12mm]{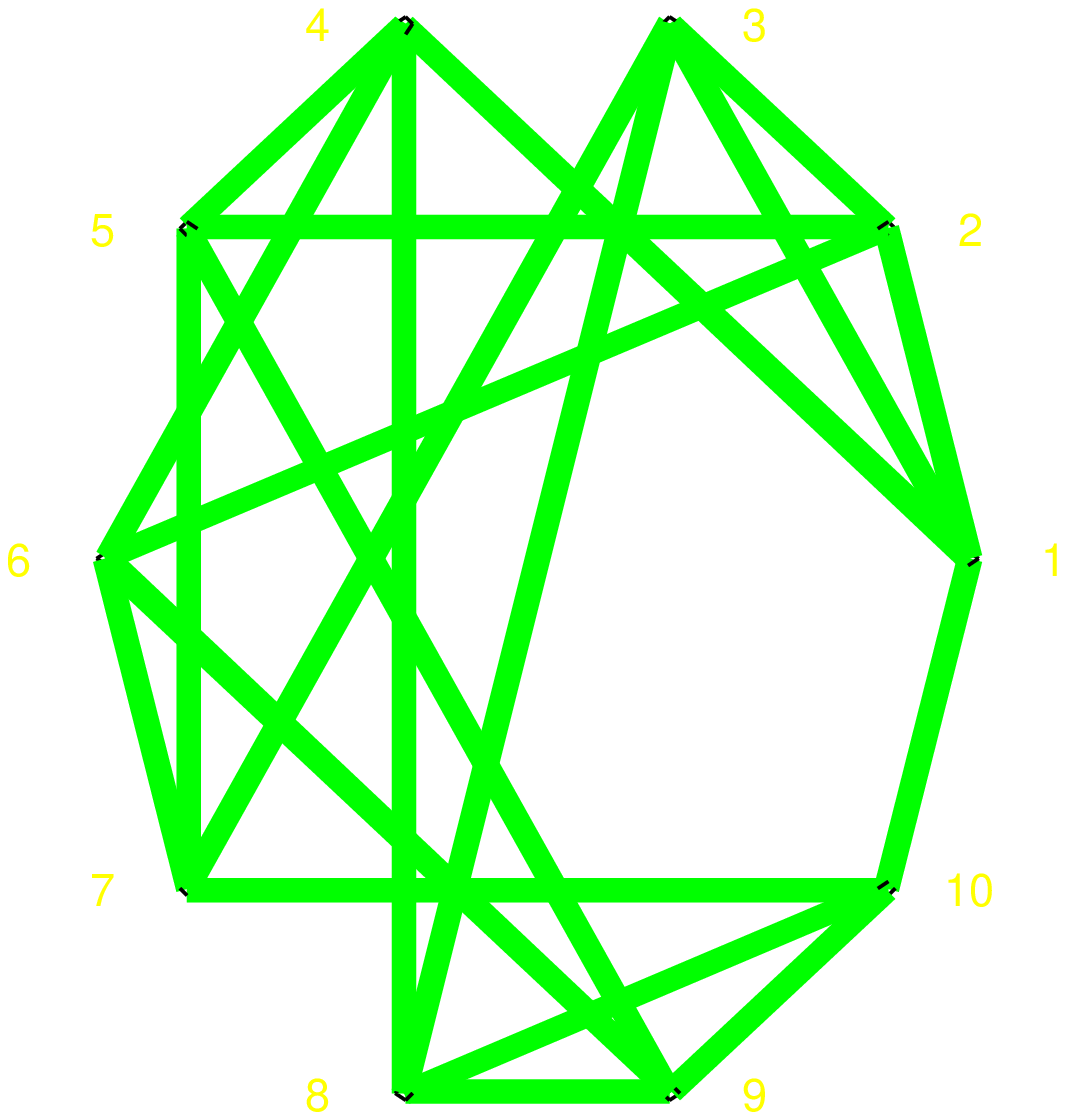}}&$\approx 384.2$&8&?&$P_{8,39}$&$(3,8)$&\cite{EPHepp}\\[-6mm]
?&&\multicolumn{6}{l}{?}\\[1ex]\hline
$P_{8,40}$&\hspace*{-2mm}\raisebox{-9mm}{\includegraphics[width=12mm]{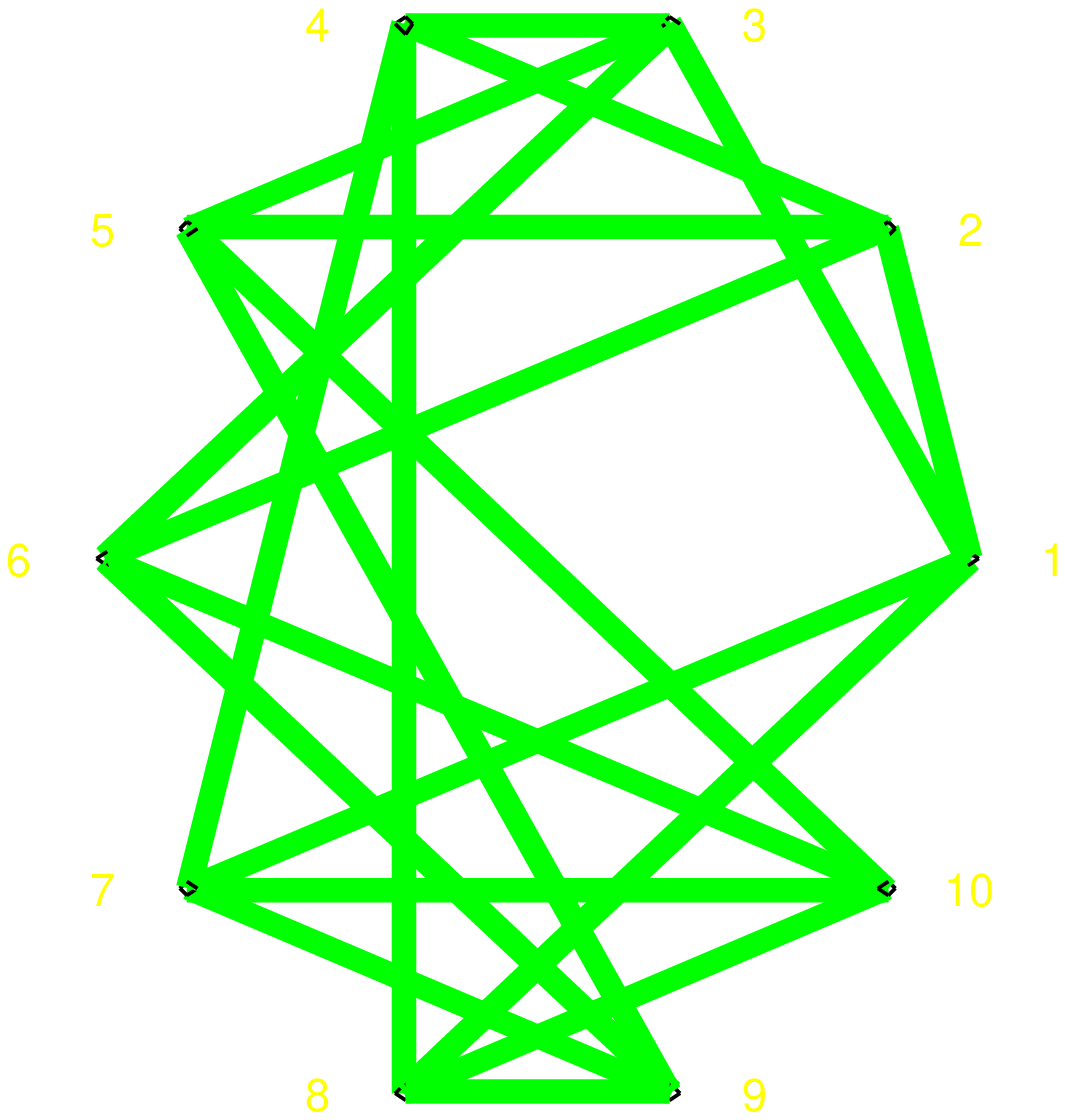}}&$\approx 312.1$&320&?&$P_{8,40}$&$z_4$&$C^{10}_{1,4}$, \cite{EPHepp}\\[-6mm]
13&&\multicolumn{6}{l}{?}\\[1ex]\hline
$P_{8,41}$&\hspace*{-2mm}\raisebox{-9mm}{\includegraphics[width=12mm]{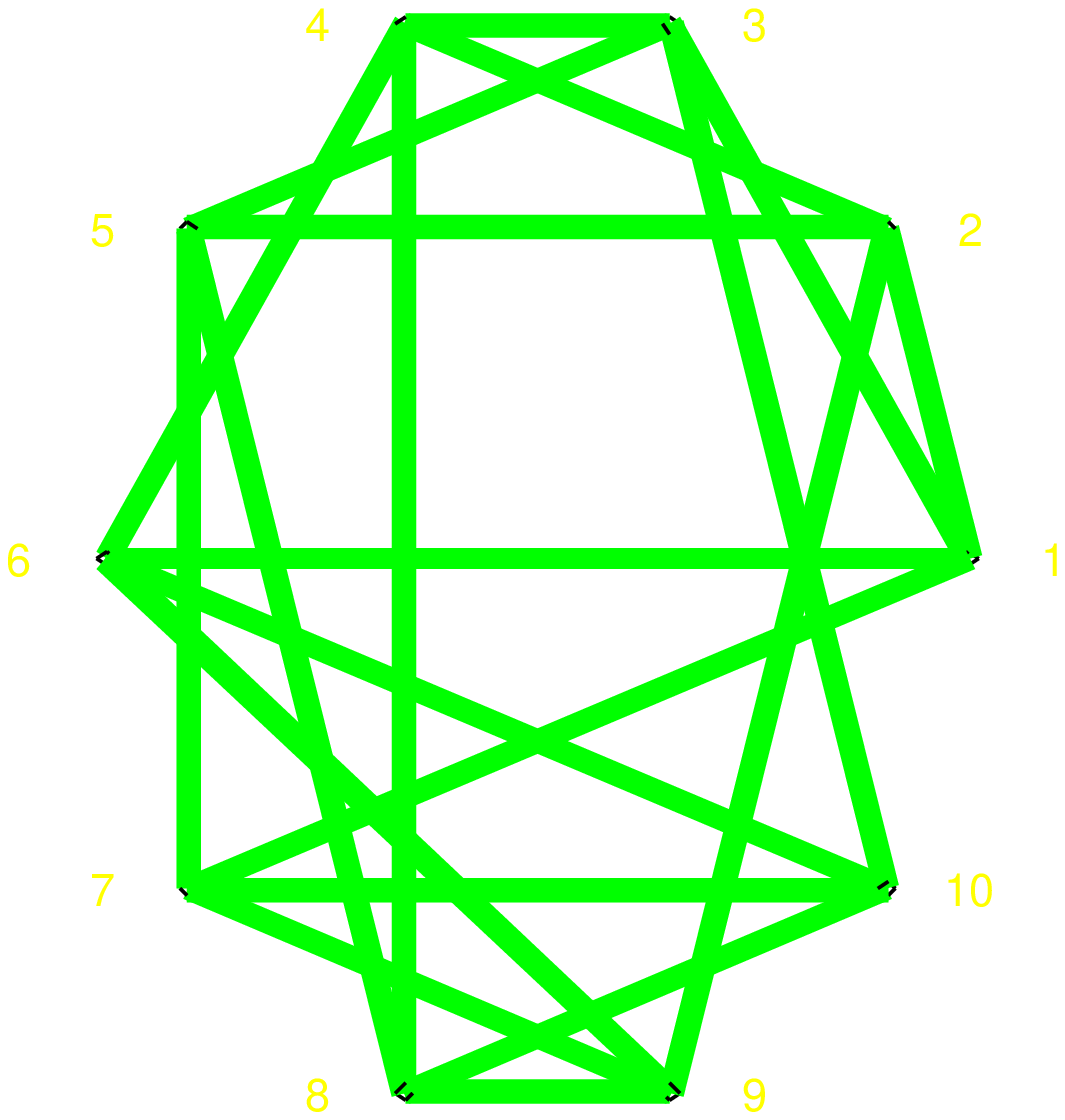}}&$\approx 323.3$&240&?&$P_{8,41}$&$(6,3)$&$C^{10}_{1,3}$, \cite{EPHepp}\\[-6mm]
?&&\multicolumn{6}{l}{?\hspace*{101mm}}
\end{tabular}
\vskip1ex

\noindent
Table 3: The census of $\phi^4$ periods. The numbers $Q_\bullet$ are listed in Table 1. All known periods except for $P_{7,11}$ \cite{Panzer:PhD} can be calculated with \cite{Hyperlogproc}.
See also \cite{Census} for explanations.
\newpage

\end{document}